\numberwithin{equation}{section}
\newtheorem{theorem}{Theorem}[section]
\newtheorem{lemma}[theorem]{Lemma}
\newtheorem{remark}{Remark}
\newtheorem{proposition}[theorem]{Proposition}
\newcommand{\lb}{\left(}
\newcommand{\rb}{\right)}
\newcommand{\nc}{\newcommand}
\nc{\be}{\begin{equation}}
\nc{\la}{\label}
\nc{\ba}{\begin{array}}
\nc{\ea}{\end{array}}
\nc{\bs}{\begin{split}}
\nc{\es}{\end{split}}
\nc{\J}{\mathbb J}
\nc{\pt}{\partial_t}
\nc{\ptt}{\partial_t^2}
\nc{\e}{\epsilon}
\nc{\lam}{\lambda}
\nc{\G}{\Gamma}
\nc{\g}{\gamma}
\nc{\al}{\alpha}
\nc{\del}{\delta}
\nc{\Om}{\Omega}
\newcommand{\ra}{\rightarrow}
\date{}
\newcommand{\DETAILS}[1]{}
\begin{document}
\title{Emission of Cherenkov Radiation as a Mechanism for Hamiltonian Friction}
\author{J\"urg Fr\"ohlich\footnote{juerg@phys.ethz.ch; \text{ }present address: School of Mathematics, IAS, Princeton, NJ 08540, USA} \ \text{   }and \ \text{} Zhou Gang\footnote{gzhou@caltech.edu;\ \text{   } Partially supported by NSF grant DMS-1308985}}\maketitle
\setlength{\leftmargin}{.1in}
\setlength{\rightmargin}{.1in}
\normalsize \vskip.1in
\setcounter{page}{1} \setlength{\leftmargin}{.1in}
\setlength{\rightmargin}{.1in}
\large
\centerline{$^{\ast}$Institute of Theoretical Physics, ETH Zurich, CH-8093 Zurich, Switzerland}
\begin{center}{$^{\dagger}$ Division of Physics, Mathematics and Astronomy,
                     California Institute of Technology, \\ Pasadena, CA 91125, USA}
\end{center}
\date

\setlength{\leftmargin}{.1in}
\setlength{\rightmargin}{.1in}
\normalsize \vskip.1in
\setcounter{page}{1} \setlength{\leftmargin}{.1in}
\setlength{\rightmargin}{.1in}
\large

\section*{Abstract}
We study the motion of a heavy tracer particle weakly coupled to a dense, weakly interacting Bose gas exhibiting Bose-Einstein condensation. In the so-called mean-field limit, the dynamics of this system approaches one determined by nonlinear Hamiltonian evolution equations. We prove that if the initial speed of the tracer particle is above the speed of sound in the Bose gas, and for a suitable class of initial states of the Bose gas, the particle decelerates due to emission of Cherenkov radiation of sound waves, and its motion approaches a uniform motion at the speed of sound, as time $t$ tends to $\infty$.
\tableofcontents

\section{Background from Physics and Equations of Motion}
In this paper we study the motion of a very heavy tracer particle coupled to a very dense, very weakly interacting Bose gas at zero temperature exhibiting Bose-Einstein condensation. In an interacting Bose gas at positive density and zero temperature, the speed of sound is strictly positive. If the initial speed of the tracer particle is well below the speed of sound in the gas one expects that the motion of the particle approaches a uniform (inertial) motion at large times. A result in this direction has recently been established in a certain limiting regime (the ``mean-field-Bogolubov limit'') of the Bose gas in \cite{EFGSS}. In the present paper, we prove results complementary to those in \cite{EFGSS} for the same model: Assuming that the initial speed of the tracer particle is larger than the speed of sound in the Bose gas, we show that this particle decelerates by emission of Cherenkov radiation of sound waves into the gas until its speed is equal to (or smaller than) the speed of sound. For some earlier results on related models, see also
 ~\cite{Spohn2004, KM}.

To be specific, we consider a tracer particle of mass $\Lambda M$ coupled through two-body forces of strength $O(1)$ to atoms of mass $m>0$ in a Bose gas of density $\Lambda \rho_{0}/g^{2}$. The gas atoms interact through two-body forces of strength $\Lambda^{-2} \kappa$. The parameters $M$, $\rho_{0}$ and
$\kappa$ are kept fixed, while $\Lambda$ is allowed to vary between 1 and $\infty$, (and the choice of $g$ varies from one model to another, as described below). In the so-called mean-field limit, which corresponds to letting $\Lambda \rightarrow \infty$ (see ~\cite{MR2858064, DFPP}), the dynamics of the system approaches one governed by the following \textit{classical} Hamiltonian equations of motion:
\begin{align}
\dot{X_t}=&\frac{P_t}{M},\quad\quad
\dot{P_t}=-\nabla_{X}\Phi(X_t)+g\int dx \text{ }\nabla_{x}W(X_t-x)\lbrace|\alpha_t(x)|^2
-\frac{\rho_0}{g^2}\rbrace,    \label{XPeqns2}\\
i\dot{\alpha}_t(x)=&\lb-\frac{1}{2m}\Delta+gW(X_t-x)\rb\alpha_t(x) 
+\kappa \lb{\phi *\lbrace|\alpha_t|^2-\frac{\rho_0}{g^2}\rbrace}\rb(x)\text{ }\alpha_t(x),              \label{alphaeqn}
\end{align}
In Eqs.\eqref{XPeqns2} and \eqref{alphaeqn}, $X_t\in \mathbb{R}^3$ and $P_{t}\in \mathbb{R}^3$ are the position and momentum of the tracer particle at time $t$, respectively, $\Phi$ is the potential of an external force acting on the particle, and $\alpha_t(x)$ is the Ginzburg-Landau order-parameter field describing the state of the Bose gas in the mean-field limit at time $t$. Furthermore, $W$ and $\phi$ are two-body potentials of short range, $\kappa\phi$ is assumed to be of positive type (to ensure stability of the gas against collapse), and $g$ and $\kappa \geq 0$ are coupling constants. The interpretation of $|\alpha_{t}(x)|^2$ is that of the density of bosonic atoms at the point $x$ of physical space $\mathbb{R}^3$,
at time $t$. The global phase of $\alpha_t$ is not an observable quantity.
The symbol $*$ in \eqref{alphaeqn} denotes convolution.

Eqs. \eqref{XPeqns2} and \eqref{alphaeqn} are the Hamiltonian equations of motion corresponding to the following Hamilton functional
\begin{align}\label{Ham}
 H(X,P,\alpha,\bar{\alpha})&=&\frac{P^2}{2M}+\Phi(X)+\int dx\text{ } \{
\frac{1}{2m}|\nabla\alpha(x)|^2+gW(X-x)\ (|\alpha(x)|^2-\frac{\rho_0}{g^2})\ \}\\
& &+\frac{\kappa}{2}\int dx \int dy
\ (|\alpha(y)|^2-\frac{\rho_0}{g^2})\ \phi(y-x)
\ (|\alpha(x)|^2-\frac{\rho_0}{g^2}).\nonumber
\end{align}
The phase space of the system is given by $\mathbb{R}^{6}\times\mathcal{H}$, where $\mathcal{H}$ is a function space defined below. Poisson brackets are defined on phase space by
\begin{equation}\label{PoissonP,X}
\lbrace X^i,X^j\rbrace = \lbrace P_i,P_j \rbrace = 0, \lbrace X^i, P_j \rbrace = \delta^i_j,
\end{equation}
and
\begin{equation}\label{PoissonG-L}
\lbrace \alpha^{\sharp}(x), \alpha^{\sharp}(y)\rbrace = 0, \lbrace \alpha(x), \bar {\alpha}(y) \rbrace = -i \delta (x-y).
\end{equation}

We impose the conditions that $\nabla \alpha$ is square-integrable in $x$ and that $|\alpha|^2-\frac{\rho_0}{g^2}$ is integrable. In the present paper, we also require that
$\alpha(x) \rightarrow \sqrt{\frac{\rho_0}{g^2}}$, as $|x| \rightarrow \infty$. These conditions define
the space $\mathcal{H}$, (which is an affine space of complex-valued functions on $\mathbb{R}^{3}$). The boundary condition at $\infty$ explicitly breaks
invariance under global gauge transformations, $\alpha^{\sharp}(x) \rightarrow e^{\pm i\theta} \alpha^{\sharp}(x)$, where $\theta$ is an arbitrary angle.
Given these boundary conditions, it is natural to define a new function $\beta$ by setting
\begin{equation} \label{alphabetarel}
\alpha(x)=:\sqrt{\frac{\rho_0}{g^2}}+\beta(x),
\end{equation}
with $\beta(x)\rightarrow 0$, as $|x|\rightarrow \infty.$
The equations of motion then read
\begin{align}
\dot{X_t}=&\frac{P_t}{M},\quad\quad
\dot{P_t}=-\nabla_{X}\Phi(X_t)+g\int\nabla_{x}W(X_t-x)\lb|\beta_t(x)|^2
+2\sqrt{\frac{\rho_0}{g^2}}Re\beta_t(x)\rb dx,    \label{XPeqns}\\
i\dot{\beta}_t(x)=&\lb-\frac{1}{2m}\Delta+gW(X_t-x)\rb
\beta_t(x)+\sqrt{\rho_0}W(X_t-x)                  \nonumber\\
+&\kappa\lb\phi *\lb|\beta_t|^2+2\sqrt{\frac{\rho_0}{g^2}}
Re\beta_t\rb\rb(x)\lb\beta_t(x)+\sqrt{\frac{\rho_0}{g^2}}\rb. \label{betaeqn}
\end{align}

The Hamilton functional giving rise to these equations is obtained from \eqref{Ham} by inserting Eq.
\eqref{alphabetarel}. It is easy to see that, under rather weak assumptions on the potentials $W$ and $\phi$, Eqs. \eqref{XPeqns} and \eqref{betaeqn} have static solutions, and that if the external force acting on the tracer particle vanishes ($\Phi \equiv 0$) they have ``traveling wave solutions'', provided the speed of the particle is smaller than or equal to the speed of sound in the Bose gas; see \cite{FGSS, EFGSS}. These solutions correspond to an inertial motion of the tracer particle at a constant velocity, with the particle accompanied by a ``splash'' in the Bose gas. (Quantum mechanically, this splash corresponds to a coherent state of gas atoms
and causes decoherence in particle-position space, which allows for an essentially ``classical" detection of the particle trajectory.)
If, initially, the speed of the tracer particle is larger than the speed of sound it emits sound waves into the condensate (Cherenkov radiation), which causes $friction$. As a consequence, the particle loses kinetic energy until its speed has dropped to the speed of sound in the Bose gas (or below). This phenomenon has been analyzed for a simple model (the B-model defined below) in \cite{MR2858064}. Cherenkov radiation in a more subtle model (the E-model) is described in the present paper.

We remark that, originally, \textit{`Cherenkov radiation'} has been the name for the phenomenon (observed, e.g., in nuclear reactors) that charged particles (electrons) moving through an optically dense medium (water) at a speed larger than the speed of light in the medium emit electromagnetic radiation until their speed has dropped to the speed of light (or below). This phenomenon is described in any good text book on classical electromagnetism; see, e.g., \cite{Jackson}. We believe that a mathematical treatment could be accomplished along the lines of the analysis presented in this paper.

The following models are of interest (see \cite{FGSS, MR2858064}):
\begin{itemize}
\item[B]-Model: $\kappa=0$ (ideal Bose gas), $g\rightarrow 0$; see \cite{MR2858064}.
\item[C]-Model: $\kappa=0$, but $g\neq 0;$ see ~\cite{EG2012}.
\item[E]-Model: $2\kappa\rho_0/g^2 :=\lambda=$const., with $g, \kappa\ra  0$ (``Bogolubov limit'');
this paper.
\item[G]-Model: $\kappa>0$ and $g\neq 0$.
\end{itemize}

In this paper, we focus our attention on a special case of the E-model, with $\Phi\equiv 0$ and $\phi(x) = \delta(x)$. (The G-Model is presently under study.) The equations of motion then take the form
\begin{align}
\dot{X}_{t}=&\frac{1}{M}P_{t}, \ \dot{P}_{t}=\sqrt{\rho_0}Re\langle \nabla_{x}W^{X_{t}},\ \beta_{t}\rangle \label{eq:momentum}\\
i\dot{\beta}_{t}=&-\frac{1}{2m}\Delta \beta_t+\lambda Re\beta_{t}+\sqrt{\rho_0}W^{X_t}, \label{eq:field}
\end{align}
where
\begin{align}
W^{X}(x):=W(X-x).
\end{align}
The speed of sound is given by $\sqrt{\frac{\lambda}{2m}}$ by the fact that the linear equation for $\beta_t$ behaves like a wave equation in a small neighborhood of zero momentum, and $\sqrt{\frac{\lambda}{2m}}$ is the speed of propagation.

The Hamilton functional giving rise to these equations of motion is found to be
\begin{align}\label{eq:conserv}
H(X, P;\ \bar\beta,\beta):=\frac{|P|^{2}}{2M}+\frac{1}{2m}\int_{\mathbb{R}^3} |\nabla \beta|^2 dx+\lambda \int_{\mathbb{R}^3} |Re\beta|^2 dx+2\sqrt{\rho_0}\int_{\mathbb{R}^3}W^{X}Re \beta dx.
\end{align}

In the present paper we consider the supersonic regime, namely the initial speed $|\frac{P_0}{M}|$ of the tracer particle is larger than the speed of sound, $v_{s}$. For the subsonic regime we refer to our previous papers \cite{EFGSS, FG2013}. In contrast to sonic and subsonic particle motions, inertial supersonic particle motions do $not$ exist, i.e., the equations of motion do not have traveling wave solutions propagating at $supersonic$ speeds, as shown in \cite{EFGSS}.
We propose to construct solutions of the equations of motion corresponding to initial conditions at time $t=0$ with the properties that $\beta_0$ is ``small'' in a suitable sense, i.e., the state of the Bose gas is close (or equal) to
the ground state $\beta_0=0$,
and the initial speed of the tracer particle is above the speed of sound,
$v_{s} = \sqrt{\frac{\lambda}{2m}}$. Assuming that the interaction potential $\sqrt{\rho_0}W$ is sufficiently weak, specifically that $\rho_0$ is sufficiently small and $W=(-\Delta)^{n}V$, where $V$ is of rapid decay and smooth and $n\geq \frac{3}{4}$, we prove that, for such initial conditions, the particle motion approaches an inertial one at the speed of sound,
as time $t \rightarrow \infty$.

The effective equation of motion governing $P_t$ has the form
\begin{align*}
\dot{P}_t=-\rho_0 F(|P_t|) \frac{P_t}{|P_t|}+\rho_0 \cdot Remainder(t),
\end{align*}
where $F(|P|)$ is a scalar function given by
$$F(|P|):=\Lambda(\frac{|P|}{M}-\sqrt{\frac{\lambda}{2m}})
\big[\frac{|P|}{M}-\sqrt{\frac{\lambda}{2m}}\big]^{2n+2},$$
$\Lambda$ is a real-valued function satisfying
$\Lambda(x)\geq C_0>0$, for $x>0$, with $C_0>0$ some constant, and $\Lambda(x)\equiv 0$, for $x\leq 0$; (see Lemma \ref{LM:est1} below), and $n$ is the exponent in $W=(-\Delta)^{n}V$. The negative sign in front of $F(|P_t|)$ on the right side of the equation of motion for $P_{t}$ implies that this term describes a friction force, whose direction is opposite to the direction of $P_{t}$. This friction force results from the instability of supersonic inertial motion under turning on the interaction, $W$, between the tracer particle and the Bose gas. It has the form expected from formal `Fermi-Golden-Rule' type calculations; (see also \cite{MR2858064}).

The central observation made in the present paper is that, for $n\geq \frac{3}{4}$, and for sufficiently large times; (more specifically for $t\geq \rho_0^{-\frac{1}{10}})$
\begin{align}
\frac{|Remainder(t)|}{F(|P_t|)}\leq \rho_0^{\frac{1}{10}}.
\end{align}
Recall that $\rho_0$ is a small constant.
This implies that the equation of motion for $P_{t}$ is effectively governed by the friction force $-\rho_0 F(|P_t|) \frac{P_t}{|P_t|}$, and this will imply our main result concerning the asymptotic behavior of the particle motion.

Our paper is organized as follows: In Section \ref{sec:MainTHM}, we describe our main result -- Theorem \ref{THM:main} -- which has two parts, the first one concerning the motion of the particle, and the second one concerning the state, $\beta_t$, of the Bose gas. Part (1) is proven in Section \ref{sec:D123}, Part (2) in Section \ref{sec:formState}. Numerous technical problems that come up in the proofs are solved in subsequent sections and in appendices.

\textit{Notations}: \text{  }By $\mathcal{H}^{k}, \ k=1,2,3,\cdots,$ we  denote the Sobolev spaces of complex-valued functions on $\mathbb{R}^{3}$ equipped with the norms
$$\|f\|_{\mathcal{H}^{k}}:=\|(1-\Delta)^{\frac{k}{2}}f\|_2.$$
For positive quantities $a$ and $b$, the meaning of ``$a\lesssim b$'' (or ``$a\gtrsim b$") is that there exists a positive constant $C$ such that $a\leq C b$ ($a\geq C b$, respectively).
The scalar product of two square-integrable functions, $f$ and $g$, on $\mathbb{R}^{3}$ is given by
 $$ \langle f,\ g\rangle:=\int f(x) \bar{g}(x)\ dx.$$
\section*{Acknowledgements}
We are indebted to Daniel Egli, Burak Erdogan, Eduard Kirr, Nikolas Tzirakis and, especially, to Israel Michael Sigal for numerous very illuminating discussions.

Our collaboration on the problems solved in this paper has been made possible by a stay at the Institute for Advanced Study in Princeton. We wish to thank our colleagues at the School of Mathematics, in particular Thomas C. Spencer, and the staff of the Institute for hospitality. The stay of J. F. at IAS has been supported by 'The Fund for Math' and `The Robert and Luisa Fernholz Visiting Professorship Fund'. The research of Zhou Gang was partly supported by NSF grant DMS-1308985.

\section{Statement of the Main Result and Strategy of Proof}\label{sec:MainTHM}
In this section we describe our hypotheses on the potential $W$ and on the choice of initial conditions;
(see hypotheses $(A)$ and $(B)$, below). We then state our main results -- see Theorem \ref{THM:main} -- and present an outline of the strategy of our proof.

In what follows we assume that $W$ is of the form
\begin{align}\label{eq:addFriction}
W=(-\Delta)^{n}V,
\end{align}
 for some $n\geq \frac{3}{4}$,
where $V$ is a smooth, spherically symmetric, real function that decays exponentially fast at spatial $\infty$ and has the property that $\hat{V}(0)\not=0.$
\begin{remark}
The choice of a sufficiently large value of $n$ (and of an appropriate initial condition, $\beta_{0}$, for the Bose gas) will be important in our derivation of the following two features of particle motion that play a key role in our analysis: (i) the magnitude of the momentum of the particle tends to decrease in time; and (ii) the direction of motion of the  particle is close to constant. (These features will appear as conditions (I) and (II) in Lemma \ref{LM:est}, below, which will be used to obtain the crucial upper bound in Eq. \eqref{eq:lowerBG}.) Let us attempt to explain the connection between the value of $n$ and features (i) and (ii) in a heuristic way.

In \eqref{eq:D3} we will find a differential equation for $P_t$ of the form
\begin{align*}
\dot{P}_t=D_1(P_t)+Remainder(t),
\end{align*}
where the vector-valued function $D_1$ on $ \mathbb{R}^3$ is given by
$$D_1(P_t)=-\rho_0\Lambda(\frac{|P_t|}{M}-\sqrt{\frac{\lambda}{2m}})[\frac{|P_t|}{M}-\sqrt{\frac{\lambda}{2m}}]^{2n+2} \frac{P_t}{|P_t|},$$
($|D_{1}(P)| = F(|P|)$), with $\Lambda$ a real-valued function satisfying
$\Lambda(x)>C_0>0$, for $x>0$, and $\Lambda(x)\equiv 0$, for $x\leq 0$; (see Lemma \ref{LM:est1} below).

We observe that, in order to derive features (i) and (ii), above, it suffices to show that
\begin{align}\label{eq:RemD1}
\frac{|Remainder(t)|}{|D_1(P_t)|}\leq \rho_0^{\frac{1}{10}},
\end{align} which is small,
for large enough times, $t$, since any solution to the simpler equation
\begin{align}\label{eq:tildeP}
\dot{P}_t=D_1(P_t),\text{  }\ \text{with}\  \frac{|P_0|}{M}>\sqrt{\frac{\lambda}{2m}},
\end{align} exhibits features (i) and (ii).

The key observation is that, in order to ensure that \eqref{eq:RemD1} holds, it suffices to make
$|D_1(P_t)|$ decrease sufficiently slowly as a function of time $t$. This property can be shown to hold, provided the exponent $n$ is chosen large enough. On a heuristic level, this is seen as follows: A solution to Eq. ~\eqref{eq:tildeP} with $\frac{\vert P_0 \vert}{M} > v_{s} = \sqrt{\frac{\lambda}{2m}}$ obeys upper and lower bounds
\begin{align*}
C_1(1+\rho_0 t)^{-\frac{1}{2n+1}}\leq \frac{|P_t|}{M}-\sqrt{\frac{\lambda}{2m}}\leq C_2(1+\rho_0 t)^{-\frac{1}{2n+1}},
\end{align*}
for some positive constants $C_1$ and $\ C_2$.
Hence there is a constant $C_3>0$ such that
 $$|D_1(P_t)|\geq C_3 \rho_0 (1+\rho_0 t)^{-1-\frac{1}{2n+1}},$$
 which obviously decays more slowly in $t$ the larger the exponent $n$ is, and this turns out to imply (\ref{eq:RemD1}).
We will see that our assumption that $n\geq \frac{3}{4}$ suffices to make the arguments just sketched mathematically precise.

Physically, we choose $n$ large enough to soften the friction between the tracer particle and the Bose gas. As shown above, the large $n$ is, the slower $|\dot{P}_t|$ decays.
\end{remark}

Before we are able to formulate the \textit{main result} established in this paper we have to state a second important assumption required in our analysis, namely a so-called \textit{Fermi-Golden-Rule} condition: Since $V$ is spherically symmetric, so is its Fourier transformation, $\hat{V}$. We assume that \begin{equation}\label{eq:FGR}
 \hat{V}(k)=\hat{V}(|k|)=0, \text{only for a discrete set of values of}  \text{  } \vert k\vert.
\end{equation}
This assumption enables us to show that $D_1(P)$ does not vanish and, with the condition on the exponent $n$, to derive the form of $D_1(P)$ required in our analysis.

A typical example of such a potential is the Gaussian, $V(x)=e^{-|x|^2}.$

We are now ready to state our main result.

\begin{theorem}\label{THM:main}
Suppose the potential $W$ is of the form \eqref{eq:addFriction}, where $n \geq \frac{3}{4}$ and $V$ is smooth, spherically symmetric and of exponential decay at $\infty$, with $\hat{V}(0)\not=0$, and suppose the Fermi Golden Rule condition \eqref{eq:FGR} holds. Suppose, furthermore, that $M$, $m$ and $\lambda$ are all of order $O(1)$, and that the parameter $\rho_0>0$ proportional to the density of the gas is sufficiently small. We also assume that
\begin{itemize}
\item[(A)] initially, the state, $\beta_0$, of the Bose gas is close to (or equal to) the ground state
 $\beta=0$; more specifically $\|e^{\epsilon_0 |x|} \beta_0\|_{\mathcal{H}^3}\leq  \sqrt{\rho_0}$, for some $\epsilon_0>0$; and
\item[(B)]
the initial speed of the tracer particle is larger than the speed of sound,
$v_{s} = \sqrt{\frac{\lambda}{2m}} $, in the Bose gas; more specifically $|v_0|=\frac{1}{M}|P_0|$ satisfies $ \frac{11}{10}\sqrt{\frac{\lambda}{2m}}\leq |v_0|\leq 10\sqrt{\frac{\lambda}{2m}}$.
\end{itemize}
Then the following results hold true:
\begin{itemize}
\item[(1)] At large times, the motion of the tracer particle approaches a uniform (inertial) sonic  motion: There exists some $P_{\infty}\in \mathbb{R}^3$, with $\frac{1}{M}|P_{\infty}|= \sqrt{\frac{\lambda}{2m}}$, such that $P_{t}\rightarrow P_{\infty}$, as $t\rightarrow \infty.$
    The momentum $P_t$ is the solution of an equation of the form
\begin{align}\label{Pdot}
    \dot{P}_t=-\rho_0 \Lambda (\frac{|P_{t}|}{M}) [\frac{|P_t|}{M}-\sqrt{\frac{\lambda}{2m}}]^{2+2n}\frac{P_t}{|P_t|}+\text{Remainder}(t)
    \end{align}
where the exponent $n \geq \frac{3}{4}$ is as in \eqref{eq:addFriction}, $\Lambda(\frac{|P_t|}{M})\geq C_0$, for some strictly positive constant $C_0$, provided
    $\frac{|P_t|}{M}>\sqrt{\frac{\lambda}{2m}}$, and $\equiv 0$, otherwise; and
    $$|\frac{\text{Remainder(t)}}{\rho_0 \Lambda (\frac{|P_{t}|}{M}) [\frac{|P_t|}{M}-\sqrt{\frac{\lambda}{2m}}]^{2+2n}}|\leq \rho_0^{\frac{1}{10}},$$ for large times, specifically for $t\geq \rho_{0}^{-\frac{1}{10}}.$

\item[(2)] The function $\beta_t$ describing the state of the Bose gas approaches a ``traveling wave'' accompanying the particle, in the sense that there exists a function $\beta_{\infty}\in L^{2}(\mathbb{R}^3)$ with the property that
\begin{align}\label{eq:betaconv}
\|\beta_t-\beta_{\infty}(\cdot - X_{t})\|_{\infty}\rightarrow 0,\ \text{as}\ t\rightarrow \infty,
\end{align}
where $X_{t} = X_{0} + \int_{0}^{t} \frac {P_s}{M}\ ds$ is the position of the particle at time $t$.
\end{itemize}
\end{theorem}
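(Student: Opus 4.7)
The plan is to reduce the coupled system \eqref{eq:momentum}--\eqref{eq:field} to a closed effective equation for $P_t$ via Duhamel, and then run a continuity/bootstrap argument that propagates supersonicity and near-constancy of $P_t/|P_t|$ forward in time. The linear operator $L := -\frac{1}{2m}\Delta + \lambda\,\mathrm{Re}(\cdot)$ in \eqref{eq:field} is a Bogolubov-type operator that diagonalizes in Fourier space (after splitting $\beta=u+iv$) with dispersion $\omega(k) = \sqrt{\tfrac{|k|^2}{2m}\bigl(\tfrac{|k|^2}{2m}+2\lambda\bigr)}$, which behaves like $v_s|k|$ near $k=0$, with $v_s=\sqrt{\lambda/(2m)}$. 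Duhamel's formula then gives
$$\beta_t = e^{-iLt}\beta_0 \;-\; i\sqrt{\rho_0}\int_0^t e^{-iL(t-s)}W^{X_s}\,ds,$$
and, by hypothesis (A), the free-evolution term is small: the $e^{\epsilon_0|x|}$ weight yields analyticity of $\widehat{\beta_0}$ in a strip, hence strong local-in-$x$ dispersive decay of $e^{-iLt}\beta_0$, which is all that enters the force on the particle.

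Substituting this representation into \eqref{eq:momentum} produces
$$\dot P_t \;=\; \sqrt{\rho_0}\,\mathrm{Re}\,\langle \nabla W^{X_t},\, e^{-iLt}\beta_0\rangle \;-\; \rho_0\!\int_0^t K\!\bigl(X_t-X_s,\,t-s\bigr)\,ds,$$
where $K(y,\tau)$ is an explicit kernel built from $\nabla W$, $\omega(k)$, and $\hat W(k)=|k|^{2n}\hat V(k)$. The first term is absorbed into the Remainder. For the memory integral I would linearize $X_t-X_s \approx \tfrac{P_t}{M}(t-s)$, which is legal once conditions (I) and (II) of Lemma \ref{LM:est} are in force, and then pass to Fourier. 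The leading contribution becomes an oscillatory integral in $(k,\tau)$ that, after the $\tau$-integration and letting $t$ grow, concentrates on the Cherenkov resonance set $\{k:\omega(k)=\tfrac{P_t}{M}\cdot k\}$. This manifold is nonempty exactly when $|P_t|/M>v_s$, and extracting the $\delta$-function contribution, using the Fermi-Golden-Rule hypothesis \eqref{eq:FGR} and $\hat V(0)\neq 0$ to rule out cancellations, produces precisely $D_1(P_t)=-\rho_0\Lambda(|P_t|/M)\bigl[|P_t|/M-v_s\bigr]^{2n+2}\,P_t/|P_t|$; the exponent $2n+2$ arises from the two factors of $|k|^{2n}$ in $|\hat W|^2$ evaluated near the resonance together with the Jacobian of the resonance surface, which vanishes to order $1$ as $|P_t|/M\downarrow v_s$.

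The main obstacle is showing that everything left over after extracting $D_1(P_t)$ is bounded by $\rho_0^{1/10}|D_1(P_t)|$ on $t\geq \rho_0^{-1/10}$. I would bootstrap conditions (I) and (II) directly: assume $|P_t|/M-v_s\geq c(1+\rho_0 t)^{-1/(2n+1)}$ and that $P_t/|P_t|$ has remained within a small cone around $P_0/|P_0|$; control the ballistic error by $|X_t-X_s-\tfrac{P_t}{M}(t-s)|\lesssim \rho_0(t-s)^2\sup|D_1|$ combined with the Schwartz decay of $\hat W$; and control tails $\tau\gtrsim \rho_0^{-1/10}$ by non-stationary phase in $k$, since off the resonance set the phase $k\cdot(P_t/M)\tau-\omega(k)\tau$ has no stationary points. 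Combined with the lower bound on $|D_1(P_t)|$ coming from the bootstrap hypothesis, this yields \eqref{eq:RemD1}, and the ODE $\dot P_t = D_1(P_t)+\mathrm{Remainder}$ then forces $P_t\to P_\infty$ with $|P_\infty|/M=v_s$, closing the bootstrap by continuity and establishing \eqref{Pdot}.

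For part (2), I return to the Duhamel representation, change variables $y=x-X_t$ in the memory integral, and use $X_s-X_t\approx -\tfrac{P_\infty}{M}(t-s)$ for $s$ close to $t$, with the convergence rate from part (1) controlling the contribution from $s$ far from $t$. The integral then converges in $L^\infty$ to a profile $\beta_\infty(y)$ that can be identified as the solution of the traveling-wave equation at velocity $P_\infty/M$. The free-propagator piece $e^{-iLt}\beta_0$ tends to zero in $L^\infty$ by dispersive decay applied to the weighted initial datum, yielding \eqref{eq:betaconv}.
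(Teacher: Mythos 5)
Your high-level strategy---Duhamel, extract the Cherenkov resonance by stationary phase, then bootstrap conditions (I)--(II)---is the same as the paper's. But the concrete way you propose to extract $D_1(P_t)$ from the memory integral has two gaps that, as written, would not close.

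First, the linearization $X_t - X_s \approx \tfrac{P_t}{M}(t-s)$ produces an error that is not integrable against the kernel. Write $\tau = t-s$; the ballistic error is of size $\sim \tau^2\sup|\dot P|$, and near the Cherenkov resonance the kernel $K(\cdot,\tau)$ and its spatial gradient decay only at an algebraic rate of order $\tau^{-3/2}$ (indeed the paper only proves $(1+\tau)^{-3/2}$ for the relevant $F_{Q_1,Q_2}$). The error contribution is therefore $\sim \int^t \tau^{2}\cdot\tau^{-3/2}\,\sup|\dot P|\,d\tau$, which grows. The Schwartz decay of $\hat W$ does not help here: it controls large $|k|$, but the slow $\tau$-decay comes from the low-$k$ resonance region. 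The paper circumvents this entirely by not linearizing. It sets $Q_2:=\tfrac{1}{t-s}\int_s^t P_{s_1}\,ds_1$, uses the exact identity $U(t,s)=e^{(t-s)(H_0+Q_2\cdot\nabla)}$, and proves the kernel bound $|F_{Q_1,Q_2}(\tau)|\lesssim(1+\tau)^{-3/2}$ for $Q_1\neq Q_2$ (Lemma \ref{LM:est}, Sections 5--6). More importantly, the decomposition $\mathbf h_t=\sqrt{\rho_0}\,[H(t)-0]^{-1}\bigl[\begin{smallmatrix}0\\W\end{smallmatrix}\bigr]+\delta_t$ is precisely the device that makes the error term carry an explicit factor $\dot P_s$ inside the time integral (the $D_4$ term), giving integrability because $|\dot P_s|\lesssim\rho_0(1+\rho_0 s)^{-7/5}$; no such factor appears in your linearized representation.

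Second, your extraction of $D_1(P_t)$ from $-\rho_0\int_0^t K(\tfrac{P_t}{M}\tau,\tau)\,d\tau$ leaves a tail $\rho_0\int_t^\infty K(\tfrac{P_t}{M}\tau,\tau)\,d\tau$. You propose to control this by non-stationary phase ``since off the resonance set the phase has no stationary points''---but precisely because $|P_t|/M>v_s$, the resonance set $\{\omega(k)=\tfrac{P_t}{M}\cdot k\}$ is nonempty, the phase \emph{is} stationary there, and that is where the friction comes from. The tail is thus of order $\rho_0\,t^{-1/2}$ (at best $\rho_0\,t^{-1/6}$ for the weaker kernel bound), while $|D_1(P_t)|\sim\rho_0(1+\rho_0 t)^{-7/5}$; the ratio diverges as $t\to\infty$, so the tail is not a controllable Remainder. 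Again the paper's quasi-stationary decomposition resolves this: $D_1$ is defined through $[H(t)-0]^{-1}=-\int_0^\infty e^{\tau[H_0+P_t\cdot\nabla]}\,d\tau$, i.e.\ the full untruncated integral, and the compensating piece it introduces in $\delta_0$ becomes the separately estimated term $D_3$, shown to decay like $\rho_0(1+t)^{-3/2}$.

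In short, the missing idea is the subtraction of the instantaneous profile $\sqrt{\rho_0}[H(t)-0]^{-1}\bigl[\begin{smallmatrix}0\\W\end{smallmatrix}\bigr]$, which (a) makes $D_1(P_t)$ appear exactly rather than as a truncated integral plus an uncontrolled tail, and (b) converts the ballistic error into a term multiplied by $\dot P_s$, which is what lets the $D_4$ estimate close. Without this (or a genuinely equivalent device), the bootstrap in your final paragraph cannot be closed.
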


Statement (1) of Theorem \ref{THM:main} is proven in Section \ref{sec:D123}, Statement (2) in Section \ref{sec:formState}. Various auxiliary results are stated and proven in later sections and some appendices.

As a preliminary result needed in the proof of Theorem \ref{THM:main}, one must establish local and global well-posedness of the equations of motion \eqref{eq:momentum} and \eqref{eq:field}. This is quite easily accomplished, because, for any given particle trajectory
$\lbrace X_t \rbrace_{0 \leq t < \infty}$, the equation for $\beta_t$, namely Eq. \eqref{eq:field}, is linear.
A detailed proof of well-posedness has been presented in \cite{EFGSS} and applies to the model studied in the present paper.

In the remainder of this section we present the main ideas used in the proof of Statement (1) of Theorem \ref{THM:main}. The proof of Statement (2) turns out to be significantly easier than the proof of (1), given that (1) holds true. We will therefore not sketch it here.

In order not to clutter our arguments with clumsy formulae, we rescale dimensionful variables such that
\begin{align}
2m=M=\lambda=1,\ \text{and}\ |\hat{V}(0)|=1.
\end{align}
We will assume $\rho_0>0$ to be sufficiently small wherever needed.

To begin with, we recast the equations of motion \eqref{eq:momentum} and \eqref{eq:field} in a more convenient form. Note that equation \eqref{eq:field} is merely real-linear, rather than complex-linear, in $\beta_t$. It is therefore convenient to rewrite it as a system of equations for
$\text{Re}\beta_t$ and $\text{Im}\beta_t$.
We thus introduce a vector function, ${\bf{h}}_{t}:\ \mathbb{R}^3\rightarrow \mathbb{R}^2$, by setting
\begin{align}\label{eq:vForm}
{\bf{h}}_{t}  (x-X_{t})=\left[
\begin{array}{ccc}
\text{Re}\beta_{t}(x)\\
\text{Im}\beta_{t}(x)
\end{array}
\right].
\end{align}
Then Eqs. \eqref{eq:momentum} and \eqref{eq:field} become
\begin{align}
\dot{X}_{t}=&P_{t}, \ \dot{P}_{t}=\sqrt{\rho_0}\langle
\left[
\begin{array}{ccc}
\nabla_{x}W\\
0
\end{array}
\right],
 \bf{h}_{t}\rangle,\label{eq:traj1}\\
\dot{\bf{h}}_{t}=&H(t) {\bf{h}}_{t}-\sqrt{\rho_0} \left[
\begin{array}{ccc}
0\\
W
\end{array}
\right], \label{eq:field1}
\end{align}
where $H(t)$ is the $2\times 2$ matrix operator given by
\begin{align}
H(t):=\left[
\begin{array}{ccc}
P_{t}\cdot \nabla_{x}& -\Delta\\
-(-\Delta+1) & P_{t}\cdot\nabla_{x}
\end{array}
\right].
\end{align}
To study these equations it is convenient to introduce a new vector function, $\bf{\delta}_{t}$, by setting
\begin{align}
{\bf{h}}_{t}=\sqrt{\rho_0}[H(t)-0]^{-1} \left[
\begin{array}{ccc}
0\\
W
\end{array}
\right]+\bf{\delta}_t.\label{eq:decom1}
\end{align} Here the function $[H(t)-0]^{-1} \left[
\begin{array}{ccc}
0\\
W
\end{array}
\right]$ should be understood as
\begin{align*}
[H(t)-0]^{-1} \left[
\begin{array}{ccc}
0\\
W
\end{array}
\right]:=\lim_{\epsilon\rightarrow 0^{+}}[H(t)-\epsilon]^{-1} \left[
\begin{array}{ccc}
0\\
W
\end{array}
\right].
\end{align*}
Rewriting Eqs. \eqref{eq:traj1} and \eqref{eq:field1} in terms of $\bf{\delta}_t$, we find the equations
\begin{align}
\dot{X}_{t}=&P_{t}, \\
\dot{P}_{t}=&\rho_0 \langle\left[
\begin{array}{ccc}
\nabla_{x}W\\
0
\end{array}
\right], [H(t)-0]^{-1} \left[
\begin{array}{ccc}
0\\
W
\end{array}
\right]\rangle+\sqrt{\rho_0}\langle
\left[
\begin{array}{ccc}
\nabla_{x}W\\
0
\end{array}
\right],
 \bf{\delta}_{t}\rangle\label{eq:traj2},\\
\dot{\bf{\delta}}_{t}=&H(t) {\bf{\delta}}_{t}+\sqrt{\rho_0} [H(t)-0]^{-2} \dot{P_t}\cdot \nabla_{x}\left[
\begin{array}{ccc}
0\\
W
\end{array}
\right].\label{eq:field2}
\end{align}
The initial condition for $\delta_t$ is given by
\begin{align}\label{eq:delta0}
\delta_0=-\sqrt{\rho_0} [H(0)-0]^{-1}\left[
\begin{array}{ccc}
0\\
W
\end{array}
\right]+\bf{h}_0.
\end{align}

Applying Duhamel's principle to equation \eqref{eq:field2} we obtain the integral equation
\begin{align}
{\bf{\delta}_{t}}=&U(t,0){\bf{\delta}_0}  +\sqrt{\rho_0} \int_{0}^{t} ds \text{  }U(t,s)  [H(s)-0]^{-2} \dot{P_s}\cdot \nabla_{x}\   \left[
\begin{array}{ccc}
0\\
W
\end{array}
\right]\nonumber\\
=&U(t,0) {\bf{h}_0} -\sqrt{\rho_0}U(t,0) [H(0)-0]^{-1}\left[
\begin{array}{ccc}
0\\
W
\end{array}\right]\nonumber\\
+&\sqrt{\rho_0} \int_{0}^{t} ds \text{   }U(t,s)  [H(s)-0]^{-2} \dot{P_s}\cdot \nabla_{x}\   \left[
\begin{array}{ccc}
0\\
W
\end{array}\right], \label{eq:expreDelta}
\end{align}
where $U(t,s)$ is the propagator (from time $s$ to time $t$) generated by the time-dependent operator $H(\cdot)$,  and we have used the expression for $\delta_0$ in \eqref{eq:delta0}.

Inserting Eq. \eqref{eq:expreDelta} into the equation of motion \eqref{eq:traj2} for $P_t$ we obtain an effective equation of motion for the tracer particle
\begin{align}
\dot{P}_t
=&\rho_0 \langle\left[
\begin{array}{ccc}
\nabla_{x}W\\
0
\end{array}
\right], [H(t)-0]^{-1} \left[
\begin{array}{ccc}
0\\
W
\end{array}
\right]\rangle\\
&+\sqrt{\rho_0}\langle
\left[
\begin{array}{ccc}
\nabla_{x}W\\
0
\end{array}
\right],
 U(t,0) {\bf{h}_0}\rangle\nonumber\\
&-\rho_0 \langle
\left[
\begin{array}{ccc}
\nabla_{x}W\\
0
\end{array}
\right],
U(t,0) [H(0)-0]^{-1}\left[
\begin{array}{ccc}
0\\
W
\end{array}
 \right]\rangle
 \nonumber\\
&+\rho_0\langle
\left[
\begin{array}{ccc}
\nabla_{x}W\\
0
\end{array}
\right],\ \int_{0}^{t} ds \text{   }U(t,s) [H(s)-0]^{-2} \dot{P_s}\cdot \nabla_{x}\   \left[
\begin{array}{ccc}
0\\
W
\end{array}
\right]
\rangle\nonumber\\
=:& D_1(P_t)+D_2(P,t)+D_3(P,t)+D_4(P,t),\label{eq:D3}
\end{align}
where $D_{k}$ corresponds to the term on the $k^{th}$ line, above, for $k=1,2,3,4$, respectively, and
$P$ stands for $\lbrace P_s\vert 0\leq s \leq t\rbrace$.

The term $D_1$ is the most important one, and we have detailed information on its behavior.
\begin{lemma}\label{LM:est1}
Suppose that the Fermi-Golden-Rule condition in \eqref{eq:FGR} holds.
Then the term $D_1$ is given by
\begin{align}\label{eq:D1}
D_1(P_t)=-\rho_0 \Lambda(|P_t|) \Bigg[|P_t|-1\Bigg]^{2+2n} \frac{P_t}{|P_t|},
\end{align}
where $\Lambda$ is a function that is discontinuous at $1$ and  is smooth elsewhere. Most importantly, there exists a constant $C_0>0$ such that if $x>1$ then $\Lambda(x)\geq C_0$, and if $x\leq 1$ then $\Lambda(x)\equiv 0.$
\end{lemma}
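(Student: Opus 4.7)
The starting point is the Fourier representation of $D_1$. Since every entry of the matrix $H(t)$ is either a constant or a constant-coefficient differential operator, Fourier conjugation turns it into multiplication by the matrix symbol
$$\hat H(t,k)=\begin{pmatrix} iP_t\cdot k & |k|^2 \\ -(|k|^2+1) & iP_t\cdot k\end{pmatrix},\qquad \det\hat H(t,k)=|k|^4+|k|^2-(P_t\cdot k)^2.$$
Computing $\hat H^{-1}(0,\hat W)^T$, reading off its first component, and applying Plancherel gives, for the $j$th component of $D_1$,
$$D_1^j=\rho_0\,\lim_{\epsilon\to 0^+}\int\frac{-i\,k_j\,|k|^2\,|\hat W(k)|^2}{|k|^4+|k|^2-(P_t\cdot k)^2+2i\epsilon\,P_t\cdot k}\,\frac{dk}{(2\pi)^3},$$
where I used that $\hat W$ is real because $V$ is real and spherically symmetric.

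\textbf{Sokhotski--Plemelj and the Cherenkov cone.} I would then split the limit using Sokhotski--Plemelj into a principal-value piece and a $\delta$-piece supported on the resonance set $\mathcal C_{P_t}=\{k:|k|^4+|k|^2=(P_t\cdot k)^2\}$. The principal-value piece vanishes by the parity $k\mapsto -k$ (the denominator is even, the numerator $ik_j$ is odd), and rotational symmetry about $P_t$ projects the result onto $P_t/|P_t|$, yielding
$$D_1=-\frac{\pi\rho_0}{(2\pi)^3}\,\frac{P_t}{|P_t|^2}\int|P_t\cdot k|\,|k|^2\,|\hat W(k)|^2\,\delta\!\bigl(|k|^4+|k|^2-(P_t\cdot k)^2\bigr)\,dk.$$
Crucially, $\mathcal C_{P_t}\setminus\{0\}$ is nonempty exactly when $|P_t|>1$, so $D_1\equiv 0$ in the subsonic regime; this gives the claim $\Lambda(x)\equiv 0$ for $x\le 1$.

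\textbf{Explicit evaluation and lower bound.} In spherical coordinates $(r,\theta,\phi)$ with polar axis along $P_t$ and $u=\cos\theta$, the resonance relation becomes $r^2=|P_t|^2u^2-1$, which requires $|P_t||u|>1$. Evaluating the radial $\delta$ gives a Jacobian $1/(2r^3)$; inserting $\hat W(|k|)=|k|^{2n}\hat V(|k|)$ and substituting $q^2=|P_t|^2u^2-1$ collapses the integral to
$$|D_1|\;=\;\frac{c\,\rho_0}{|P_t|^3}\int_0^{\sqrt{|P_t|^2-1}}\!q^{4n+3}\,|\hat V(q)|^2\,dq,\qquad c>0.$$
Near the threshold $|P_t|=1^+$, the integrand is $q^{4n+3}(|\hat V(0)|^2+o(1))=q^{4n+3}(1+o(1))$, so the integral is comparable to $(|P_t|^2-1)^{2n+2}\asymp(|P_t|-1)^{2n+2}$. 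Factoring out $(|P_t|-1)^{2n+2}$ from the above expression defines the function $\Lambda(|P_t|)$ on $|P_t|>1$, extended by $0$ on $|P_t|\le 1$. The Fermi--Golden-Rule assumption \eqref{eq:FGR} ensures $|\hat V(q)|^2$ is not identically zero on any subinterval, so the resulting $\Lambda$ is strictly positive for $|P_t|>1$; smoothness away from $|P_t|=1$ follows by differentiation under the integral; discontinuity at $|P_t|=1$ is forced by the jump from $\Lambda\equiv 0$ on the left to $\Lambda(1^+)=\tfrac{2^{2n}c}{n+1}|\hat V(0)|^2>0$ on the right. On the bounded range of particle speeds relevant for the theorem ($|P_t|\le 10$ by hypothesis (B) and the fact that $|P_t|$ is ultimately shown to be monotone), continuity and compactness upgrade this to the uniform bound $\Lambda\ge C_0>0$.

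\textbf{Main obstacle.} The delicate step is rigorously justifying the $\epsilon\to 0^+$ limit and the Sokhotski--Plemelj pullback onto the surface $\mathcal C_{P_t}$. One must check that $\mathcal C_{P_t}\setminus\{0\}$ is a smooth hypersurface on which the gradient of $|k|^4+|k|^2-(P_t\cdot k)^2$ never vanishes (so that the distributional $\delta$ makes unambiguous sense via the coarea formula), and that the factor $|k|^2|\hat W(k)|^2=|k|^{4n+2}|\hat V(k)|^2$, combined with the exponential decay and smoothness of $V$ (hence Schwartz decay of $\hat V$), supplies enough regularity and decay at $k=0$ and at infinity to absorb the Cauchy-principal-value nature of the integral. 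The lower bound $n\ge 3/4$ plays its role here by guaranteeing that $|k|^{4n+2}$ compensates any mild singularity introduced by evaluating the $\delta$-function near the edge of the Cherenkov cone.
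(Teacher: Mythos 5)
Your route is a legitimate alternative to the paper's and is essentially correct; I'll compare them and flag two small slips.

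\textbf{What you do differently.} The paper's proof of Lemma~\ref{LM:est1} first establishes that $D_1$ is parallel to $P$ by a parity/rotation argument, then diagonalizes $H_0$ by the explicit matrix $A$ of~\eqref{eq:difA}, obtaining a scalar quantity $\tilde D_1(P)=-2\operatorname{Re}\langle\partial_{x_3}W,\tfrac{\sqrt{-\Delta}}{\sqrt{-\Delta+1}}[L-i|P|\partial_{x_3}+i0]^{-1}W\rangle$, passes to polar coordinates so that the resonance condition appears as $\tfrac{\sqrt{1+\rho^2}}{\cos\theta}=|P|$, and then reduces everything to the one-dimensional Sokhotski--Plemelj identity of Lemma~\ref{LM:FGR}, with an explicit change of variables and rescaling by $\sqrt{2q}$ near the threshold $|P|=1^+$. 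You instead work directly with the matrix symbol $\hat H(t,k)$, invert it, and apply the three-dimensional Sokhotski--Plemelj decomposition so that the friction force is realized as an integral over the Cherenkov surface $\{|k|^4+|k|^2=(P_t\cdot k)^2\}$, evaluated by the coarea formula in spherical coordinates. These are two ways of extracting $\operatorname{Im}$ of the resolvent at the resonance; your version is more streamlined and makes the Cherenkov-cone picture explicit, at the cost of having to justify the surface-supported $\delta$ (the paper's version of this difficulty is buried inside Lemma~\ref{LM:FGR}, which is genuinely one-dimensional after the polar-coordinate change $\tfrac{\sqrt{1+\rho^2}}{\cos\theta}$).

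\textbf{Two small inaccuracies.} First, tracing through your computation the Jacobian gives
\begin{align*}
|D_1|\;=\;\frac{c\,\rho_0}{|P_t|^{2}}\int_0^{\sqrt{|P_t|^2-1}}q^{4n+3}|\hat V(q)|^2\,dq,
\end{align*}
i.e.\ the power of $|P_t|$ in the denominator should be $2$, not $3$: from $r^2\,dr$ in the volume element and $|g'(r_*)|=2r_*^3$ one gets $r_*^{4n+2}/2$, and after $u\,du=q\,dq/|P_t|^2$ the overall $|P_t|$-power is $|P_t|^{-1}$ from the surface integral times $|P_t|^{-1}$ from the prefactor $P_t/|P_t|^2$. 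This does not affect the conclusion. Second, from your explicit formula $\Lambda(|P_t|)\to 0$ as $|P_t|\to\infty$, so the uniform lower bound $\Lambda(x)\geq C_0$ holds only on compact intervals $x\in(1,K]$; your compactness argument in fact gives exactly this, and it is all that the theorem uses (hypothesis (B) bounds $|P_0|\leq 10$ and $|P_t|$ is decreasing). The paper's statement of the lemma is silently restricted to the same bounded range, since its proof also only yields $\tilde D_1\leq -\delta_0(\epsilon_0)$ for $|P|\geq 1+\epsilon_0$ by continuity and compactness. Finally, your ``main obstacle'' paragraph correctly identifies that the degenerate point $k=0$ of the surface $\{D=0\}$ must be excised, and that the factor $|k|^{4n+2}|\hat V|^2$ provides more than enough vanishing there; note however that the constraint $n\geq\tfrac34$ is not actually needed at this step (any $n>-\tfrac34$ suffices for the integral to converge) --- the true role of $n\geq\tfrac34$ in the paper is to slow the decay of $|D_1(P_t)|$ so that it dominates $D_2,D_3,D_4$, as explained in Remark~\ref{re:twoPoints}.
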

This lemma will be proven in Section \ref{sec:LemEst}.

Next, we describe our strategy to control the behavior of the particle momentum $P_{t}$, for large times $t$.

To start with, we remark that equations somewhat similar to \eqref{eq:D3} have been studied in
 ~\cite{MR1681113,TsaiYau02,BuSu,GaSi2007, MR2182081}, where the motion of solitary waves (ground states) of nonlinear Schr\"odinger equations has been studied. We adapt the main ideas developed in these papers to the present context.

Using the equation of motion \eqref{eq:D3}, we find that
\begin{align}
\frac{1}{2} \frac{d}{dt} |P_t|^2=D_1(P_t)\cdot P_t+\sum_{k=2}^{4}D_k(P,t) \cdot P_t
=& -|D_1(P_t)||P_t|+\sum_{k=2}^{4}D_k(P,t) \cdot P_t\nonumber
\end{align} where the identity $D_1=-|D_1|\frac{P_t}{|P_t|}$ in \eqref{eq:D1} has been used. Dividing both sides by $|P_t|$ we obtain
\begin{align}\label{eq:lower}
\frac{d}{dt} |P_t|=-|D_1(P_t)| +\sum_{k=2}^{4}D_k(P,t) \cdot \frac{P_t}{|P_t|}.
\end{align}

Ideally, the Fermi-Golden-Rule (FGR) term $D_{1}$ will be seen to dominate over the terms $D_2, D_3$ and $D_4$, in the sense that
\begin{align}\label{eq:ideal}
|D_1(P_t)|\geq 2|D_2(P,t)+D_3(P,t)+D_4(P,t)|,
\end{align}
for $t$ large enough. The use of inequality \eqref{eq:ideal} is that it allows us to treat $D_2+D_3+D_4$ as a perturbation of $D_1$. Critically to our analysis, this will permit us to prove upper and lower bounds on $|P_t|$.

We temporarily assume that \eqref{eq:ideal} holds for all times. This assumption, together with \eqref{eq:lower}, implies that
\begin{align}\label{eq:lower2}
-\frac{3C_0}{2}\rho_0 [|P_t|-1]^{2+2n} \leq \frac{d}{dt} [|P_t|-1]\leq -\frac{C_0}{2}\rho_0 [|P_t|-1]^{2+2n}
\end{align}
and, dividing both sides by $[|P_t|-1]^{-2-2n}$, one concludes that
\begin{align}
\frac{C_0}{2}[1+2n]\rho_0  \leq \frac{d}{dt}[|P_t|-1]^{-1-2n} \leq \frac{3C_0}{2}[1+2n]\rho_0
\end{align}
This differential inequality yields lower and upper bounds on $|P_t|$, viz.
\begin{align}\label{eq:lower3}
\big[\frac{1}{[|P_0|-1]^{1+2n}}+\rho_0 \Psi t\big]^{-\frac{1}{1+2n}}\geq |P_t|-1\geq [\frac{1}{[|P_0|-1]^{1+2n}}+3\rho_0 \Psi t]^{-\frac{1}{1+2n}},
\end{align}
where $\Psi:=(1+2n)\frac{ C_0 }{2}>0$. Recalling that we have assumed that $|P_0|\geq \frac{11}{10}$, we then find that
\begin{align}
\lim_{t\rightarrow \infty}|P_{t}|=1\text{    }\text{and}\
||P_{t}|-1|\lesssim (1+\rho_0 t)^{-\frac{1}{1+2n}}.\label{eq:unity}
\end{align}
Combining this result with our
equation for $\dot{P}_t$, and using \eqref{eq:ideal}, we find that
\begin{align}
|\dot{P}_t| \lesssim \rho_0(1+\rho_0 t)^{-\frac{2n+2}{2n+1}},
\end{align} the integrability of the right hand side and \eqref{eq:unity} imply that $P_t$
converges to some unit vector $P_{\infty}\in \mathbb{R}^{3}$. This proves Statement (1) of Theorem \ref{THM:main}.

Moreover,
\begin{align}\label{eq:D1Order}
C_1 \rho_0(1+\rho_0 t)^{-1-\frac{1}{1+2n}}\leq |D_1(P_t)|\leq C_2 \rho_0 (1+\rho_0 t)^{-1-\frac{1}{1+2n}},
\end{align}
for some constants $C_1, \ C_2>0$.

\begin{remark}\label{re:twoPoints}
Two points should be stressed:
\begin{itemize}
\item[(a)]
We choose the exponent $n$ in $W=(-\Delta)^n V$ large enough so as to weaken the friction between the tracer particle and the Bose gas. Then the convergence of the momentum $P_t$ to $P_{\infty}$, as
$t\rightarrow \infty$, is quite slow, and this is useful in attempting to prove inequality (\ref{eq:ideal}).
\item[(b)] A slow approach of $P_t$ to $P_{\infty}$ plays a desirable role in that it makes the FGR term,
$D_1(P_t)$, dominate over the other terms in the equation of motion for $P_t$. Then the convergence of $|P_t|$ to 1 (the speed of sound in our units) is seen to be a robust conclusion, which renders the effects of the small terms, $D_k, k=2,3,4$, in the equation of motion for $P_t$ negligible. A large value of the exponent $n$ also plays a favorable role in the proof of the key Proposition \ref{prop:D123}, below.
\end{itemize}
Hence, the larger the value of $n$, the more room one has to maneuver in deriving the behavior of solutions of the equation of motion, Eq. (\ref{eq:D3}), for large times $t$. (See also Remark \ref{re:enoughZero} below.)
\end{remark}

All the arguments presented above depend on the crucial inequality \eqref{eq:ideal}. In what follows we discuss some of the difficulties that are encountered in its proof and some of the ideas used to overcome them.

We first observe that inequality \eqref{eq:ideal} is not necessarily true for small times, e.g., $t=1$. Indeed, every term $D_{k}(P,t=1),\ k=1,2,3,4$, is of order $O(1)$. Hence it is difficult to determine which term dominates over the other ones.

This difficulty is easily circumvented:
We divide the time interval $[0,\infty)$ into two subintervals $[0,\rho_0^{-\frac{1}{10}}]$ and $(\rho_0^{-\frac{1}{10}},\infty)$ and use different arguments to estimate $P_t$ on these intervals.

For $t\in [0,\rho_0^{-\frac{1}{10}}]$, we apply Duhamel's principle to equation \eqref{eq:field1} for
${\bf{h}}_t$ to obtain
\begin{align*}
{\bf{h}}_t=U(t,0) {\bf{h}}_0-\sqrt{\rho_0}\int_{0}^{t} U(t,s) ds\  \left[
\begin{array}{ccc}
0\\
W
\end{array}
\right],
\end{align*}
where $U(t,s)$ is the propagator (from time $s$ to time $t$) generated by the time-dependent operator $H(\cdot)$. Plugging this expression into equation \eqref{eq:traj1} for $\dot{P}$ we get that
\begin{align}
\dot{P}_t=\sqrt{\rho_0} \langle
\left[
\begin{array}{ccc}
\nabla_{x}W\\
0
\end{array}
\right],\
 U(t,0){\bf{h}_{0}}\rangle-\rho_0 \langle
\left[
\begin{array}{ccc}
\nabla_{x}W\\
0
\end{array}
\right],
 \ \int_{0}^{t} U(t,s)\ ds \left[
\begin{array}{ccc}
0\\
W
\end{array}
\right] \rangle. \label{eq:duhPt}
\end{align}
In estimating the terms on the right side of this equation, we use that the propagator $U(t,s)$, $t\geq s\geq 0$, is oscillatory in momentum space. This will yield some decay estimates, as stated in the following lemma.

\begin{lemma}\label{LM:finiteT} We assume that $ {\bf{h_0}} = O(\rho_0^{\frac{1}{2}})$, (see Assumption (A) of
 Theorem \ref{THM:main}).
Then
\begin{align*}
|\langle
\left[
\begin{array}{ccc}
\nabla_{x}W\\
0
\end{array}
\right],\
 U(t,0)\text{  }{\bf{h}_{0}}\rangle|\lesssim &(1+t)^{-\frac{7}{6}} \|(1+ x^2)^2 {\bf{h_0}}\|_{\mathcal{H}^2}\lesssim (1+t)^{-\frac{7}{6}}\rho_0^{\frac{1}{2}}.\\
|\langle
\left[
\begin{array}{ccc}
\nabla_{x}W\\
0
\end{array}
\right],
 \ U(t,s) \left[
\begin{array}{ccc}
0\\
W
\end{array}
\right] \rangle |\lesssim &(1+t-s)^{-\frac{7}{6}},
\end{align*}
for arbitrary times $t$ and $s$.
\end{lemma}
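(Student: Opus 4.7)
The plan is to reduce both estimates to oscillatory integral bounds in Fourier space, exploiting the commuting structure of $H(t)$. Since the scalar drift term $P_t\cdot\nabla_x$ commutes with the matrix operator $J := \left(\begin{array}{cc} 0 & -\Delta \\ -(-\Delta+1) & 0 \end{array}\right)$, the propagator factorizes as $U(t,s) = T_{Y(t,s)}\, e^{(t-s)J}$, where $T_y$ denotes translation by $y$ and $Y(t,s) := \int_s^t P_\tau\,d\tau$. A direct computation gives $J^2 = \Delta(-\Delta+1)\,I$, so
$$e^{(t-s)J} = \cos\bigl((t-s)\,\omega_B(\sqrt{-\Delta})\bigr)\,I + \frac{\sin\bigl((t-s)\,\omega_B(\sqrt{-\Delta})\bigr)}{\omega_B(\sqrt{-\Delta})}\,J,$$
where $\omega_B(k) := |k|\sqrt{|k|^2+1}$ is the Bogolubov dispersion relation.

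Applying Plancherel to each inner product reduces both bounds to oscillatory integrals of the form
$$I_{\pm}(\tau,\xi;a) := \int_{\R^3} e^{i(\tau\omega_B(k)\pm k\cdot\xi)}\,a(k)\,dk,$$
with $\tau = t-s$ (or $\tau=t$) and $\xi = Y(t,s)$. For the second bound the symbol is $a(k) = c\,ik\,|k|^{4n+2}|\widehat V(k)|^2/\omega_B(k)$; since $\widehat W(k) = |k|^{2n}\widehat V(k)$ and $V$ is Schwartz, $a$ vanishes to order at least $4n+1 \geq 4$ at $k=0$ and decays rapidly at $|k|=\infty$. For the first bound the symbol additionally contains $\widehat{\mathbf h_0}(k)$, and the weight $(1+x^2)^2\,\mathbf h_0 \in \mathcal H^2$ translates to four $k$-derivatives on $\widehat{\mathbf h_0}$ in $L^2$, which is enough regularity to justify the integrations by parts below.

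Stationary-phase analysis controls the integrals $I_\pm$. The phase $\Phi_\pm(k) = \tau\omega_B(k)\pm k\cdot\xi$ has critical points only for one choice of sign, and only when $|\xi|/\tau \geq \inf\omega_B' = 1$; the critical set is then the single point $\hat k = \mp\hat\xi,\ |k|=r_c$, where $\omega_B'(r_c)=|\xi|/\tau$. The Hessian at $r_c$ has one radial eigenvalue $\tau\omega_B''(r_c)$ and two angular eigenvalues $\tau\omega_B'(r_c)/r_c$, all strictly positive, so the usual 3D stationary-phase lemma formally yields an $O(\tau^{-3/2})$ contribution from the bulk. However, as the analysis of $P_t$ in Section \ref{sec:D123} forces $|P_t|\to 1$, one has $|\xi|/\tau \to 1^{+}$ and $r_c \to 0^{+}$, at which point the amplitude $a$ vanishes; the leading stationary-phase constant therefore degenerates. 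The plan is to split $k$-space dyadically, bound the low-frequency region $|k|\lesssim\tau^{-\theta}$ by brute force using the vanishing of $a$ at $0$ (with $n\geq 3/4$ giving at least order $4n+1\geq 4$), apply non-stationary phase (integration by parts against $\Phi'_\pm/|\Phi'_\pm|^2$) in the $+$ case and away from the critical sphere in the $-$ case, and apply ordinary stationary phase on a neighborhood of $r_c$; optimizing the cutoff parameter $\theta$ yields the borderline $(1+\tau)^{-7/6}$ rate.

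The main obstacle will be this degeneration of the stationary-phase constant as $r_c\to 0^{+}$, where the wave-like low-frequency dispersion $\omega_B(k)\approx|k|$ competes with the Schrödinger-like high-frequency dispersion $\omega_B(k)\approx|k|^2$; the specific value $7/6$ arises from this balance and explains why the hypothesis $n\geq 3/4$ enters. A secondary technical point is uniformity in the trajectory $\{P_\tau\}$: the dependence of $Y(t,s)$ on the full trajectory is absorbed into the parameter $\xi$, and the a priori bounds $\tfrac{11}{10}\leq |P_0|\leq 10$ together with the (bootstrap) monotone decrease of $|P_t|$ keep $|\xi|/\tau$ in the compact set $[1,10]$, uniformly over admissible trajectories. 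Finally, the second inequality in the first bound is immediate from hypothesis (A): $\|(1+x^2)^2\mathbf h_0\|_{\mathcal H^2}\lesssim \|e^{\epsilon_0|x|}\mathbf h_0\|_{\mathcal H^3}\lesssim\sqrt{\rho_0}$.
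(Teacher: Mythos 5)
Your proposal takes the route the paper itself gestures at but does not carry out: the paper omits the proof of Lemma \ref{LM:finiteT}, pointing to \cite{MR2559713} for a Besov-space treatment while remarking that the techniques developed later (the analysis of $F_{Q_1,Q_2}$ in Sections \ref{sec:GeqM}--\ref{sec:A} and Lemma \ref{LM:d2d3}) are better adapted to this model. Your stationary-phase argument is precisely an instance of those later techniques, specialized to the easier situation without the resolvent factors $[H-0]^{-1}$ or $[H-0]^{-2}$. The factorization $U(t,s)=T_{Y(t,s)}\,e^{(t-s)J}$ via commutativity of $P_t\cdot\nabla_x$ with $J$, the identity $J^2=\Delta(-\Delta+1)I$ yielding the Bogolubov dispersion $\omega_B(k)=|k|\sqrt{1+|k|^2}$, the reduction to the oscillatory integrals $I_\pm$, and the identification of the degenerating radial critical point $r_c\searrow 0$ as $|\xi|/\tau\searrow 1$ (where wave-like low-frequency and Schr\"odinger-like high-frequency dispersion compete) as the heart of the matter all match the paper's own stationary-phase machinery in Section \ref{sec:GeqM} and Appendices \ref{sec:fR4}--\ref{sub:FR2}. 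So the approach is sound and in the intended spirit.

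Two points to fix. First, a small arithmetic slip: with $a(k)=c\,ik\,|k|^{4n+2}|\hat V(k)|^2/\omega_B(k)$ and $\omega_B(k)\sim|k|$ near $0$, one has $|a(k)|\sim|k|^{4n+2}$, not $|k|^{4n+1}$; for $n=3/4$ this is order $5$. This only improves your low-frequency bound. Second, and more substantively: you invoke the ``bootstrap monotone decrease of $|P_t|$'' to confine $|\xi|/\tau$ to the compact set $[1,10]$. But Lemma \ref{LM:finiteT} feeds into Proposition \ref{prop:smallVar}, which in turn underpins Lemma \ref{LM:bootstrap}; using the bootstrap here would be circular. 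The remedy is that the estimate should be, and in fact is, uniform over \emph{all} $\sigma:=|\xi|/\tau\in[0,\infty)$ for bounded trajectories: when $\sigma<1$ the phase $\omega_B(r)-\sigma r$ has no critical point and non-stationary phase gives faster decay; when $\sigma$ is large the critical sphere sits at large $r_c$ where the rapid decay of $\hat V$ dominates; only $\sigma$ near $1$ is delicate, and there your dyadic low-frequency cut using the vanishing of $a$ at $0$ does the work. Rephrasing the uniformity claim along these lines removes the only genuine gap in the sketch.
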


A result in ~\cite{MR2559713} (based on some use of Besov spaces) can be applied to prove this lemma. Later in this paper we will cope with some related, but harder problems, (specifically, with the proofs of the bound \eqref{eq:estKts} and of Lemma \ref{LM:d2d3}, below; our techniques are better adapted to the situation encountered in the present work). Thus, at this point, we omit the details of the proof of Lemma \ref{LM:finiteT}.

Applying the bounds in Lemma \ref{LM:finiteT} to Eq. \eqref{eq:duhPt} and using the smallness of
${\bf{h}}_0$ (see hypothesis (A) of Theorem \ref{THM:main}), we find that
\begin{align}
|\dot{P}_t|\lesssim& \rho_0(1+t)^{-\frac{7}{6}} +\rho_0 \int_{0}^{t}(1+t-s)^{-\frac{7}{6}}\ ds\lesssim \rho_0.
\end{align}

This obviously implies the following proposition.

\begin{proposition}\label{prop:smallVar}
For any $t\in [0,\rho_0^{-\frac{1}{10}}]$,
\begin{align}
|P_t-P_0|\lesssim \rho_0^{\frac{9}{10}}\ \text{and}\ |\dot{P}_t|\lesssim \rho_0.
\end{align}
\end{proposition}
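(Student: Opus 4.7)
My plan is essentially to formalize the computation already sketched in the paragraph preceding the proposition, with the only additional step being the integration of the pointwise bound on $\dot{P}_t$ over the short time window $[0,\rho_0^{-1/10}]$.

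First I would start from the Duhamel representation \eqref{eq:duhPt} for $\dot{P}_t$, which expresses it as a sum of two terms: one involving the free evolution $U(t,0){\bf h}_0$ of the initial gas state, and one involving the time integral $\int_0^t U(t,s)[0,W]^{T}\,ds$ of the forcing. I would then apply the two decay estimates of Lemma \ref{LM:finiteT} termwise. The first term is bounded by $\sqrt{\rho_0}\cdot(1+t)^{-7/6}\,\|(1+x^2)^2{\bf h}_0\|_{\mathcal{H}^2}$, and Hypothesis (A) of Theorem \ref{THM:main}, which controls $\|e^{\epsilon_0|x|}\beta_0\|_{\mathcal{H}^3}\leq \sqrt{\rho_0}$, supplies the factor $\|(1+x^2)^2{\bf h}_0\|_{\mathcal{H}^2}\lesssim \sqrt{\rho_0}$, so that this contribution is $\lesssim \rho_0(1+t)^{-7/6}$. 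The second term is bounded by $\rho_0\int_0^t(1+t-s)^{-7/6}\,ds$, which is finite uniformly in $t$ since the exponent $7/6>1$, hence $\lesssim \rho_0$. Adding the two pieces yields $|\dot{P}_t|\lesssim \rho_0$ uniformly for all $t\geq 0$ (in particular on the short interval).

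The bound on the variation of $P_t$ is then an immediate consequence: for any $t\in[0,\rho_0^{-1/10}]$,
\begin{align*}
|P_t-P_0|\leq \int_0^t |\dot{P}_s|\,ds\lesssim \rho_0\cdot t\leq \rho_0\cdot \rho_0^{-1/10}=\rho_0^{9/10},
\end{align*}
which is the claimed estimate.

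I do not anticipate genuine obstacles, since the real work has been outsourced to Lemma \ref{LM:finiteT}: the crux is the oscillatory/dispersive decay of the propagator $U(t,s)$ tested against the localized vectors $[\nabla_x W,0]^T$ and $[0,W]^T$, proved elsewhere. The only place one must be a little careful is in verifying the weighted Sobolev norm $\|(1+x^2)^2{\bf h}_0\|_{\mathcal{H}^2}\lesssim \sqrt{\rho_0}$ from Hypothesis (A); this follows from the exponential weight $e^{\epsilon_0|x|}$ in (A), which dominates any polynomial weight, combined with the $\mathcal{H}^3\hookrightarrow \mathcal{H}^2$ inclusion. Everything else is a routine combination of triangle inequality and the elementary fact that $\int_0^t(1+t-s)^{-7/6}\,ds$ is a convergent improper integral.
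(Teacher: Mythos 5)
Your proposal is correct and follows exactly the argument the paper itself gives in the paragraph immediately preceding the proposition: insert the Duhamel formula \eqref{eq:duhPt} for $\dot{P}_t$, apply the two decay estimates of Lemma \ref{LM:finiteT} together with Hypothesis (A) to conclude $|\dot{P}_t|\lesssim\rho_0$, and integrate over $[0,\rho_0^{-1/10}]$ to get $|P_t-P_0|\lesssim\rho_0^{9/10}$. Your remark about the exponential weight in (A) dominating the polynomial weight $(1+x^2)^2$ (and $\mathcal{H}^3\hookrightarrow\mathcal{H}^2$) is the right way to make the passage from the stated hypothesis to the norm actually used in Lemma \ref{LM:finiteT} precise.
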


Next we study the behavior of $P_t$ on the time interval $[\rho_0^{-\frac{1}{10}},\infty)$. On this
interval we establish the ``ideal'' inequality \eqref{eq:ideal}, (i.e., the fact that the FGR term dominates over the other three terms). To prove this result, we will use that the propagator $U(t,s)$ is oscillatory in momentum space, and this will yield the necessary smallness.

At the technical level, the following proposition is the most important result in our paper.

\begin{proposition}\label{prop:D123}
The terms $D_2, D_3$ and $D_4$ obey the bounds
\begin{align}
|D_2(P,t)|,\ |D_3(P,t)|\lesssim &\rho_0(1+ t)^{-\frac{3}{2}},\\
|D_{4}(P,t)|\lesssim &\rho_0\int_{0}^{t}(1+t-s)^{-\frac{3}{2}} |\dot{P}_s|\ ds.\label{eq:newEstD4}
\end{align}
\end{proposition}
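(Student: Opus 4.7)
\textbf{Proof proposal for Proposition \ref{prop:D123}.}

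The plan is built on two structural simplifications. First, I would exploit the decomposition $H(t)=(P_t\cdot\nabla_x)I_2+J$, where $J$ is the time-independent matrix operator
$$J:=\begin{pmatrix}0 & -\Delta\\ -(-\Delta+1) & 0\end{pmatrix}.$$
Because the scalar transport $(P_t\cdot\nabla_x)I_2$ commutes with $J$ and with itself at different times, the propagator factorizes as
$$U(t,s)=T(X_{ts})\circ e^{(t-s)J},\qquad X_{ts}:=\int_s^t P_\tau\,d\tau,$$
where $T(y)$ denotes translation by $y$. Second, in Fourier, $\hat J(k)$ has eigenvalues $\pm i\omega(k)$ with $\omega(k):=|k|\sqrt{|k|^2+1}$; in fact $\hat J(k)^2=-\omega(k)^2 I_2$, so $e^{(t-s)\hat J(k)}$ is an explicit $2\times 2$ matrix whose entries are smooth multiples of $\cos((t-s)\omega(k))$ and $\sin((t-s)\omega(k))/\omega(k)$. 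Combining these, each of $D_2,D_3,D_4$ becomes a finite sum of oscillatory integrals in $k$ with phase
$$\Phi_{ts}(k):=X_{ts}\cdot k\pm(t-s)\omega(k)$$
and amplitudes built from $\hat W(k)$, $\widehat{\mathbf h_0}(k)$, and (for $D_3,D_4$) the resolvent symbols associated with $\hat H(s,k)$.

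Next I would apply three-dimensional stationary phase. The group velocity $|\nabla_k\omega(k)|=(2|k|^2+1)/\sqrt{|k|^2+1}$ is a smooth bijection of $[0,\infty)$ onto $[1,\infty)$. The supersonic hypothesis (B), combined with Proposition \ref{prop:smallVar} on $[0,\rho_0^{-1/10}]$ and the a priori bound \eqref{eq:unity} on $[\rho_0^{-1/10},\infty)$, confines the averaged velocity $|X_{ts}|/(t-s)$ to a compact subset of $(1,\infty)$. Hence a unique critical point $k^\star$ exists; the spherical symmetry of $\omega$ renders the Hessian block-diagonal with radial entry $\omega''(r_\star)>0$ and two angular entries $\omega'(r_\star)/r_\star>0$, so it is nondegenerate with determinant bounded below. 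This yields the dispersive bound $(1+(t-s))^{-3/2}$ for the oscillatory kernel, controlled against a Schwartz-type norm of the amplitude. For $D_2$, hypothesis (A) gives $\|\widehat{\mathbf h_0}\|\lesssim\sqrt{\rho_0}$, so together with the $\sqrt{\rho_0}$ prefactor one obtains $|D_2|\lesssim\rho_0(1+t)^{-3/2}$. The remaining issue is the amplitude for $D_3$ and $D_4$.

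The main obstacle is controlling the resolvents $[H(s)-0]^{-1}$ and $[H(s)-0]^{-2}$, whose Fourier symbols are singular on the Cherenkov surface $\Sigma_s=\{k:P_s\cdot k=\pm\omega(k)\}$, a smooth compact surface in the supersonic regime. By Sokhotski--Plemelj the $+0$ prescription produces a principal value plus a surface $\delta$-distribution on $\Sigma_s$. Here the hypothesis $W=(-\Delta)^n V$ with $n\geq 3/4$ is essential: the $|k|^{2n}$ factor in the amplitude offsets the $|k|^{-1}$ behaviour of $\omega(k)^{-1}$ near the origin and regularizes the surface integrand at the sonic boundary, while the Fermi-Golden-Rule condition \eqref{eq:FGR} together with the smoothness and exponential decay of $V$ guarantees that $\hat V$ is smooth and rapidly decaying on $\Sigma_s$. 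After this regularization, stationary phase applied to the surface contribution yields the $(1+t)^{-3/2}$ decay for $D_3$. For $D_4$, the extra factor $\dot P_s\cdot\nabla_x$ contributes a Fourier symbol $ik$ that absorbs one power of the double pole, reducing its analysis to the $D_3$-type case; pulling the pointwise factor $|\dot P_s|$ outside the $k$-integral and keeping the $s$-integral delivers the stated bound.

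In summary, modulo the factorization $U(t,s)=T(X_{ts})\circ e^{(t-s)J}$ and the Sokhotski--Plemelj analysis of the Cherenkov-singular resolvents, Proposition \ref{prop:D123} is obtained by nondegenerate stationary phase in $\mathbb R^3$. The resolvent analysis on $\Sigma_s$ is the novel technical ingredient and where I expect most of the work, since it must both handle distributional boundary values and mesh cleanly with the critical-point geometry of $\Phi_{ts}$.
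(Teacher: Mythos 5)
Your high-level translation into oscillatory integrals is on target: the factorization $U(t,s)=T(X_{ts})\circ e^{(t-s)J}$ is exactly the paper's rewriting of $U(t,s)$ as $e^{(t-s)[H_0+Q_2\cdot\nabla_x]}$, and the dispersion relation $\omega(k)=|k|\sqrt{1+|k|^2}$ with group velocity approaching $1$ as $|k|\to 0$ is the correct structure. But there are two genuine gaps that prevent the argument from closing.

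First, your claim that the averaged velocity $|X_{ts}|/(t-s)$ is confined to a compact subset of $(1,\infty)$ is false: the whole point of the theorem is that $|P_t|\to 1$, so the averaged velocity $|Q_2|$ approaches $1$ as $s\to\infty$, the critical radius $|k^\star|$ of the phase tends to $0$, and the radial Hessian entry $\omega''(|k^\star|)\sim 3|k^\star|$ degenerates. Nondegenerate stationary phase therefore does not give a uniform $(1+\tau)^{-3/2}$ bound. The paper splits into the regimes $|Q_2|>1+10\tau^{-2/3}$ and $|Q_2|\le 1+10\tau^{-2/3}$ precisely because of this critical scaling; in the near-sonic regime it abandons stationary phase entirely and deforms the $\rho$-contour into the lower half-plane (Section \ref{sec:A}, following Rauch and Soffer--Weinstein), extracting Gaussian-type decay from the imaginary part of the deformed phase. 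Your proposal has no mechanism for this regime.

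Second, and more fundamentally, the dangerous configuration is when the Cherenkov surface $\{Q_1\cdot k=\omega(k)\}$ passes through (or near) the stationary point $k^\star$ of the phase $\Phi_{ts}$; there the Sokhotski--Plemelj decomposition just moves the problem to estimating a surface integral over a manifold that grazes the critical set, and no amount of "regularization by $|k|^{2n}$" or smoothness of $\hat V$ rescues the $(1+\tau)^{-3/2}$ rate by itself. What actually makes the paper's estimate work is the dynamical input of conditions (I) and (II) of Lemma \ref{LM:est} (monotone decrease of $|P_t|$ and near-constancy of direction, propagated by a bootstrap via Lemma \ref{LM:bootstrap}), which are used in Lemma \ref{LM:q1q2} to show that $|Q_1|\ge|Q_2|$ and $Q_1\parallel Q_2$ are "almost true", forcing $|G_{a,b}^{-2}|\lesssim\zeta_0^{-4}$ near the critical point $(\zeta,0)$; this is then offset by the $\rho^6$ amplitude factor. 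You acknowledge that the resolvent must "mesh cleanly with the critical-point geometry" but you give no argument that it does, and you do not notice that Proposition \ref{prop:D123} is in fact proved \emph{conditionally} on (I) and (II), with the bootstrap closing the loop. Without these ingredients the proposal does not establish the claimed bounds.
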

The proof of this proposition will be presented in Section \ref{sec:D123}; (see, in particular,
Lemmas \ref{LM:est} and \ref{LM:d2d3}).

Heuristically, this proposition and the lower bound on $|D_1(P_t)|$ in \eqref{eq:D1Order} imply the
``ideal'' inequality \eqref{eq:ideal}, for $t \in [\rho_0^{-\frac{1}{10}},\infty)$, and hence Statement (1) of Theorem \ref{THM:main}. To render our arguments mathematically rigorous, which we will accomplish in Section \ref{sec:D123}, some bootstrap argument will be needed.

Next, we present some key elements in the proof of Proposition \ref{prop:D123}. We choose to study the term $D_4(P,t)$, because it is the most involved one (due to the presence of a singularity in
$[H(s)-0]^{-2}$; see Eq. \eqref{eq:D3}).

To get our argument under way, we must express the term $D_4(P,t)$ in a convenient form:
\begin{align}\label{eq:defD4}
D_{4}(P,t)=\rho_0 \int_{0}^{t} F_{Q_1(s), Q_2(t,s)}(t,s) \dot{P}_{s}\ ds,
\end{align}
where $F_{Q_1, Q_2}(t,s)$ is the $3\times 3$ matrix given by
\begin{align*}
F_{Q_1, Q_2}(t,s):=&\langle
\left[
\begin{array}{ccc}
\nabla_{x}W\\
0
\end{array}
\right],\  U(t,s) [H(s)-0]^{-2}    \left[
\begin{array}{ccc}
0\\
 \nabla_{x}W
\end{array}
\right]
\rangle\\
=&\langle
\left[
\begin{array}{ccc}
\nabla_{x}W\\
0
\end{array}
\right],\  e^{(t-s)[H_0+Q_2\cdot \nabla_{x}]} [H_0+Q_1\cdot \nabla_{x}-0]^{-2}    \left[
\begin{array}{ccc}
0\\
 \nabla_{x}W
\end{array}
\right]
\rangle;
\end{align*} the fact that $H_0$ commutes with all components of $\nabla_x$ has been used to find that
$$U(t,s)=e^{(t-s)[H_0+Q_2\cdot \nabla_{x}]},$$
with
\begin{align}\label{eq:defH0}
H_0:=\left[
\begin{array}{ccc}
0& -\Delta\\
-(-\Delta+1) & 0
\end{array}
\right].
\end{align}
In Eq. \eqref{eq:defD4}, the vectors $Q_1,\ Q_2\in \mathbb{R}^3$ are defined as
\begin{align}\label{eq:defP1P2}
Q_1:=P_s\ \text{and}\ Q_2:=\frac{1}{t-s} \int_{s}^{t} P_{s_1}\ ds_1,
\end{align}
with $t>s$.

To prove \eqref{eq:newEstD4}, using \eqref{eq:defD4}, it suffices to show that,
\begin{align}
|F_{Q_1(s),Q_2(t,s)}(s+\tau,s)|\lesssim (1+ \tau)^{-\frac{3}{2}}.\label{eq:Fp1p2}
\end{align}

To express the function $F_{Q_1, Q_2}$ in a convenient form, we diagonalize the matrix operator $H_0$, with the help of the matrix $A$ defined as
\begin{align}\label{eq:difA}
A:=\left[
\begin{array}{ccc}
\sqrt{-\Delta} & \sqrt{-\Delta}\\
i\sqrt{-\Delta+1} & -i\sqrt{-\Delta+1}
\end{array}
\right],
\end{align}
observing that
\begin{align}\label{eq:diag}
A^{-1} H_0 A
=&\left[
\begin{array}{ccc}
iL &0\\
0& -iL
\end{array}
\right],
\end{align}
with $L$ given by
\begin{align}\label{eq:defL}
L:=\sqrt{-\Delta+1}\sqrt{-\Delta}.
\end{align}

After inserting the trivial identity $AA^{-1}=A^{-1}A=Id$ in appropriate places of the expression for
$F_{Q_1, Q_2}$, a direct computation shows that
\begin{align}\label{eq:defFp1p2}
F_{Q_1,Q_2}(t,s)=&C\langle \nabla_{x}W ,\ \frac{\sqrt{-\Delta}}{\sqrt{-\Delta+1}} (L-iQ_1\cdot \nabla_x+i0)^{-2}e^{i(t-s)(L-iQ_2\cdot \nabla_x )}\nabla_{x}W\rangle\nonumber\\
=&C\langle \nabla_{x}V ,\ \frac{(-\Delta)^{2n+\frac{1}{2}}}{\sqrt{-\Delta+1}} (L-iQ_1\cdot \nabla_x+i0)^{-2}e^{i(t-s)(L-iQ_2\cdot \nabla_x )}\nabla_{x}V\rangle
\end{align}
where $C$ is some constant and $W=(-\Delta)^{n}V$, with $V$ and $n$ as in Eq. \eqref{eq:addFriction}.

The difficulties in proving \eqref{eq:Fp1p2} will be discussed in detail in Section \ref{sec:D123}. Some dangerous configurations of momenta $Q_1, Q_2$ will be excluded by showing that the inequality
$|Q_1|\geq |Q_2|$ is very close to being true, and that the vectors $Q_1$ and $\ Q_2$ have essentially the same direction; (see Lemma \ref{LM:est}, below). In excluding dangerous configurations of momenta, we will rely -- implicitly but critically -- on the slow decay of the FGR term and the fast decay of $F_{Q_1, Q_2}$, which can only be established if we require $n$ to be sufficiently large.

To illustrate the ideas underlying our proofs, we limit the present discussion to the following two simple cases:
\begin{align}\label{eq:twocases}
Q_1=Q_2=p(1,0,0),\ p=1,2.
\end{align}

In the following discussion, $W=(-\Delta)^{n}V$, see Eq. (\ref{eq:addFriction}), where $n$ is required
to be sufficiently large.
We propose to sketch a proof of \eqref{eq:Fp1p2}, for $Q_1$ and $\ Q_2$ as in \eqref{eq:twocases}. Hence, in (\ref{eq:defFp1p2}),
 $F_{Q_1,Q_2}(t,s)$ takes the form
\begin{align}
F_{Q_1, Q_2}(t,s)=\tilde{F}_p(\tau)
\end{align}
with $\tau:=t-s$ and $\tilde{F}_p$ defined in the obvious way.
\begin{remark}\label{re:enoughZero}
One message we intend to convey here is that the decay estimate in \eqref{eq:Fp1p2} is sharp, and it can be expected to hold, provided that the exponent $n>0$ is large enough.
\end{remark}

Using the identity
\begin{align*}
[L-ip \partial_{x_1}+i0]^{-2}=-\int_{0}^{\infty}\int_{z}^{\infty}e^{iz_1[L-ip \partial_{x_1}]}\ dz_1 dz,
\end{align*}
we rewrite \eqref{eq:defFp1p2} in the form
\begin{align}\label{eq:Ff}
\tilde{F}_{p}(\tau)=-C\int_{0}^{\infty} dz \int_{z}^{\infty} dz_1 f_p(\tau+z_1),
\end{align}
where $f_p$ is defined as
\begin{align*}
f_p(\eta):
=&\langle \nabla_{x}V ,\ \frac{(-\Delta)^{2n+\frac{1}{2}}}{\sqrt{-\Delta+1}} e^{i\eta[L-ip\ \partial_{x_1} ]}\nabla_{x}V\rangle.
\end{align*}

Fourier transform and a change of variables to polar coordinates then yield
\begin{align}
f_p(\eta)=\int_{0}^{\pi}\int_{0}^{\infty} sin\theta \rho^{4n+5} e^{-i\rho[\sqrt{1+\rho^2}-p cos\theta]\eta} g(\theta) H(\rho)\ d\rho d\theta,
\end{align}
where $H(\rho)$ is a smooth function of rapid decay at $\infty$, and $g(\theta)$ is a polynomial in $sin\theta$ and $cos\theta.$

Integrating by parts in $\theta$, using the identity
\begin{align*}
e^{ip\rho cos\theta \eta} sin\theta= -\frac{1}{ip \rho \eta}\partial_{\theta}e^{ip\rho cos\theta \eta},
\end{align*}
leads us to the expression
\begin{align}
f_p(\eta) =\frac{g(0)}{ip \eta} \int_{0}^{\infty} \rho^{4n+4} e^{-i\rho[\sqrt{1+\rho^2}-p ]\eta} H(\rho)\ d\rho+\cdots
=\frac{g(0)}{ip \eta}\  \tilde{f}_{p}(\eta)+\cdots \label{eq:tif}
\end{align}
where $\tilde{f}_{p}$ is defined in the obvious way, and the dots stand for contributions that decay faster than $\tilde{f}_{p}(\eta)$.

We now analyze the behavior of $f_{p}(\eta)$ for the two choices of $p$ ($p=1,2$) specified in \eqref{eq:twocases}.

For $p=2$, the phase $\rho[\sqrt{1+\rho^2}-p ]$ in the integrand on the right side of \eqref{eq:tif} has only one \textit{non-degenerate} critical point at
 $\rho=\rho_{*}>0$.
A standard stationary phase argument then yields the following asymptotic
behavior of $\tilde{f}_{p}(\eta)$
\begin{align}\label{asy.beh.}
\tilde{f}_{p}(\eta)=C_1 \eta^{-\frac{1}{2}} e^{-i\eta C_2(\rho_{*})}+\cdots,
\end{align}
as $\eta$ tends to $\infty$, where the contribution corresponding to the dots on the right side is subleading, $C_1\in \mathbb{C}$ is a constant depending on $H(\rho_{*})$, and
$$C_2(\rho_{*}):=\rho_{*}\sqrt{1+\rho_{*}^2}-2\rho_{*}\not=0.$$

Eqs. \eqref{eq:tif} and \eqref{asy.beh.} then yield
\begin{align}
f_{p}(\eta)=C_3 e^{-i\eta C_2(\rho_{*})} \eta^{-\frac{3}{2}}+\cdots.
\end{align}
From this equation the desired estimate on $\tilde{F}_{p}(\tau)$, $\tau\geq 0,$ can be inferred by taking into account the oscillatory nature of $e^{-i(\tau+z_1) C_2(\rho_{*})}$ and integrating by parts.

Setting $p=1$, we notice that the only critical point of the phase in the integrand on the right side of \eqref{eq:tif} is at $\rho=0$; it is degenerate, since, for small $\rho$,
$$\rho\sqrt{1+\rho^2}-\rho=\frac{1}{2}\rho^3[1+O(\rho^2)].$$

To obtain an appropriate decay estimate on $\tilde{f}_{p}(\eta)$ we integrate by parts,
using
\begin{align}
e^{-i\eta[\rho\sqrt{1+\rho^2}-\rho]}=\frac{1}{-i\eta }\frac{1}{\sqrt{1+\rho^2}+\frac{\rho^2}{\sqrt{1+\rho^2}}-1}\partial_{\rho}e^{-i\eta [\rho\sqrt{1+\rho^2}-\rho]}.
\end{align}
The singularity of $ \frac{1}{\sqrt{1+\rho^2}+\frac{\rho^2}{\sqrt{1+\rho^2}}-1}=O(\rho^{-2})$ at $\rho=0$
does not cause any problems, thanks to the factor $\rho^{4n+2}$ appearing in the integrand on the right side of \eqref{eq:tif}. If $n$ is chosen large enough we can integrate by parts three times to find
\begin{align}
|\tilde{f}_{p}(\eta)|\lesssim \eta^{-3}.
\end{align}
Inserting this bound into \eqref{eq:tif} and then using \eqref{eq:Ff} we find that
\begin{align}
|\tilde{F}_{p}(\tau)|\lesssim \tau^{-2},
\end{align}
a bound that is better than expected.

To simplify matters, we will choose $n=\frac{3}{4}$ in the remainder of our paper, showing that this value of $n$ is large enough; i.e., we consider a two-body potential $W$ of the form
\begin{align}\label{eq:addFri2}
W=(-\Delta)^{\frac{3}{4}}V,
\end{align}
with $V$ as in \eqref{eq:addFriction} and \eqref{eq:FGR}.

\section{Proof of Proposition \ref{prop:D123} and of Statement (1) in Theorem \ref{THM:main}}\label{sec:D123}
We begin this section with the derivation of an estimate on the term $D_4$ in Eq. \eqref{eq:D3} for
$\dot{P}_t$; (see \eqref{eq:defD4}). This turns out to be the most involved part of our analysis.

In (\ref{eq:defD4}), we write $F_{Q_1,Q_2}$ as
 $$F_{Q_1(s),Q_2(t,s)}(t,s)=C\langle \nabla_{x}V ,\ \frac{(-\Delta)^{2n+\frac{1}{2}}}{\sqrt{-\Delta+1}} (L-iQ_1(s)\cdot \nabla_x+i0)^{-2}e^{i(t-s)(L-iQ_2(t,s)\cdot \nabla_x )}\nabla_{x}V\rangle,$$
see \eqref{eq:defFp1p2}.
It actually turns out to be convenient to study the function $F_{Q_1(s),Q_2(t,s)}(\tau)$ defined by
\begin{align}
F_{Q_1(s),Q_2(t,s)}(\tau):=C\langle \nabla_{x}V ,\ \frac{(-\Delta)^{2n+\frac{1}{2}}}{\sqrt{-\Delta+1}} (L-iQ_1(s)\cdot \nabla_x+i0)^{-2}e^{i\tau(L-iQ_2(t,s)\cdot \nabla_x )}\nabla_{x}V\rangle,
\end{align}
i.e., we treat $\tau\geq 0$ as an independent variable (namely independent of $t,\ s$). We propose to prove that there exists a constant $C$ independent of $t,\ s$ and $\tau$ such that
\begin{align}
|F_{Q_1(s),Q_2(t,s)}(\tau)|\leq C(1+\tau)^{-\frac{3}{2}}.
\end{align}
This obviously implies the desired estimate on $F_{Q_1(s),Q_2(t,s)}(t,s)$ after setting $\tau=t-s.$

Next, we rewrite $F_{Q_1(s),Q_2(t,s)}(\tau)$ in a more convenient form.
Since $W$ is spherically symmetric, there is no loss of generality if we choose the momenta $Q_1$
and $Q_2$, defined in \eqref{eq:defP1P2}, to be given by
\begin{align}\label{eq:formP1P2}
Q_1=(a,b,0),\ Q_2=(\sigma,0,0),
\end{align}
with $\sigma\geq 0$.
By Fourier transformation and after passing to polar coordinates, we find that
\begin{align}
F_{Q_1,Q_2}(\tau)=& \int_{0}^{2\pi}\int_{0}^{\pi}\int_{0}^{\infty}  \frac{\rho^6}{\sqrt{1+\rho^2}} e^{-i\tau Y_{\sigma}(\rho,\theta)}\frac{sin\theta\ |\hat{V}(\rho)|^2}{[G_{a,b}(\rho,\theta,\alpha)-i0]^{2}} g(\theta,\alpha) \  d\rho d\theta d\alpha\label{eq:defY}
\end{align} where $g(\theta,\alpha)$ is a polynomial in $sin\theta$, $cos\theta$ and $e^{\pm i\alpha}$, $W$ is related to $V$ as in \eqref{eq:addFri2},
$$G_{a,b}(\rho,\theta,\alpha):=\sqrt{\rho^2+1}-a cos\theta-b sin\theta cos\alpha,$$
and
$$Y_{\sigma}(\rho,\theta):=\rho\sqrt{\rho^2+1}-\sigma \rho cos\theta .$$

Two types of difficulties arise when the denominator
$G_{a,b}(\rho,\theta,\alpha)$
vanishes at some points, for example when $\theta=0$, $a>1$ and $\rho=\sqrt{a-1}$: (1) A minor one is encountered if these zeros of $G_{a,b}(\rho,\theta,\alpha)$ are \textit{not} critical points of the phase
$\rho\sqrt{\rho^2+1}-\sigma \rho cos\theta$. In this case, the difficulty can be resolved as in \cite{MR1681113}.
(2) A more serious difficulty is met when the denominator vanishes (or almost vanishes) at points
$(\rho,\theta,\alpha)=(\rho_{*},\theta_{*},\alpha_{*})$, where $(\rho_{*},\ \theta_{*})$ are critical points of the phase,  $Y_{\sigma}(\rho,\theta)$, in the integrand on the right side of \eqref{eq:defY}
. Then the decay of $F_{Q_1,Q_2}(\tau)$ in $\tau$ may be slower than desirable. As an example, we notice that
the function given by $\int_{-\infty}^{\infty} e^{ik^2 \tau} |k|^{-\frac{1}{2}}\ dk$ decays significantly more slowly than the function given by $\int_{-\infty}^{\infty} e^{ik^2 \tau} \ dk$, and this is due to the singularity of $|k|^{-1/2}$ at $k=0$, which is a critical point of the phase $k^2.$

It turns out that, for $\sigma>1$, the phase $Y_{\sigma}(\rho,\theta)$ has two critical points:
\begin{align}\label{eq:critical}
(\rho,\theta)=(\zeta,0),\ (0,\eta)
\end{align}
where $\zeta>0$ and $\eta\in (0,\pi)$ are solutions to the equations
\begin{align}\label{eq:zeta}
\sqrt{1+\zeta^2}+
\frac{\zeta^2}{\sqrt{1+\zeta^2}}=\sigma \text{    }\text{and}\ \ cos(\eta)=\frac{1}{\sigma}.
\end{align}

At the critical point $(0,\eta)$, the denominator vanishes. For example, if $Q_1=Q_2$ then, by \eqref{eq:zeta},
\begin{align*}
G_{a,b}(\rho,\theta,\alpha)|_{\rho=0,\ \theta=\eta}=1-\sigma cos\eta=0.
\end{align*}
However, the factor $\rho^6$ in the integrand on the right side of \eqref{eq:defY} offsets the
singularity of the factor $G_{a,b}^{-2}$ at $\rho=0$.

The critical point $(\zeta,0)$ of the phase in the integrand on the right side of \eqref{eq:defY} does
not do any harm to the decay of  $F_{Q_1,Q_2}(\tau)$ either, thanks to the fact that the following two statements are ``very close to being correct'':
\begin{align}\label{eq:alDePa}
|Q_1|\geq |Q_2|>1,\ \text{and}\ Q_1\ \text{is parallel to } Q_2.
\end{align} To see that $|G_{a,b}^{-1}(\rho,\theta,\alpha)|_{\rho=\zeta,\theta=0}$ is appropriately bounded in these cases, we suppose that $|Q_1|\geq |Q_2|$ and $Q_1$ is parallel to $Q_2$, which by \eqref{eq:formP1P2} implies $a\geq \sigma>1$ and $b=0.$
Then we have the following upper bound
\begin{align*}
|G_{a,b}^{-1}(\rho,\theta,\alpha)|_{\rho=\zeta,\theta=0}
=&\frac{1}{|a-\sqrt{1+\zeta^2}|}\\
=&\frac{1}{|a-\sigma+\sigma-\sqrt{1+\zeta^2}|}\\
=&\frac{1}{|a-\sigma+\frac{\zeta^2}{\sqrt{1+\zeta^2}}|}\\
\leq & \frac{\sqrt{1+\zeta^2}}{\zeta^2},
\end{align*} where in the second but last step we have used the first equation in \eqref{eq:zeta}.

Recall that $Q_1$ and $Q_2$ are related to $P_t$ by \eqref{eq:defP1P2}. Conditions (I) and (II) of Lemma \ref{LM:est}, below, will turn out to suffice to prove the desired bound on $F_{Q_1,Q_2}$. More detailed information will be provided in Lemma \ref{LM:q1q2}.

\begin{lemma}\label{LM:est}
Suppose the following two conditions hold on some time interval $[0,\ T]$.
\begin{itemize}
\item[(I)]  $\frac{d}{dt} |P_t| \leq 0$, for any time $t$ with $\rho_0^{-\frac{1}{10}}\leq t \leq T$.
\item[(II)] The momentum $P_{.}$ has the properties
\begin{align*}
|P_t|-1\geq \rho_0^{\frac{1}{4}} (1+\rho_0 t)^{-\frac{2}{5}},\text{   }\text {and}
\ |\frac{P_t}{|P_t|}-\frac{P_s}{|P_s|}|\leq \rho_0^{\frac{3}{4}} (1+\rho_0 s)^{-\frac{2}{5}},
\end{align*}
for arbitrary times $s$ and $t$, with $0\leq s\leq t \leq T$.
\end{itemize}
Then the function $F_{Q_1,Q_2}(\tau)$ satisfies the decay estimate
\begin{align}\label{eq:estKts}
|F_{Q_1, Q_2}(\tau)|\lesssim (1+\tau)^{-\frac{3}{2}},
\end{align}
for any $\tau\geq 0$.
\end{lemma}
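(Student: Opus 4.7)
\textbf{Proof plan for Lemma \ref{LM:est}.} The bound $|F_{Q_1,Q_2}(\tau)|\lesssim(1+\tau)^{-3/2}$ is the sharp stationary-phase decay for the oscillatory integral in \eqref{eq:defY}, whose phase $Y_\sigma(\rho,\theta)=\rho\sqrt{1+\rho^2}-\sigma\rho\cos\theta$ has the two critical points $(\zeta,0)$ and $(0,\eta)$ recorded in \eqref{eq:critical}, and whose amplitude contains the singular factor $[G_{a,b}-i0]^{-2}$. The plan is to mimic the two-step model computation around \eqref{eq:twocases}: first, one integration by parts in $\theta$ using the $\sin\theta$-factor will yield the gain $\tau^{-1}$ via the identity
\[\sin\theta\,e^{-i\tau Y_\sigma(\rho,\theta)}=-\frac{1}{i\tau\sigma\rho}\,\partial_\theta e^{-i\tau Y_\sigma(\rho,\theta)};\]
then a standard stationary-phase analysis at the non-degenerate critical point $\rho=\zeta$ of $\rho\sqrt{1+\rho^2}-\sigma\rho$ (i.e.\ the phase evaluated at $\theta=0$) will supply the remaining $\tau^{-1/2}$.

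Before this can be implemented one must control $[G_{a,b}-i0]^{-2}$ and, crucially, show that its singular locus is separated from the critical set of $Y_\sigma$. This is where hypotheses (I)-(II) enter, through the announced Lemma \ref{LM:q1q2}: since $Q_1=P_s$ and $Q_2=\frac{1}{t-s}\int_s^t P_{s_1}\,ds_1$, hypothesis (I) forces $|Q_1|\geq|Q_2|$ to leading order and hypothesis (II) forces $Q_1,Q_2$ to be nearly parallel, so that in the coordinates \eqref{eq:formP1P2} one has $a\geq\sigma$ up to a quantitatively small error and $|b|$ small. Substituting into the explicit formula for $G_{a,b}(\zeta,0,\alpha)$ obtained just before the statement of the lemma yields
\[|G_{a,b}(\zeta,0,\alpha)|\geq \frac{\zeta^2}{\sqrt{1+\zeta^2}}-|b|\geq \frac{1}{2}\,\frac{\zeta^2}{\sqrt{1+\zeta^2}},\]
and by continuity this lower bound persists in a fixed $(\rho,\theta,\alpha)$-neighborhood of $(\zeta,0)$.

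With this in hand, introduce a smooth partition of unity on the $(\rho,\theta)$-plane with pieces supported: (a) near the critical point $(\zeta,0)$; (b) near the critical point $(0,\eta)$; (c) near the singular locus $\{G_{a,b}=0\}$, which by the previous step is disjoint from the support of (a); and (d) the remaining region. On (a), perform the $\theta$-integration by parts above to gain $\tau^{-1}$, and then apply stationary phase in $\rho$ at $\rho=\zeta$ to gain the extra $\tau^{-1/2}$. On (b), the prefactor $\rho^6$ absorbs the singularity of $G^{-2}$ at $\rho=0$ and also makes the amplitude sufficiently smooth there, so iterated integration by parts in $\rho$, using $\partial_\rho Y_\sigma=\sqrt{1+\rho^2}+\rho^2/\sqrt{1+\rho^2}-\sigma\cos\theta\ne 0$ near $(0,\eta)$, gives $\tau^{-N}$ for any $N$. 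On (d), non-stationary phase in $(\rho,\theta)$ gives arbitrary decay. On (c), use $\partial_\rho Y_\sigma\ne 0$ on this locus to integrate by parts in $\rho$ against $e^{-i\tau Y_\sigma}$, converting the singularity of $(G-i0)^{-2}$ into integrable principal-value-type expressions, with the prefactor $\rho^{4n+5}$ from $W=(-\Delta)^{n}V$ ($n\geq 3/4$) providing the extra smoothness required.

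The main obstacle is region (c): one must organize the successive integrations by parts so that derivatives falling on $(G-i0)^{-2}$ keep producing admissible distributions, and so that cutoff-boundary terms in (c) cancel against those generated in (a) and (b). This is exactly where the requirement $n\ge \tfrac{3}{4}$ is used in the spirit of Remark \ref{re:enoughZero}: one needs enough vanishing of the amplitude at $\rho=0$ to allow several $\rho$-derivatives on $(G-i0)^{-2}$ before a non-integrable singularity develops. Assembling the four contributions then yields the announced bound $\tau^{-3/2}$.
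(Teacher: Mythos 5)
Your plan captures the broad outline of what the paper does in the regime where the critical point $(\zeta,0)$ is well separated from the origin, but it has a genuine gap: it overlooks the degenerate regime where $|Q_2|\searrow 1$, i.e.\ $\zeta\to 0$, and the two critical points $(\zeta,0)$ and $(0,\eta)$ collide. In that regime the stationary-phase constants in your step (a) are \emph{not} uniform: after the $\theta$-integration by parts, stationary phase at $\rho=\zeta$ produces a bound proportional to $\zeta^{2}\,(\tau\zeta^{3})^{-1/2}=\zeta^{1/2}\tau^{-1/2}$ only when the amplitude is controlled over an interval of length $\sim\zeta$ around $\rho=\zeta$; as $\zeta\to 0$, all the constants in the expansion (including those coming from $[G_{a,b}-i0]^{-2}$, which by Lemma~\ref{LM:q1q2} is of size $\zeta_0^{-4}$) blow up, and the non-stationary-phase regions you label (b), (c), (d) can no longer be separated from (a). This is precisely why the paper splits Lemma~\ref{LM:est} into two cases, $|Q_2|>1+10\tau^{-2/3}$ (Section~\ref{sec:GeqM}) and $|Q_2|\le 1+10\tau^{-2/3}$ (Section~\ref{sec:A}). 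In the first case the constraint gives $\zeta,\zeta_0\gtrsim\tau^{-1/3}$, which is exactly enough leeway to close the stationary-phase estimates; in the second case stationary phase fails and the paper instead deforms the $\rho$-contour into the lower half-plane (using the analyticity of $\hat V$ from the exponential decay of $V$) and proves a pointwise exponential lower bound on $-\operatorname{Im}\tau Y_\sigma$ (Lemma~\ref{LM:exp}). Your proposal contains no substitute for this second technique and would not close as written.

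A second, less fatal, deviation concerns the singular factor $[G_{a,b}-i0]^{-2}$. Your step (c), ``organize successive integrations by parts so that derivatives on $(G-i0)^{-2}$ keep producing admissible distributions,'' is where the real work is, and as stated it is only a promissory note. The paper avoids this issue entirely for the non-stationary pieces by writing
\[
[\rho G_{a,b}(\rho,\theta,\alpha)-i0]^{-2}=-\int_{0}^{\infty}du\int_{u}^{\infty}dz\;e^{-iz[\rho\sqrt{1+\rho^{2}}-a\rho\cos\theta-b\rho\sin\theta\cos\alpha]},
\]
which replaces the distribution by a genuine oscillatory triple integral whose phase combines $\tau$ and $z$ as $\tau+z$; this is what yields the $(\tau+z)^{-7/2}$ bounds on $f_{Rk}$ and hence, after the $du\,dz$ integration, the $\tau^{-3/2}$ decay. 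You should either adopt this representation or explain concretely how the derivatives of $(G-i0)^{-2}$ are to be handled; the hand-waving about $n\ge 3/4$ does not by itself resolve this, since the factor $\rho^{4n+5}$ only helps near $\rho=0$, whereas the singular locus $\{G_{a,b}=0\}$ generically lies at $\rho\sim\zeta_0\not\approx 0$.
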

Conditions (I) and (II) required in Lemma \ref{LM:est} are the subject of Lemma \ref{LM:bootstrap}, below, which will be proven in Subsection \ref{subsec:initial}. The decay estimate \eqref{eq:estKts} is proven in Sections \ref{sec:GeqM} and \ref{sec:A}, where two different regimes will have to be considered separately: $|Q_2|\geq 1+10 \tau^{-\frac{2}{3}}$ and $|Q_2|\leq 1+10 \tau^{-\frac{2}{3}}$. Some technicalities will be proven in various appendices.

Next, we return to estimating the function $D_4$ given in \eqref{eq:defD4}.
Our bound on $F_{Q_1, Q_2}(\tau)$ implies that
\begin{align}
|D_4|\lesssim &\rho_0 \int_{0}^{t}(1+t-s)^{-\frac{3}{2}} |\dot{P}_s|\ ds\nonumber\\
\lesssim & \Omega(t) \rho_0^2 \int_{0}^{t}(1+t-s)^{-\frac{3}{2}} (1+\rho_0 s)^{-\frac{7}{5}}\ ds\nonumber\\
\lesssim & \Omega(t)\rho_0^{\frac{2}{5}} \tilde{D},\label{eq:ttD}
\end{align}
where $\tilde{D}$ is defined by
 $$\tilde{D}:=\int_{0}^{t} (1+t-s)^{-\frac{3}{2}} (\rho^{-1}_0+s)^{-\frac{7}{5}}\ ds,$$
 and the function $\Omega$ is defined as
\begin{align}\label{eq:majorant}
\Omega(t):=\rho_{0}^{-1}\max_{0\leq s\leq t}(1+\rho_0 s)^{\frac{7}{5}} |\dot{P}_s|.
\end{align}

Using that $ (\rho^{-1}_0+s)^{-\frac{7}{5}}\leq \rho_{0}^{\frac{7}{5}}$ one obtains the bound
\begin{align*}
\tilde{D}\leq & \int_{0}^{t} (1+t-s)^{-\frac{3}{2}}\rho_0^{\frac{7}{5}}\ ds
\lesssim   \rho_0^{\frac{7}{5}}.
\end{align*}
Next, using that $ (\rho^{-1}_0+s)^{-\frac{7}{5}}\leq (1+s)^{-\frac{7}{5}}$ and considering separately the two domains $s\geq \frac{t}{2}$ and $s\leq \frac{t}{2}$, one observes that
\begin{align*}
\tilde{D}\leq &\int_{0}^{t} (1+t-s)^{-\frac{3}{2}} (1+s)^{-\frac{7}{5}}\ ds
\lesssim   (1+t)^{-\frac{7}{5}}.
\end{align*}
Taking the minimum of these two bounds, we conclude that
 $$\tilde{D}\leq \min\{\rho_0^{\frac{7}{5}},\ (1+t)^{-\frac{7}{5}} \}\lesssim (\rho_0^{-1}+t)^{-\frac{7}{5}}.$$

Plugging this bound into \eqref{eq:ttD}, we obtain the desired estimate:
\begin{lemma}\label{lemmaD_4}
Suppose that conditions (I) and (II) of Lemma \ref{LM:est} hold. Then
\begin{align}\label{eq:estD4}
|D_4(t)|\lesssim  \rho_0^2 \Omega(t) (1+\rho_0 t)^{-\frac{7}{5}}.
\end{align}
\end{lemma}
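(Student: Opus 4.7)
The plan is to feed the kernel estimate from Lemma \ref{LM:est} into the integral representation \eqref{eq:defD4} and then perform a standard convolution-of-power-decays estimate. Since conditions (I) and (II) are assumed, Lemma \ref{LM:est} applies with $\tau=t-s$, yielding $|F_{Q_1(s),Q_2(t,s)}(t,s)| \lesssim (1+t-s)^{-3/2}$ uniformly in $s\in[0,t]$.

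Inserting this into \eqref{eq:defD4} immediately gives
\[
|D_4(t)| \;\lesssim\; \rho_0 \int_0^t (1+t-s)^{-3/2}\,|\dot P_s|\,ds.
\]
The definition \eqref{eq:majorant} of $\Omega(t)$ provides the pointwise bound $|\dot P_s| \leq \rho_0\,\Omega(t)(1+\rho_0 s)^{-7/5}$ valid for every $s\in[0,t]$. Substituting this and rewriting $(1+\rho_0 s)^{-7/5} = \rho_0^{-7/5}(\rho_0^{-1}+s)^{-7/5}$, so that the $\rho_0$-powers can later be collected into the target factor $(1+\rho_0 t)^{-7/5}$, reduces the problem to estimating
\[
\tilde D \;:=\; \int_0^t (1+t-s)^{-3/2}(\rho_0^{-1}+s)^{-7/5}\,ds.
\]

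The remaining task is to establish $\tilde D \lesssim \min\{\rho_0^{7/5},\,(1+t)^{-7/5}\}$, which is equivalent to $(\rho_0^{-1}+t)^{-7/5}$ up to constants. I would prove each half of the minimum separately. For the $\rho_0^{7/5}$ bound, simply estimate $(\rho_0^{-1}+s)^{-7/5}\leq \rho_0^{7/5}$ uniformly and use that $(1+t-s)^{-3/2}$ is integrable on $[0,t]$, giving a constant. For the $(1+t)^{-7/5}$ bound, use $(\rho_0^{-1}+s)^{-7/5}\leq (1+s)^{-7/5}$ and split the integral at $s=t/2$: on $[0,t/2]$ the factor $(1+t-s)^{-3/2} \lesssim (1+t)^{-3/2}$ pulls out while $\int(1+s)^{-7/5}ds$ is uniformly bounded; on $[t/2,t]$ the factor $(1+s)^{-7/5}\lesssim (1+t)^{-7/5}$ pulls out while $\int(1+t-s)^{-3/2}ds$ is again uniformly bounded. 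Both halves produce $\lesssim (1+t)^{-7/5}$ with the desired exponent. Taking the minimum and collecting the $\rho_0$-powers through $\rho_0^{2-7/5}(\rho_0^{-1}+t)^{-7/5} = \rho_0^{2}(1+\rho_0 t)^{-7/5}$ yields the advertised bound $|D_4(t)|\lesssim \rho_0^{2}\,\Omega(t)(1+\rho_0 t)^{-7/5}$.

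The entire substantive work of this lemma is absorbed into Lemma \ref{LM:est}; once that kernel decay is in hand, the present statement reduces to the elementary convolution arithmetic above. Accordingly, I expect no real obstacle here — the only nontrivial point is the bookkeeping of $\rho_0$-powers in the rescaling $(1+\rho_0 s)^{-7/5} = \rho_0^{-7/5}(\rho_0^{-1}+s)^{-7/5}$, which is precisely what converts the mixed-scale convolution into a clean $(1+\rho_0 t)^{-7/5}$ decay consistent with the slow approach of $|P_t|$ to the sonic speed established in the main argument.
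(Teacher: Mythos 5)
Your proposal reproduces the paper's argument essentially verbatim: feed the kernel decay from Lemma \ref{LM:est} into \eqref{eq:defD4}, bound $|\dot P_s|$ by $\rho_0\Omega(t)(1+\rho_0 s)^{-7/5}$ via the definition of $\Omega$, reduce to the convolution integral $\tilde D$, and establish $\tilde D\lesssim\min\{\rho_0^{7/5},(1+t)^{-7/5}\}$ by the same two elementary estimates (uniform bound and $s=t/2$ splitting). Your $\rho_0$-power accounting is in fact correct (the intermediate exponent is $3/5$, and the paper's displayed $\rho_0^{2/5}$ appears to be a typo that does not affect the final bound).
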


Next, we analyze the terms $D_2$ and $D_3$ in the equation of motion \eqref{eq:D3} for $P_t$.
\begin{lemma}\label{LM:d2d3}
Suppose that conditions (I) and (II) in Lemma \ref{LM:est} hold. Then
\begin{align}
|D_2(t)|\lesssim  &(1+t)^{-\frac{3}{2}}\sqrt{\rho_0} \|(1+x^2)^{2} {\bf{h}}_0\|_{\mathcal{H}^2}\lesssim \rho_0(1+t)^{-\frac{3}{2}},\\
|D_3(t)|\lesssim    &\rho_0 (1+t)^{-\frac{3}{2}}.
\end{align}
(In the first inequality, we use the smallness of the initial condition, viz.
${\bf{h}}_0 =O(\rho_{0}^{\frac{1}{2}}).)$
\end{lemma}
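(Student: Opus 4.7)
The plan is to treat $D_2$ and $D_3$ as simplified variants of the oscillatory-integral analysis already carried out for $F_{Q_1,Q_2}$ in Lemma \ref{LM:est}. Both quantities have the form $\big\langle [\nabla_x W,\,0]^T,\ U(t,0)\psi\big\rangle$, with $\psi = {\bf h}_0$ for $D_2$ and $\psi = [H(0)-0]^{-1}[0,W]^T$ for $D_3$. Using $U(t,0) = e^{t(H_0 + Q_2\cdot\nabla_x)}$ with $Q_2 = \tfrac{1}{t}\int_0^t P_{s_1}\,ds_1$, diagonalizing $H_0$ via the matrix $A$ of \eqref{eq:difA}, and passing to Fourier and polar coordinates exactly as in the derivation of \eqref{eq:defY}, each quantity reduces to an oscillatory integral in $(\rho,\theta,\alpha)$ with the phase $-t\,Y_{|Q_2|}(\rho,\theta) = -t(\rho\sqrt{\rho^2+1} - |Q_2|\rho\cos\theta)$, i.e., the very phase that governs \eqref{eq:defY}.

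For $D_3$, the diagonalized integrand differs from that of $F_{Q_1,Q_2}$ only in that the second-order pole $[G_{a,b}-i0]^{-2}$ is replaced by the first-order pole $[G_{a,b}-i0]^{-1}$ (with $Q_1$ fixed to $P_0$). Every step of the argument of Lemma \ref{LM:est} therefore applies, and is in fact strictly easier: the splitting into the two regimes $|Q_2| \geq 1 + 10 t^{-2/3}$ and $|Q_2| \leq 1 + 10 t^{-2/3}$ is handled as in Sections \ref{sec:GeqM} and \ref{sec:A}; the critical points $(\zeta,0)$ and $(0,\eta)$ of the phase are treated exactly as there; the $\rho^6$ volume factor offsets the apparent singularity at $(0,\eta)$; and the near-parallelism and ordering of $P_0$ and $Q_2$, consequences of Assumption (B), Proposition \ref{prop:smallVar}, and the bootstrap hypotheses (I)--(II) of Lemma \ref{LM:est}, bound $|G_{a,b}|^{-1}$ at $(\zeta,0)$. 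The outcome is the stated decay $(1+t)^{-3/2}$ with the explicit $\rho_0$ prefactor built into the definition of $D_3$.

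For $D_2$ there is no resolvent denominator at all, and the amplitude of the resulting oscillatory integral contains $\rho^{2n+5/2}\widehat{V}(\rho)/\sqrt{\rho^2+1}$ times the Fourier transform of ${\bf h}_0$ and a trigonometric polynomial in $\sin\theta,\cos\theta,e^{\pm i\alpha}$. The exponential weight in Assumption (A) makes this Fourier transform a Schwartz-class function, so the stationary-phase analysis at $(\zeta,0)$ and $(0,\eta)$ is a standard --- and strictly easier --- variant of the one performed for Lemma \ref{LM:est}. This yields $(1+t)^{-3/2}$ decay with constant proportional to $\|(1+x^2)^{2}{\bf h}_0\|_{\mathcal{H}^2}$; Assumption (A) then dominates this weighted Sobolev norm by a constant times $\sqrt{\rho_0}$, which combines with the explicit $\sqrt{\rho_0}$ in the definition of $D_2$ to give $|D_2(t)| \lesssim \rho_0(1+t)^{-3/2}$. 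The main obstacle --- controlling the near-vanishing of $G_{a,b}$ at the non-degenerate critical point $(\zeta,0)$ --- appears only in $D_3$ and is handled by the same mechanism as in Lemma \ref{LM:est}; for $D_2$ the estimate ultimately reduces to a classical dispersive bound for the Bogolubov dispersion $L = \sqrt{-\Delta}\sqrt{-\Delta+1}$ on $\mathbb{R}^3$ applied to a Schwartz source.
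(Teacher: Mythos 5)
The paper itself omits the proof of this lemma, remarking only that it is ``considerably easier than that of Lemma \ref{lemmaD_4}'' and (in the discussion preceding Lemma \ref{LM:finiteT}) that the same stationary-phase machinery developed for $D_4$ --- rather than the Besov-space methods of \cite{MR2559713} --- is the intended tool. Your sketch follows exactly that route and is, in its essentials, correct: writing $U(t,0)=e^{t(H_0+Q_2\cdot\nabla_x)}$ with $Q_2=\frac1t\int_0^t P_{s_1}\,ds_1$, diagonalizing $H_0$, and passing to Fourier/polar coordinates reduces both $D_2$ and $D_3$ to oscillatory integrals with the same phase $-tY_{|Q_2|}(\rho,\theta)$ as in $F_{Q_1,Q_2}$. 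For $D_3$ one has a \emph{first}-order resolvent pole $[G_{a,b}-i0]^{-1}$ with $Q_1=P_0$, strictly milder than the second-order pole in $D_4$; the same split into $|Q_2|\gtrless 1+10t^{-2/3}$, the high power of $\rho$ in the amplitude, and Lemma \ref{LM:q1q2} (applied at $s=0$, which Proposition \ref{prop:decayP1P2} covers, including the $s<\rho_0^{-1/10}$ regime via \eqref{eq:ageq1110}) give the $(1+t)^{-3/2}$ decay. For $D_2$ there is no pole and the amplitude carries $\widehat{{\bf h}_0}$, whose regularity and decay are controlled by $\|(1+x^2)^2{\bf h}_0\|_{\mathcal{H}^2}$, which hypothesis (A) bounds by $C\sqrt{\rho_0}$; combined with the explicit $\sqrt{\rho_0}$ prefactor this gives the stated bound.

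One small caveat worth flagging: your closing remark that for $D_2$ ``the estimate ultimately reduces to a classical dispersive bound'' for $e^{itL}$ on a Schwartz source is a little too quick. Because $|Q_2|>1$ the boosted phase has the Cherenkov critical point $(\zeta,0)$, and the unboosted low-frequency branch of $L$ is wave-like (only $t^{-1}$ decay in general); the $(1+t)^{-3/2}$ rate here is rescued specifically by the $\rho^{4n+3}$ vanishing at $\rho=0$ coming from $\nabla_xW=(-\Delta)^n\nabla_xV$ (and by the non-degeneracy at $(\zeta,0)$), exactly as in the $F_{R1},F_{R2}$ analysis. This is a detail of emphasis rather than a gap, since you do carry out the stationary-phase argument; but phrasing it as a generic dispersive bound obscures where the decay rate actually comes from in this model.
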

The proof of this lemma is considerably easier than that of Lemma \ref{lemmaD_4}, and we omit it.

Next, we turn to the proof of Statement (1) of our Main Result, Theorem \ref{THM:main}.

\subsection{Proof of Statement (1) of the Theorem \ref{THM:main}}
The goal of this section is to prove statement (1) in Theorem \ref{THM:main}. Our proof is based on Lemmas  \ref{LM:est}-\ref{LM:d2d3} and Lemma \ref{LM:bootstrap}, below, and involves a bootstrap argument.

We recall that, on the time interval $[0,\rho_{0}^{-\frac{1}{10}}]$, the solution $P_t$ has already been studied in Proposition \ref{prop:smallVar}. In this section we focus our attention on the behavior of
$P_t$, for $t \in [\rho_0^{-\frac{1}{10}},\infty)$. We first analyze the behavior of $P_t$ and show that
$|P_t|>1$, for
$t \in \lbrack \rho_{0}^{-\frac{1}{10}},T\rbrack$ with $T<\infty$, and then employ a bootstrap argument to show that
$T$ can actually be let tend to $\infty$.
Recall that the function $\Omega(t)$ has been defined in \eqref{eq:majorant}.
\begin{lemma}\label{LM:bootstrap}
There exists a time $T$ satisfying $T>\rho_{0}^{-\frac{1}{10}}$ such that
Conditions (I) and (II) in Lemma \ref{LM:est}, as well as the two inequalities
\begin{align}\label{eq:assumBo}
\Omega(t)\leq \rho_0^{-\frac{1}{2}}\ \text{and}\ [|P_t|-1]^{\frac{7}{2}}\geq \rho_0^{\frac{1}{20}} (1+\rho_0 t)^{-\frac{7}{5}},
\end{align}
hold for all $t \in [0,T]$.
\end{lemma}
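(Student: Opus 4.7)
\textbf{Proof plan for Lemma \ref{LM:bootstrap}.} The approach is a standard continuity argument which uses Proposition \ref{prop:smallVar} as a seed. Concretely, I would verify that all four conditions hold with a strict margin at $t=\rho_0^{-\frac{1}{10}}$, and then invoke continuity of $P_t$ (and of the functional $\Omega(\cdot)$, which is a running maximum of continuous quantities) to extend the validity to an interval $[0,\rho_0^{-\frac{1}{10}}+\epsilon]$, producing the desired $T=\rho_0^{-\frac{1}{10}}+\epsilon$.

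First, apply Proposition \ref{prop:smallVar} to get $|P_t-P_0|\lesssim \rho_0^{\frac{9}{10}}$ and $|\dot P_t|\lesssim \rho_0$ for $t\in[0,\rho_0^{-\frac{1}{10}}]$. Combined with hypothesis (B) of Theorem \ref{THM:main} ($|P_0|\geq \frac{11}{10}$), this yields $|P_t|-1\geq \frac{1}{20}$ on this interval, for $\rho_0$ small. The three ``pointwise'' conditions then become easy to check with strict margin: (a) $|P_t|-1\geq \frac{1}{20}>\rho_0^{\frac{1}{4}}\geq \rho_0^{\frac{1}{4}}(1+\rho_0 t)^{-\frac{2}{5}}$; (b) since $|\dot P_u|\lesssim \rho_0$ and $|P_u|\geq 1$,
\[
|\tfrac{P_t}{|P_t|}-\tfrac{P_s}{|P_s|}|\leq \int_s^t \tfrac{2|\dot P_u|}{|P_u|}du\lesssim \rho_0(t-s)\lesssim \rho_0^{\frac{9}{10}}\leq \tfrac12\rho_0^{\frac{3}{4}}(1+\rho_0 s)^{-\frac{2}{5}},
\]
where the last inequality uses $\rho_0^{\frac{9}{10}}<\rho_0^{\frac{3}{4}}$ and $(1+\rho_0 s)^{-\frac{2}{5}}\geq 2^{-\frac{2}{5}}$ for $s\leq \rho_0^{-\frac{1}{10}}$; (c) $[|P_t|-1]^{\frac{7}{2}}\geq (\frac{1}{20})^{\frac{7}{2}}\geq \rho_0^{\frac{1}{20}}\geq \rho_0^{\frac{1}{20}}(1+\rho_0 t)^{-\frac{7}{5}}$; (d) $\Omega(t)=\rho_0^{-1}\max_{s\leq t}(1+\rho_0 s)^{\frac{7}{5}}|\dot P_s|\lesssim (1+\rho_0^{\frac{9}{10}})^{\frac{7}{5}}\lesssim 1\ll \rho_0^{-\frac{1}{2}}$. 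Each of (a)-(d) holds with a strict gap of a positive power of $\rho_0$.

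The main obstacle is verifying Condition (I) at $t=\rho_0^{-\frac{1}{10}}$, since it asks that the friction dominate, i.e.\ essentially \eqref{eq:ideal}, at a time where the estimates of Lemmas \ref{lemmaD_4} and \ref{LM:d2d3} seem to presuppose what one is trying to prove. The point is that the conclusions of those lemmas depend only on (II) holding on $[0,\rho_0^{-\frac{1}{10}}]$ (already verified above) and on (I) holding on the possibly degenerate interval $[\rho_0^{-\frac{1}{10}},\rho_0^{-\frac{1}{10}}]$ (vacuous), so their conclusions are available at the single time $t=\rho_0^{-\frac{1}{10}}$. Applied there they give $|D_2(t)|+|D_3(t)|\lesssim \rho_0(1+t)^{-\frac{3}{2}}=O(\rho_0^{\frac{23}{20}})$ and, using $\Omega(t)\lesssim 1$, $|D_4(t)|\lesssim \rho_0^2 \Omega(t)(1+\rho_0 t)^{-\frac{7}{5}}=O(\rho_0^{2})$, whereas $|D_1(P_t)|\geq C_0\rho_0[|P_t|-1]^{\frac{7}{2}}\geq C_0\rho_0(\tfrac{1}{20})^{\frac{7}{2}}$. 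Hence from \eqref{eq:lower},
\[
\tfrac{d}{dt}|P_t|=-|D_1(P_t)|+(D_2+D_3+D_4)\cdot \tfrac{P_t}{|P_t|}\leq -\tfrac{1}{2}|D_1(P_t)|<0,
\]
with a strict gap, so (I) holds at $t=\rho_0^{-\frac{1}{10}}$.

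All four conditions therefore hold with a strict margin at $t=\rho_0^{-\frac{1}{10}}$. Since $P_t$ depends continuously on $t$ (by well-posedness recalled after Theorem \ref{THM:main}) and $\Omega(t)$ is continuous as a running maximum of a continuous function of $s$, every strict inequality persists on some $[0,\rho_0^{-\frac{1}{10}}+\epsilon]$. This produces $T>\rho_0^{-\frac{1}{10}}$ as required. The subtle point throughout is logical ordering: one must verify (II) on $[0,\rho_0^{-\frac{1}{10}}]$ before using Lemmas \ref{lemmaD_4} and \ref{LM:d2d3} to verify (I) at the endpoint, and only then pass to continuity to obtain the extension.
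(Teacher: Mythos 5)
Your proof is correct and follows essentially the same strategy as the paper's: use Proposition \ref{prop:smallVar} to verify all four conditions with a strict $\rho_0$-power margin on $[0,\rho_0^{-1/10}]$, then invoke \eqref{eq:estD4}, Lemma \ref{LM:d2d3} and Lemma \ref{LM:est1} at $t_0=\rho_0^{-1/10}$ to get $\tfrac{d}{dt}|P_t|<0$ there, and finally extend by continuity to some $T>\rho_0^{-1/10}$. One small imprecision worth flagging: condition (I) on the singleton $[\rho_0^{-1/10},\rho_0^{-1/10}]$ is not literally vacuous — it is one real pointwise inequality, precisely the one you are trying to establish — so the way to defuse the apparent circularity is not to call it vacuous but to observe (as the proof of Proposition \ref{prop:decayP1P2} shows) that for trajectory pairs $s\leq t\leq\rho_0^{-1/10}$ the conclusion of Lemma \ref{LM:q1q2}(a) follows directly from Proposition \ref{prop:smallVar} and the initial data without ever invoking (I); your intuition is right, but this is the correct justification.
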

The proof of this lemma can be found at the end of this section.

Next, we use the results in Lemmas \ref{LM:est}-\ref{LM:bootstrap} to solve the equation of motion \eqref{eq:D3} for $P_t$, for
$t \in [0,\ T]$.

The validity of conditions (I) and (II) enables us to apply the results in Lemmas
\ref{LM:est}\text{   }--\text{   }\ref{LM:d2d3}, which along with \eqref{eq:assumBo} imply that
\begin{align}\label{eq:remainder}
| D_2(P,t)+D_3(P,t)+D_4(P,t)|\lesssim \rho_0 (1+t)^{-\frac{3}{2}} +\rho_0^{\frac{3}{2}} (1+\rho_0 t)^{-\frac{7}{5}}.
\end{align}

To bound the term $D_1(P_t)$, we use Lemma \ref{LM:est1} and the second inequality in \eqref{eq:assumBo} and find that
\begin{align}\label{eq:ideal2}
|D_1(P_t)|\geq 2| D_2(P,t)+D_3(P,t)+D_4(P,t)|,
\end{align}
for $t \in [\rho_{0}^{-\frac{1}{10}},T]$.
This enables us to solve the equation for $\frac{d}{dt}|P_t|$, see \eqref{eq:lower}-\eqref{eq:lower3}, for
$t \in [\rho_{0}^{-\frac{1}{10}},T]$, and to prove the crucial lower and upper bounds on
 $|P_t|-1$:
\begin{align}
C_1(1+\rho_0 t)^{-\frac{2}{5}}\leq |P_t|-1 \leq C_2(1+\rho_0 t)^{-\frac{2}{5}}, \label{eq:below}
\end{align}
for some positive constants $C_1$ and $C_2$. Plugging these bounds into the equation for
$\frac{d}{dt}|P_t|$ in ~\eqref{eq:lower}, we obtain that
\begin{align}\label{eq:tPtLU}
-C_3 \rho_0(1+\rho_0 t)^{-\frac{7}{5}}\leq \frac{d}{dt}|P_t|\leq -C_4 \rho_0(1+\rho_0 t)^{-\frac{7}{5}},
\end{align}
for some constants $C_3,\ C_4>0$.
After controlling the magnitude of $P_t$, we study its direction, $\frac{P_t}{|P_t|}.$
From the equation of motion \eqref{eq:D3} for $P_t$ we derive an equation for
$\frac{d}{dt}(\frac{P_t}{\vert P_t \vert})$:
\begin{align}\label{Eq.dir.}
\frac{d}{dt}(\frac{P_t}{|P_t|}) =\frac{1}{|P_t|} \sum_{k=2}^{4}D_k-[P_t\cdot \displaystyle\sum_{k=2}^{4}D_k]\frac{1}{|P_t|^{\frac{3}{2}}} P_t.
\end{align}
Note that a term proportional to $D_1$ does not appear on the right side of this equation, because $D_1(P_t)$ is parallel to $P_{t}$; (see \eqref{eq:D1}).
Applying our estimates on $D_k, k=2,3,4,$ in \eqref{eq:remainder} on the right side of Eq. \eqref{Eq.dir.} and using \eqref{eq:tPtLU},
we obtain that
\begin{align*}
|\frac{d}{dt}(\frac{P_t}{|P_t|})|\lesssim \rho_0(1+t)^{-\frac{3}{2}}+\rho_0^2 (1+\rho_0 t)^{-\frac{7}{5}}.
\end{align*}
Integrating $\frac{d}{dt}(\frac{P_t}{|P_t|})$ from $s$ to $t$, with $0\leq s\leq t,$ and using this bound, we find that
\begin{align}\label{eq:convDi}
|\frac{P_s}{|P_s|}-\frac{P_t}{|P_t|}|\lesssim \rho_0 (1+\rho_0 s)^{-\frac{2}{5}}.
\end{align}

Next, we show that the desired estimates \eqref{eq:below}, \eqref{eq:convDi} hold for all $t \in [0,\infty)$ by proving that the maximal value of $T$ for which Lemma \ref{LM:bootstrap} holds is $T = \infty$, and then repeating the arguments above:
Suppose the maximal value of $T$ is given by some $T_{*}<\infty.$ Then we may apply \eqref{eq:below}, \eqref{eq:tPtLU} and \eqref{eq:convDi} and use arguments similar to those used in the proof of Lemma \ref{LM:bootstrap}, below, to extend the validity of Lemma \ref{LM:bootstrap} to some larger time $T_{**}>T_{*}$. Consequently the maximal value of $T$ is $\infty.$

Before turning to our proof of Lemma  \ref{LM:bootstrap} we complete the proof of Statement (1) of Theorem \ref{THM:main}: the convergence of $|P_t|$ is implied by \eqref{eq:below}, the convergence of direction of $P_t$ is a consequence of \eqref{eq:convDi}, and Eq. \eqref{Pdot} for $\dot{P_t}$ follows from our lower bound on $-D_1(P_t)$ and the upper bounds on
$|D_2+D_3+D_4|.$
\subsubsection{Proof of Lemma \ref{LM:bootstrap}}\label{subsec:initial}
We start by estimating $P_t$, $\dot{P}_t$ and $\frac{d}{dt}|P_t|$, for $t$ in the interval
$[0,\rho_0^{-\frac{1}{10}}]$, by applying the results in Proposition \ref{prop:smallVar}. Afterwards, we extend them to a larger time-interval.

Proposition \ref{prop:smallVar} implies that, for $t \in [0, \rho_0^{-\frac{1}{10}}]$,
\begin{align}\label{eq:omegat}
\Omega(t)\lesssim 1.
\end{align}
It also shows that, for arbitrary times $t$ and $s$ satisfying
$\rho_0^{-\frac{1}{10}}\geq t\geq s\geq 0$,
\begin{align}
|P_t|-1\gtrsim (1+\rho_0 t)^{-\frac{2}{5}},\ |\frac{P_t}{|P_t|}-\frac{P_s}{|P_s|}|\lesssim \rho_0^{\frac{9}{10}} (1+\rho_0 s)^{-\frac{2}{5}}.
\end{align}

Next, we turn to verifying condition (I) in Lemma \ref{LM:est}, for $t$ of order $O(\rho_0^{-\frac{1}{10}})$. Using the bound \eqref{eq:estD4} and Lemma \ref{LM:est1}, \eqref{eq:D1}, we obtain that
\begin{align}
|\sum_{k=2}^{4}D_k(t_0)|\lesssim  \rho_0^{\frac{11}{10}},\
|D_1(t_0)|\gtrsim   \rho_0,
\end{align}
for $t_0=\rho_{0}^{-\frac{1}{10}}$.
When inserted on the right side of inequality \eqref{eq:lower2} for $\frac{d}{dt}|P_t|$ one finds that
\begin{align}\label{eq:Ptupper}
\lbrack-\frac{d}{dt}|P_t|\rbrack_{t=t_{0}=\rho_{0}^{-\frac{1}{10}}}\gtrsim \rho_0,
\end{align}

The results in \eqref{eq:omegat}-\eqref{eq:Ptupper}, for the time interval $[0,\rho_{0}^{-\frac{1}{10}}]$, are stronger than those in Lemma \ref{LM:bootstrap}. Hence, by continuity, a weaker version holds in a somewhat larger time interval; i.e., there exists a time interval $[0,T]$, with $T>\rho_0^{-\frac{1}{10}}$, on which Lemma \ref{LM:bootstrap} holds.

This completes the proof of Lemma \ref{LM:bootstrap}.

\section{The State of the Bose Gas, as $t\rightarrow \infty$ -- Proof of Statement (2) of Theorem \ref{THM:main}}\label{sec:formState}

To prove the convergence of the solution of Eq. \eqref{eq:field} for the condensate wave function,
$\beta_t$, of the Bose gas, it is not convenient to use the decomposition in \eqref{eq:decom1}, because the presence of a singularity in $[H(t)-0]^{-1}=[H_0+P_t\cdot \nabla_x-0]^{-1}$, for $|P_t|>1,$ would make it cumbersome to find an appropriate function space (for $\delta_t$)
to work with.
Instead, we propose to find an equation for $\beta_t$ that takes into account the fact -- proven in Sect. 3.1 -- that
$P_t\rightarrow P_\infty$, as $t\rightarrow\infty$, with $|P_{\infty}|= 1$.

We define a vector function $\xi_{t}:\ \mathbb{R}^3\rightarrow \mathbb{R}^2$ by
\begin{align}\label{eq:vForm2}
\xi_{t}  (x-P_{\infty}t)=\left[
\begin{array}{ccc}
\text{Re}\beta_{t}(x)\\
\text{Im}\beta_{t}(x)
\end{array}
\right].
\end{align}
Then Eq. \eqref{eq:field} reads
\begin{align}
\dot{\xi}_{t}=&H_{\infty} \xi_{t}-\sqrt{\rho_0} \left[
\begin{array}{ccc}
0\\
W^{Y_t}
\end{array}
\right], \label{eq:field3}
\end{align}
where $H_{\infty}$ is the $2\times 2$ matrix operator given by
\begin{align}
H_{\infty}:=\left[
\begin{array}{ccc}
P_{\infty}\cdot \nabla_{x}& -\Delta\\
-(-\Delta+1) & P_{\infty}\cdot\nabla_{x}
\end{array}
\right],
\end{align}
and
\begin{align}\label{eq:defYt}
Y_t:=X_t-P_{\infty}t.
\end{align}
We decompose the function $\xi_t$ into two parts:
\begin{align}\label{eq:decom3}
\xi_{t}=\sqrt{\rho_0} H_{\infty}^{-1} \left[
\begin{array}{ccc}
0\\
W^{Y_t}
\end{array}
\right]+\eta_{t}
\end{align}
We observe that $H_{\infty}^{-1}$ is well defined due to the fact that $|P_{\infty}|= 1.$
Eq. \eqref{eq:field3} and the decomposition in \eqref{eq:decom3} yield an evolution equation for $\eta_t$:
\begin{align}
\dot{\eta}_t=H_{\infty}\eta_t+\sqrt{\rho_0} H_{\infty}^{-1} [P_t-P_\infty]\cdot \nabla_{x}\left[
\begin{array}{ccc}
0\\
W^{Y_{t}}
\end{array}
\right].
\end{align}
Applying Durhamel's Principle, we find that
\begin{align}
\eta_t=&e^{H_{\infty}t}\eta_0+\sqrt{\rho_0} \int_{0}^{t} e^{(t-s)H_{\infty}}H_{\infty}^{-1} [P_s-P_\infty]\cdot \nabla_{x}\left[
\begin{array}{ccc}
0\\
W^{Y_{s}}
\end{array}
\right]\ ds\nonumber\\
=&e^{H_{\infty}t}\xi_0-\sqrt{\rho_0}e^{H_{\infty}t} H_{\infty}^{-1}\left[
\begin{array}{ccc}
0\\
W^{Y_{0}}
\end{array}
\right]\nonumber\\
&+\sqrt{\rho_0} \int_{0}^{t} e^{(t-s)H_{\infty}}H_{\infty}^{-1} [P_s-P_\infty]\cdot \nabla_{x}\left[
\begin{array}{ccc}
0\\
W^{Y_{s}}
\end{array}
\right]\ ds\label{eq:threeTerms}
\end{align}
The different terms on the right side of Eq. \eqref{eq:threeTerms} are estimated in the following lemma.

\begin{lemma}\label{LM:HInfty}
For any $t\geq 0,$
\begin{align}
\|e^{H_{\infty}t}\eta_0\|_{\infty}\lesssim & (1+t)^{-1}\|(1+|x|)^4\eta_{0}\|_{\mathcal{H}^{2}},\\
\|e^{H_{\infty}t} H_{\infty}^{-1} \left[
\begin{array}{ccc}
0\\
W^{Y_{0}}
\end{array}
\right]\|_{\infty}\lesssim & (1+t)^{-1},\label{eq:singular}\\
\|e^{H_{\infty}t} H_{\infty}^{-1} \nabla_{x}\left[
\begin{array}{ccc}
0\\
W^{Y_{s}}
\end{array}
\right]\|_{\infty}\lesssim &(1+t)^{-1}.
\end{align}
\end{lemma}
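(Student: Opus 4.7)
The three bounds all advertise the same $(1+t)^{-1}$ dispersive decay rate, which is exactly the $3$-D wave rate for the Bogolubov dispersion $\omega(k)=|k|\sqrt{1+|k|^2}$: at high frequencies the phase is Schr\"odinger-like ($t^{-3/2}$ decay, better than needed), while at low frequencies it is massless-wave-like, contributing the governing $t^{-1}$ decay in three dimensions. The plan is to reduce all three bounds to a single dispersive estimate for the propagator $e^{\pm itL}$ and then apply stationary phase in the spirit of Lemma \ref{LM:finiteT}, i.e.\ along the lines of \cite{MR2559713} and of the integration-by-parts analysis of $F_{Q_1,Q_2}$ already carried out in the strategy of Section \ref{sec:MainTHM}.

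\textbf{Diagonalization and reduction.} I would first use the matrix $A$ of \eqref{eq:difA} to diagonalize $H_\infty$. Since $P_\infty\cdot\nabla_x\cdot I$ commutes with $H_0$,
\begin{equation*}
A^{-1}H_\infty A=\mathrm{diag}\bigl(\,iL+P_\infty\cdot\nabla_x,\ -iL+P_\infty\cdot\nabla_x\,\bigr),
\end{equation*}
and the exponential of each diagonal entry factorizes as $e^{\pm itL}\circ T_t$, where $T_t$ is the translation by $\pm tP_\infty$ and is an $L^\infty$-isometry. Thus each of the three bounds is reduced, after absorbing the Fourier-multiplier factors carried by $A$ and $A^{-1}$ into derivative losses on the source (which are controlled by the $\mathcal H^2$ / Schwartz regularity available in all three cases), to a bound of the form $\|e^{\pm itL}g\|_\infty\lesssim (1+t)^{-1}N(g)$, where $g$ is built explicitly out of $\eta_0$, $W^{Y_0}$, or $\nabla_x W^{Y_s}$.

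\textbf{Taming the $H_\infty^{-1}$ singularity.} Because $|P_\infty|=1$, the symbol $\hat L(k)\pm P_\infty\cdot k=|k|\sqrt{1+|k|^2}\pm P_\infty\cdot k$ is strictly positive off the origin and vanishes to first order at $k=0$, so $H_\infty^{-1}$ is a Fourier multiplier with a $|k|^{-1}$ singularity there. In the second and third bounds this singularity is killed by the factor $\hat W(k)=|k|^{3/2}\hat V(k)$ (resp.\ $|k|^{5/2}\hat V(k)$) inherited from $W=(-\Delta)^{3/4}V$, so the relevant Fourier data is smooth and $L^1$ near $k=0$ with rapid decay away from it; in the first bound no $H_\infty^{-1}$ appears and the regularity simply comes from $(1+|x|)^4\eta_0\in\mathcal H^2$. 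Translation invariance in $Y_s$ multiplies the Fourier side by a unimodular phase, so the resulting bounds are uniform in $Y_s$.

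\textbf{Main step: dispersive estimate for $e^{\pm itL}$.} The one nontrivial step is to prove $\|e^{\pm itL}g\|_\infty\lesssim (1+t)^{-1}N(g)$ for the relevant $g$. I would split $g$ by a Littlewood--Paley decomposition $g=\sum_j g_j$ into dyadic annuli $|k|\sim 2^j$. For $j\geq 0$ the phase is Schr\"odinger-like, and standard 3-D stationary phase yields $\|e^{\pm itL}g_j\|_\infty\lesssim t^{-3/2}\|\hat g_j\|_1$, summable using the $\mathcal H^2$ regularity and decay on the right-hand side. For $j<0$ I would pass to polar coordinates $k=\rho\omega$ and treat the half-wave-like phase $\rho\sqrt{1+\rho^2}\approx\rho$: nonstationary integration by parts in $\rho$ (whose boundary term at $\rho=0$ is killed by the $|k|^{1/2}$ or higher vanishing of the amplitude) together with a stationary-phase argument on the sphere recovers the 3-D wave rate $t^{-1}$, with a loss of derivatives absorbed by the weight $(1+|x|)^4$ (or by the Schwartz decay of $V$). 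The main obstacle is precisely this low-frequency analysis, because the phase is degenerate at $\rho=0$ in the presence of a mildly singular amplitude from $H_\infty^{-1}$; it is handled by the same mechanism that produced the sharp bound \eqref{eq:Fp1p2} in the heuristic discussion near \eqref{eq:tif}, namely repeated integration by parts whose boundary contributions vanish thanks to the $(-\Delta)^{3/4}$ prefactor in $W$.
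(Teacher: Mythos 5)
Your high-level reduction (diagonalize, factor out the translation, prove a dispersive bound for $e^{\pm itL}$) is sensible and the first bound of Lemma \ref{LM:HInfty} could indeed be handled by a Littlewood--Paley argument along those lines, but your treatment of the $H_\infty^{-1}$ singularity---which is the whole point of \eqref{eq:singular}---is incorrect, and this is a genuine gap. You state that, because $|P_\infty|=1$, the symbol $|k|\sqrt{1+|k|^2}-P_\infty\cdot k$ ``vanishes to first order at $k=0$'', giving a $|k|^{-1}$ singularity, and that multiplying by $\hat W(k)=|k|^{3/2}\hat V(k)$ makes the amplitude ``smooth and $L^1$'' near $k=0$. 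This is false: writing $k=\rho\omega$, the symbol equals $\rho\bigl[\sqrt{1+\rho^2}-\omega\cdot P_\infty\bigr]$, which is $\asymp\rho$ in transverse directions but $\asymp\rho^3$ along $\omega=P_\infty$. The singularity is therefore anisotropic and of third order in the worst direction, so after multiplying by $|k|^{3/2}\hat V(k)$ (resp.\ $|k|^{5/2}\hat V(k)$) the amplitude still blows up like $\rho^{-3/2}$ (resp.\ $\rho^{-1/2}$) along $P_\infty$; it is integrable only because of the $\rho^2$ Jacobian, and certainly not smooth or bounded. Consequently, your plan to ``integrate by parts in $\rho$, whose boundary term at $\rho=0$ is killed by the $|k|^{1/2}$ or higher vanishing of the amplitude'' does not work: in the direction $\omega=\pm P_\infty$ there is no vanishing, and the boundary terms at $\rho=0$ are not controlled.

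The paper's proof (Section \ref{sec:Hinf}) is structured precisely around this anisotropy and does not go through a clean dispersive estimate for $e^{\pm itL}$. After diagonalizing and Fourier transforming, the relevant quantity is an oscillatory integral with phase $it[\,|k|\sqrt{1+|k|^2}-\mu\cdot k\,]$, $\mu$ proportional to $x/t$, and with the amplitude still carrying the factor $[\,|k|\sqrt{1+|k|^2}-P_\infty\cdot k\,]^{-1}$. In polar coordinates the singular factor is $[\sqrt{1+\rho^2}-q\cos(\theta-\beta)]^{-1}$ with $q\leq |P_\infty|=1$, so singularity and phase are both tied to the angular variable. The argument then splits on the size of $\sigma=|\mu|$. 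For $|\sigma|\geq\tfrac12$ it integrates by parts once in $\theta$ (gaining the factor $1/t$) and shows the resulting integrand, which after the IBP looks like $\rho^{3/2}\sin(\theta-\beta)/[\sqrt{1+\rho^2}-q\cos(\theta-\beta)]^2$, is absolutely integrable via the two-sided lower bound $\sqrt{1+\rho^2}-q\cos(\theta-\beta)\gtrsim\rho^2+(\theta-\beta)^2$ together with the positive power $\rho^{3/2}$ coming from $W=(-\Delta)^{3/4}V$. For $|\sigma|<\tfrac12$ it integrates by parts in $\rho$, using that $\partial_\rho(\rho[\sqrt{1+\rho^2}-\sigma\cos\theta])=\sqrt{1+\rho^2}+\tfrac{\rho^2}{\sqrt{1+\rho^2}}-\sigma\cos\theta\geq\tfrac12$ is uniformly bounded below. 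In both cases one obtains $|\Psi_2(t)|\lesssim t^{-1}$, uniformly in $x$ (i.e., uniformly in $\sigma$). To repair your argument you would need to make the corresponding case distinction and to control the angular integral near the critical direction $\omega=P_\infty$ honestly, rather than asserting smoothness of the amplitude; as written, the low-frequency step of your Littlewood--Paley decomposition does not close.
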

The proof of Lemma \ref{LM:HInfty} can be found in Section \ref{sec:Hinf}.

Applying this lemma we obtain that
\begin{align}\label{eq:pspinfty}
\|\eta_t\|_{\infty}\lesssim (1+t)^{-1}[\sqrt{\rho_0}+\|(1-\Delta)^{2}\xi_0\|_{L^{1}}]+\sqrt{\rho_0}\int_{0}^{t}(1+t-s)^{-1}|P_s-P_{\infty}|\ ds.
\end{align}
Since $|\dot{P}_t|\lesssim \rho_0(1+\rho_0 t)^{-\frac{7}{5}}$, see \eqref{eq:tPtLU}, we have control over $|P_t-P_{\infty}|$, namely
\begin{align*}
|P_t-P_{\infty}|\lesssim (1+\rho_0 t)^{-\frac{2}{5}}.
\end{align*}
 This bound, together with \eqref{eq:pspinfty}, obviously implies that
\begin{align}\label{eq:etaDecay}
\|\eta_t\|_{\infty}\rightarrow 0,\ \text{as} \ t\rightarrow \infty.
\end{align}

Next we show that this implies the desired result \eqref{eq:betaconv} by relating $\eta_t$ to $\beta_t-\beta_{\infty}(\cdot-X_t)$. Using the definition of $\xi$ in \eqref{eq:vForm2} and its decomposition in \eqref{eq:decom3}, and recalling the definition of $Y_t$ in \eqref{eq:defYt}, we find that
\begin{align}
\left[
\begin{array}{ccc}
\text{Re}\beta_{t}(x)\\
\text{Im}\beta_{t}(x)
\end{array}
\right]=\sqrt{\rho_0} H_{\infty}^{-1} \left[
\begin{array}{ccc}
0\\
W^{Y_t+P_{\infty}t}
\end{array}
\right]+\eta_{t}(x-P_{\infty}t)=\sqrt{\rho_0} H_{\infty}^{-1} \left[
\begin{array}{ccc}
0\\
W^{X_t}
\end{array}
\right]+\eta_{t}(x-P_{\infty}t).\label{eq:original}
\end{align}
Defining $\beta_{\infty}:=\sqrt{\rho_0} H_{\infty}^{-1} \left[
\begin{array}{ccc}
0\\
W
\end{array}
\right]$, we observe that this identity and property \eqref{eq:etaDecay} complete the proof of our main result, Theorem \ref{THM:main}; (with the proof of Lemma \ref{LM:est1} postponed to Section \ref{sec:LemEst}, the one of Lemma \ref{LM:est} postponed
to Sections \ref{sec:GeqM} and \ref{sec:A} and the one of Lemma \ref{LM:HInfty} to
Section \ref{sec:Hinf}).

\section{Proof of Equation \eqref{eq:D1}, Lemma \ref{LM:est1}}\label{sec:LemEst}
By definition, the term $D_1$ appearing in the equation of motion \eqref{eq:D3}
for the particle is given by
$$D_1(P)= \rho_{0}\langle\left[
\begin{array}{ccc}
\nabla_{x}W\\
0
\end{array}
\right], [H_0+P\cdot \nabla_{x}-0]^{-1} \left[
\begin{array}{ccc}
0\\
W
\end{array}
\right]\rangle$$
with $H_0:=\left[
\begin{array}{ccc}
0& -\Delta\\
-(-\Delta+1) & 0
\end{array}
\right], $
(see \eqref{eq:defH0}).

We first notice that if $|P|\leq 1$ then $$D_{1}(P)\equiv 0.$$
To see this one uses the fact that, for $|P| \leq 1$, the operator $H_0+P\cdot \nabla_{x}$ is invertible and then one determines the form of its inverse.
It follows that the function $\Lambda(|P|)$ in Eq. \eqref{eq:D1} vanishes identically, for $|P|\leq 1$.

In the remainder of this section, we assume that $|P|>1$.

A simple symmetry argument shows that the vector $D_1(P)$ is parallel to $P$, for all
$P\in \mathbb{R}^{3}$. To see this we choose two arbitrary vectors $Q_1, \ Q_2\in\mathbb{R}^3$, with $Q_1\perp Q_2$, and show that
\begin{align*}
\rho_{0}^{-1}Q_1\cdot D_1(Q_2)
=& Q_1\ \cdot\ \langle\left[
\begin{array}{ccc}
\nabla_{x}W\\
0
\end{array}
\right], [H_0+Q_2\cdot \nabla_{x}-0]^{-1} \left[
\begin{array}{ccc}
0\\
W
\end{array}
\right]\rangle\\
=&\langle\left[
\begin{array}{ccc}
Q_1\cdot\nabla_{x}W\\
0
\end{array}
\right], [H_0+Q_2\cdot \nabla_{x}-0]^{-1} \left[
\begin{array}{ccc}
0\\
W
\end{array}
\right]\rangle=0,
\end{align*}
(We recall that $W$ and $H_0$ are invariant under rotations of
$\mathbb{R}^{3}$. Without loss of generality one may therefore assume that $Q_1=|q_1|(1,0,0)$ and $Q_2=|q_2|(0,1,0).$ The above expression is then seen to vanish, because it is given by an integral over a function that is odd in the $1-$ direction.)

It follows that $D_{1}(P)$ is of the form
\begin{align}
D_{1}(P)=\rho_0 \frac{P}{|P|} \tilde{D}_1(P),
\end{align}
where $\tilde{D}_1$ is a scalar function given by
\begin{align}
\tilde{D}_1(P):=&\langle\left[
\begin{array}{ccc}
\frac{P}{|P|}\cdot\nabla_{x}W\\
0
\end{array}
\right], [H_0+P\cdot \nabla_{x}-0]^{-1} \left[
\begin{array}{ccc}
0\\
W
\end{array}
\right]\rangle\nonumber\\
=&\langle\left[
\begin{array}{ccc}
\partial_{x_3}W\\
0
\end{array}
\right], [H_0+|P|\ \partial_{x_3}-0]^{-1} \left[
\begin{array}{ccc}
0\\
W
\end{array}
\right]\rangle.
\end{align}

We propose to derive an explicit expression for $\tilde{D}_1$.
To render our calculation more transparent we diagonalize $H_0$, as in \eqref{eq:difA}-\eqref{eq:defL},
and find that
\begin{align}
\tilde{D}_1(P)=-2Re \langle \partial_{x_3}W,\ \frac{\sqrt{-\Delta}}{\sqrt{-\Delta+1}}[L-i |P|\partial_{x_3}+i0]^{-1}W\rangle,
\end{align}
where $L$ is as in \eqref{eq:defL}.
Fermi-Golden-Rule terms similar to $\tilde{D}_1$ have come up in many different contexts and have been used for purposes similar to ours in ~\cite{MR1681113,TsaiYau02,BuSu,GaSi2007, MR2182081}:
 By Fourier transformation and then passing to polar coordinates, one sees that
\begin{align*}
\tilde{D}_1(P)=&-2\pi Re\ i \int_{0}^{\pi} \int_{0}^{\infty} \frac{\rho^{4+4n} |\hat{V}(\rho)|^2\ cos\theta\ sin\theta}{\sqrt{1+\rho^2}}[\rho \sqrt{1+\rho^2}-|P|\rho\ cos\theta+i0]^{-1}\ d\rho d\theta\\
=&2\pi Im\  \int_{0}^{\frac{\pi}{2}} \int_{0}^{\infty} \frac{\rho^{3+4n} |\hat{V}(\rho)|^2\ sin\theta}{\sqrt{1+\rho^2}}[ \frac{\sqrt{1+\rho^2}}{cos\theta}-|P|+i0]^{-1}\ d\rho d\theta,
\end{align*}
where, in the expression on the right side of this equation, we have changed the integration region of
the variable $\theta$ to $[0,\frac{\pi}{2}]$. This is justified by observing that the contribution corresponding to the integration domain $[\frac{\pi}{2},\ \pi]$ vanishes, because $cos\theta<0$. (We have used that $W=(-\Delta)^{n}V$, see \eqref{eq:addFri2}.)

We now prove \eqref{eq:D1}.
If $|P|\geq 1+\epsilon_0$, with $\epsilon_0>0$, it is easy to apply Lemma \ref{LM:FGR}, below, and use the Fermi-Golden-Rule condition \eqref{eq:FGR} to prove that there is a $\delta_0(\epsilon_0)>0$
such that
$$\tilde{D}_1\leq -\delta_0(\epsilon_0).$$

Next, we determine the behavior of $\tilde{D}_1(P)$ when $q:=|P|-1>0$ is very close to 0,
($q\searrow 0$).
The integration region contributing to $\tilde{D}_1$ is the region where the function
 $$\frac{\sqrt{1+\rho^2}}{cos\theta}-1=\frac{1}{2}(\rho^2+\theta^2)+O(\rho^4+\theta^4)$$
 is small, i.e., where $\rho$ and $\theta$ are small.
Hence
\begin{align}
\tilde{D}_1= &2\pi Im \int_{0}^{\frac{\pi}{2}} \int_{0}^{\infty}  \frac{\rho^{3+4n} |\hat{V}(\rho)|^2\ sin\theta}{\sqrt{1+\rho^2}}[ \frac{\sqrt{1+\rho^2}-cos \theta}{cos\theta}-q+i0]^{-1}\times\\
                        &\times \chi( \frac{2[\sqrt{1+\rho^2}-cos \theta]}{cos\theta})\ d\rho d\theta,
\end{align}
where $\chi$ is a smooth cutoff function satisfying $\chi(s)=1$, for $s\leq \frac{1}{4}$, and $\chi(s)=0$ if $s\geq \frac{1}{2}.$ To simplify this expression, we introduce new variables, $r$ and $\alpha$, by setting
\begin{align}
r^2:=2\frac{\sqrt{1+\rho^2}-1}{cos\theta}=\rho^2[1+O(\rho^2+\theta^2)],\ \ \alpha^2:=2\frac{1-cos\theta}{cos\theta}=\theta^2(1+O(\theta)^2).
\end{align}
We then find that
\begin{align}
\tilde{D}_1=4\pi Im \int_{0}^{\infty}\int_{0}^{\infty} r^{3+4n}\ \alpha\ H(\alpha, r) [\alpha^2+r^2-2q+i0]^{-1}\ dr\ d\alpha,
\end{align}
where $H(\alpha, r)$ is a smooth real-valued function of rapid decay, with $H(0,0)=|\hat{V}(0)|^2=1.$
By re-scaling variables, $r\rightarrow \sqrt{2q}r$ and $\alpha\rightarrow \sqrt{2q} \alpha$, passing to polar coordinates and applying Lemma \ref{LM:FGR}, below, Eq. \eqref{eq:D1} is seen to follow.

\subsection{A simple identity}
\begin{lemma}\label{LM:FGR}
Suppose $f:\ [0,\infty)\rightarrow \mathbb{R}$ is a real-valued, bounded continuous function. Then
\begin{align}
Im \int_{0}^{\infty}[\rho-1+i0]^{-1} f(\rho)\ d\rho=-\pi f(1).
\end{align}
\end{lemma}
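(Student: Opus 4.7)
The plan is to recognize this identity as the Sokhotski--Plemelj distributional limit for the Cauchy kernel, restricted to the half-line. Since the pole at $\rho = 1$ lies in the interior of $[0,\infty)$, no boundary effects complicate the calculation and the proof reduces to a standard approximation-to-the-identity argument.

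First I would interpret the $+i0$ prescription as the limit $\lim_{\epsilon \to 0^+}[\rho - 1 + i\epsilon]^{-1}$. Taking imaginary parts gives
\begin{align*}
\mathrm{Im}\int_0^\infty [\rho - 1 + i0]^{-1} f(\rho)\, d\rho \;=\; -\lim_{\epsilon \to 0^+} \int_0^\infty \frac{\epsilon}{(\rho - 1)^2 + \epsilon^2}\, f(\rho)\, d\rho,
\end{align*}
and one recognizes the kernel $\tfrac{1}{\pi}\,\tfrac{\epsilon}{(\rho-1)^2 + \epsilon^2}$ as the Poisson kernel centered at $\rho = 1$, which converges to the Dirac mass $\delta(\rho - 1)$ as $\epsilon \to 0^+$. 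Hence, formally, the right-hand side equals $-\pi f(1)$.

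To make this rigorous, fix $0 < \delta < 1$ (so that $1 - \delta > 0$) and split the domain into the inner piece $[1-\delta, 1+\delta]$ and the outer piece $[0, 1-\delta] \cup [1+\delta, \infty)$. On the inner piece, add and subtract $f(1)$ inside the integrand: the $f(1)$ term yields $f(1)\cdot 2\arctan(\delta/\epsilon) \to \pi f(1)$ as $\epsilon \to 0^+$, while the remainder is bounded by $\pi \sup_{|\rho - 1|\leq \delta}|f(\rho) - f(1)|$, which vanishes as $\delta \to 0$ by continuity of $f$ at $1$. On the outer piece, use $\frac{\epsilon}{(\rho-1)^2 + \epsilon^2} \leq \frac{\epsilon}{(\rho - 1)^2}$, so that
\begin{align*}
\left|\int_{|\rho - 1| > \delta} \frac{\epsilon\, f(\rho)}{(\rho - 1)^2 + \epsilon^2}\, d\rho\right| \;\leq\; \|f\|_\infty \int_{|u| > \delta} \frac{\epsilon}{u^2}\, du \;=\; \frac{2\|f\|_\infty\, \epsilon}{\delta} \;\longrightarrow\; 0
\end{align*}
as $\epsilon \to 0^+$. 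Taking $\epsilon \to 0^+$ first and then $\delta \to 0$ yields $\pi f(1)$, and the overall sign gives the claim.

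There is no real obstacle here; the only mild subtlety is that $f$ is assumed merely bounded and continuous (not integrable), which forces one to estimate the tail using the pointwise bound $\epsilon/(\rho-1)^2$ rather than relying on dominated convergence against an $L^1$ majorant. Boundedness of $f$ is precisely what makes this tail estimate work.
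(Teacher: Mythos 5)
Your proof is correct. It and the paper's proof both begin identically — interpret $+i0$ as $\lim_{\epsilon\to 0^+}$, take the imaginary part to produce the Poisson kernel $-\epsilon/[(\rho-1)^2+\epsilon^2]$ — but then diverge in how the limit is evaluated. You use the classical approximation-to-identity argument: split into an inner window $[1-\delta,1+\delta]$ (where the kernel mass concentrates and continuity of $f$ controls the error) and an outer region (where the pointwise bound $\epsilon/u^2$ makes the contribution vanish for fixed $\delta$), then let $\epsilon\to 0^+$ before $\delta\to 0$. The paper instead shifts to $r=\rho-1$, writes the imaginary part as $\tfrac{1}{2i}\bigl([r+i\epsilon]^{-1}-[r-i\epsilon]^{-1}\bigr)$, and then rescales $r\mapsto\epsilon r$ so that the kernel becomes the fixed integrable function $1/(1+r^2)$ and the $\epsilon$-dependence moves into $f(1+\epsilon r)$; the limit is then a single application of dominated convergence against the $L^1$ majorant $\|f\|_\infty/(1+r^2)$, combined with $\int_{\mathbb{R}}(1+r^2)^{-1}\,dr=\pi$. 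Your two-region argument is more elementary and makes the role of boundedness and continuity very explicit; the paper's rescaling is more compact once one is willing to invoke dominated convergence. Both are standard and fully rigorous, so the choice is a matter of taste.
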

\begin{proof}
Setting $\rho-1=:r$, one verifies that
\begin{align*}
Im \int_{0}^{\infty}[\rho-1+i0]^{-1} f(\rho)\ d\rho=& Im \int_{-1}^{\infty} [r+i0]^{-1}f(1+r) \ dr\\
=& \frac{1}{2i} \displaystyle\lim_{\epsilon\rightarrow 0^+} \int_{-1}^{\infty}\bigg( [r+i\epsilon]^{-1}- [r-i\epsilon]^{-1}\bigg)f(1+r)\ dr\\
=& - \displaystyle\lim_{\epsilon\rightarrow 0^+} \int_{-1}^{\infty}\frac{\epsilon}{r^2+\epsilon^2}f(1+r)\ dr\\
=&- \displaystyle\lim_{\epsilon\rightarrow 0^+} \int_{-\frac{1}{\epsilon}}^{\infty}\frac{1}{1+r^2} f(1+\epsilon r)\ dr\\
=& -\pi f(1),
\end{align*}
where in the second but last step we have re-scaled the integration variable, $r\rightarrow \epsilon r $, and in the last step we have  used that $$\int_{-\infty}^{\infty} \frac{1}{1+r^2}\ dr=\pi. $$
\end{proof}

In the remaining sections we will have to derive various decay estimates that have been assumed so far.


\section{Proof of Lemma \ref{LM:est} when $|Q_2|> 1+10 \tau^{-\frac{2}{3}}$}\label{sec:GeqM}
In this section we analyze the decay of the function $F_{Q_1,Q_2}(\tau)$ in time, $\tau$, which will then yield Lemma \ref{LM:est}.
We may assume that $\tau>0$ is large. (For $\tau\sim O(1)$, one shows that
$\vert F_{Q_1,Q_2}(\tau)\vert$ is bounded, and this follows from a change of the contour of integration
introduced in the next section. We omit details.)

We start with an analysis of the factor $[G_{a,b}(\rho,\theta,\alpha)-i0]^{-2}$ on the right side of expression \eqref{eq:defY} for $F_{Q_1,Q_2}(\tau)$ in a neighborhood of the critical point $(\rho,\theta)=(\zeta,0)$ of the phase $Y_{\sigma}$. (Recall the definition of $\zeta$ in \eqref{eq:zeta}, and recall our discussion of the importance of controlling $G^{-1}$ near critical points of $Y_{\sigma}$ at the beginning of Section \ref{sec:D123}.)

Before we can state our results we must introduce two constants, $\zeta_0$ and
$R$: The constant $\zeta_0>0$ is the solution of the equation
\begin{align}\label{eq:zeta0}
\sqrt{1+\zeta_0^2}+\frac{\zeta_0^2}{\sqrt{1+\zeta_0^2}}=\sqrt{a^2+b^2}=|Q_2|>1,
\end{align}
where $a$ and $b$ have been introduced in Eq.\eqref{eq:formP1P2},  and $R$ is defined by
\begin{align}\label{eq:defR}
R:=\frac{1}{5} \frac{\zeta_0}{(1+\zeta_0^2)^{\frac{1}{2}}}.
\end{align}
(Recall that $\vert Q_2\vert=\sqrt{a^2+b^2}>1$, see Eq. (\ref{eq:formP1P2}).) The parameter $\zeta$ has been defined in Eq. \eqref{eq:zeta}.

The following lemma is an important ingredient in our proof of decay estimates on $F_{Q_1,Q_2}(\tau)$.

\begin{lemma}\label{LM:q1q2}
Assume that conditions (I) and (II) in Lemma \ref{LM:est} hold.
Then the following two statements hold.
\begin{itemize}
\item[(a)] For $Q_1$ and $Q_2$ as in Eqs. \eqref{eq:defP1P2} and \eqref{eq:formP1P2}, we have that
    \begin{align}\label{eq:q1q2Ge}
    |Q_1|,\ |Q_2|>1,
 \end{align}
and it is ``almost true" that $|Q_1|\geq |Q_2|$ and $\zeta_0\geq \zeta$, in the sense that for some constant $C>0$,
\begin{align}\label{eq:alMzetaze}
\frac{|Q_2|-1}{|Q_1|-1},\ \frac{\zeta}{\zeta_0}\leq 1+C\rho_{0}^{\frac{1}{2}},
\end{align}
\item[(b)]
In the neighborhood $\rho\in [0,\frac{6}{5}\zeta_0]$, $\theta\in [0,R]$ of the critical point $(\rho,\theta)=(\zeta,0)$ we have that
\begin{align}\label{eq:lowerBG}
|G_{a,b}^{-2}(\rho,\theta,\alpha)|\lesssim \zeta_0^{-4}.
\end{align}
\end{itemize}
\end{lemma}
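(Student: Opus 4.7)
The plan is to exploit the representations $Q_1 = P_s$ and $Q_2 = (t-s)^{-1}\int_s^t P_{s_1}\,ds_1$ together with the non-increase condition (I) and the directional stability condition (II) on $P_\cdot$. For part (a), condition (II) gives immediately $|Q_1|=|P_s| > 1$. For $|Q_2|>1$, I would decompose $P_{s_1}=|P_{s_1}|\hat{P}_{s_1}$; since (II) implies $\hat P_s\cdot\hat P_{s_1} \geq 1 - \tfrac{1}{2}\rho_0^{3/2}(1+\rho_0 s)^{-4/5}$, one gets
$$\hat P_s \cdot Q_2 = \tfrac{1}{t-s}\int_s^t |P_{s_1}|\,(\hat P_s\cdot\hat P_{s_1})\,ds_1 \geq (1-C\rho_0^{3/2})\tfrac{1}{t-s}\int_s^t |P_{s_1}|\,ds_1 > 1,$$
so $|Q_2|\geq \hat P_s\cdot Q_2 > 1$. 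The almost-inequality $|Q_1|\geq|Q_2|$ comes from (I): for $s,s_1 \in [\rho_0^{-1/10},T]$ with $s_1\geq s$ monotonicity of $|P_{\cdot}|$ gives $|P_{s_1}|\leq|P_s|$, while any contribution from the short initial interval $[0,\rho_0^{-1/10}]$ is handled by Proposition \ref{prop:smallVar}. The triangle inequality then yields $|Q_2|\leq \tfrac{1}{t-s}\int_s^t|P_{s_1}|\,ds_1 \leq |Q_1| + O(\rho_0^{9/10})$, and dividing by $|Q_1|-1\gtrsim \rho_0^{1/4}$ delivers the stated $\rho_0^{1/2}$ slack. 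The corresponding inequality $\zeta/\zeta_0 \leq 1+C\rho_0^{1/2}$ then follows by reading off from \eqref{eq:zeta} and \eqref{eq:zeta0} that $f(\zeta)=|Q_2|$ and $f(\zeta_0)=\sqrt{a^2+b^2}=|Q_1|$, together with strict monotonicity of $f(x):=(1+2x^2)/\sqrt{1+x^2}$.

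For part (b), I would decompose $G_{a,b}(\rho,\theta,\alpha) = G_0(\rho,\theta) - b\sin\theta\cos\alpha$ with $G_0:=\sqrt{1+\rho^2} - a\cos\theta$ and treat the $b$-term as a perturbation. Part (a) shows $Q_1$ is close to parallel with $Q_2$, so $b=|Q_1|\sin\phi$ is small: (II) bounds the angle $\phi$ between $P_s$ and $Q_2$ by $\phi\lesssim\rho_0^{3/4}$, giving $|b\sin\theta\cos\alpha|\leq|b|R\lesssim \rho_0^{3/4}\zeta_0$. At the critical point, using $\sqrt{1+\zeta^2}=|Q_2|-\zeta^2/\sqrt{1+\zeta^2}$ from \eqref{eq:zeta} together with $a=|Q_1|\cos\phi \geq |Q_2| - O(\rho_0^{1/2})$ from part (a), one finds
$$G_0(\zeta,0) = (|Q_2|-a) - \frac{\zeta^2}{\sqrt{1+\zeta^2}} \leq -\frac{\zeta^2}{\sqrt{1+\zeta^2}} + O(\rho_0^{1/2}),$$
whose magnitude is of order $\zeta_0^2$ since $\zeta\sim\zeta_0$ and part (a) gives $\zeta_0^2\gtrsim|Q_1|-1\gtrsim\rho_0^{1/4}$ on the bootstrap time window, which dominates the $\rho_0^{1/2}$ error.

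The main obstacle is propagating this pointwise lower bound $|G_0| \gtrsim \zeta_0^2$ to the whole rectangle $\rho\in[0,\tfrac{6}{5}\zeta_0]$, $\theta\in[0,R]$, i.e.\ ensuring $G_0$ stays bounded away from zero there. Since $\partial_\rho G_0 = \rho/\sqrt{1+\rho^2}\geq 0$ and $\partial_\theta G_0 = a\sin\theta\geq 0$ on the rectangle, $G_0$ is monotone in each variable, so it suffices to verify sign and magnitude at the extremal corners. At $(0,0)$ one has $G_0 = 1-a = -(|Q_1|-1)(1+o(1))\sim -\zeta_0^2$. The delicate corner is $(\tfrac{6}{5}\zeta_0,R)$; here the particular choice $R=\zeta_0/(5\sqrt{1+\zeta_0^2})$, combined with the small-$\zeta_0$ asymptotic $a\approx f(\zeta_0) = 1 + \tfrac{3}{2}\zeta_0^2 + O(\zeta_0^4)$, yields, after Taylor expansion,
$$\sqrt{1+\tfrac{36}{25}\zeta_0^2} - a\cos R = -\tfrac{19}{25}\zeta_0^2 + O(\zeta_0^4) + O(\rho_0^{1/2}),$$
which is $\leq -c\zeta_0^2$ for small $\rho_0$. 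Combined with the negligible $b$-perturbation $|b\sin\theta\cos\alpha|\lesssim \rho_0^{3/4}\zeta_0 \ll \zeta_0^2$, this gives $|G_{a,b}|\gtrsim \zeta_0^2$, i.e.\ $|G_{a,b}^{-2}|\lesssim \zeta_0^{-4}$, as required. The monotonicity of $G_0$ on the rectangle, which reduces an a priori two-parameter infimum to checking two explicit corners, is what makes the argument tractable; without it one would have to handle the critical-point singularity of the phase at $(\zeta,0)$ simultaneously with a near-vanishing of the denominator.
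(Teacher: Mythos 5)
Your overall scheme mirrors the paper's proof (Appendix~\ref{sec:almostT}--\ref{sec:lowerBG}): control components of $P_{s_1}$ using (I), (II) and Proposition~\ref{prop:smallVar}, reduce $\zeta/\zeta_0$ to monotonicity of $f(x)=\sqrt{1+x^2}+x^2/\sqrt{1+x^2}$, and for (b) reduce the estimate on the rectangle to an evaluation at the corner $(\tfrac65\zeta_0,R)$ via monotonicity in $\rho$ and $\theta$. Part (b) is essentially correct (the paper works with $-G_{a,b}$ directly rather than splitting off the $b$ term, but the corner computation and the conclusion $|G_{a,b}|\gtrsim\zeta_0^2$ are the same), and your arithmetic at the corner checks out.

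However, there is a real technical gap in your verification of $|Q_2|>1$: you project onto $\hat P_s$, i.e.\ you bound
\[
\hat P_s\cdot Q_2 \ \geq\ (1-C\rho_0^{3/2})\,\frac{1}{t-s}\int_s^t|P_{s_1}|\,ds_1,
\]
and assert the right side exceeds $1$. Condition (II) only guarantees the \emph{directional} deviation $|\hat P_s-\hat P_{s_1}|\leq\rho_0^{3/4}(1+\rho_0 s)^{-2/5}$ with the \emph{smaller} index $s$, so the loss $(1-C\rho_0^{3/2}(1+\rho_0 s)^{-4/5})$ is uniform in the integration variable $s_1$, whereas the magnitude excess $|P_{s_1}|-1\geq\rho_0^{1/4}(1+\rho_0 s_1)^{-2/5}$ decays in $s_1$. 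For $s$ fixed and $t$ large (roughly $t\gtrsim\rho_0^{-33/8}$), the averaged excess $\rho_0^{1/4}(1+\rho_0 t)^{-2/5}$ drops below the uniform loss $C\rho_0^{3/2}$, and the product $(1-C\rho_0^{3/2})\cdot\text{avg}$ is no longer provably $>1$. Since $T$ is unrestricted, this is not benign. The paper avoids this in Proposition~\ref{prop:decayP1P2} by rotating coordinates so that $P_t$ (the \emph{larger} time) is the axis: then condition (II) is applied with $s_1\leq t$, giving $|q_{2}(s_1)|,|q_{3}(s_1)|\lesssim\rho_0^{3/4}(1+\rho_0 s_1)^{-2/5}$ and hence $q_1(s_1)\geq 1+\rho_0^{1/4}(1+\rho_0 s_1)^{-2/5}(1-O(\rho_0^{5/4}))>1$ pointwise, with both the gain and the loss carrying the \emph{same} $s_1$-decaying factor. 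Integrating then gives $\tilde p_1>1$ and $|Q_2|\geq\tilde p_1>1$ uniformly in $t$. If you switch your anchor from $\hat P_s$ to $\hat P_t$ your argument goes through verbatim; as written, it does not.

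One further minor remark: your Taylor expansion $a\approx 1+\tfrac32\zeta_0^2+O(\zeta_0^4)$ and the resulting $-\tfrac{19}{25}\zeta_0^2+O(\zeta_0^4)$ establish the corner estimate only for small $\zeta_0$. Since assumption (B) allows $\zeta_0=O(1)$, you should either note that for $\zeta_0$ bounded below the estimate $|G_0|\gtrsim\zeta_0^2\sim 1$ is elementary, or use the exact algebraic identities as the paper does in \eqref{eq:thirdLine}, which avoid any smallness assumption on $\zeta_0$.
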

This lemma will be proven in Appendices \ref{sec:almostT} (statement ($a$)) and \ref{sec:lowerBG}
(statement ($b$)).

After identifying the critical points of the phase function $Y_{\sigma}$ in expression \eqref{eq:defY} and studying their neighborhoods, we decompose $F_{Q_1, Q_2}$ into four parts corresponding to the integration regions shown in Figure \ref{fig:digraph3} below, the parameters $\zeta,\ \zeta_0$ and $R$ having been introduced in Eqs. \eqref{eq:zeta}, \eqref{eq:zeta0} and \eqref{eq:defR}, respectively. These four contributions will be estimated in Lemmas \ref{LM:FR124} and \ref{LM:FR2} below.  Our estimates will then imply the desired bound on $F_{Q_1, Q_2}(\tau)$.
\begin{figure}[!htb]
\centering
\includegraphics[scale=1.1]{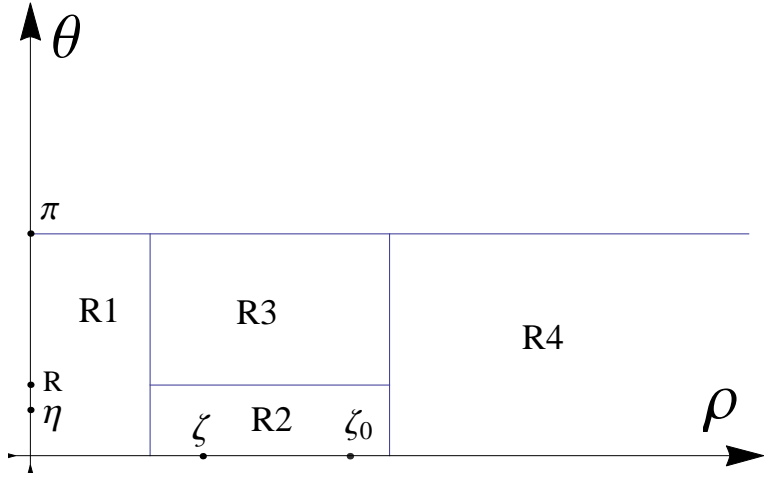}
\caption{Segmentation of Integration Regions}
\label{fig:digraph3}
\end{figure}

Corresponding to this partition of the domain of integration, the function $F_{Q_1, Q_2}$ introduced in \eqref{eq:defY} is split into four contributions,
\begin{align}\label{eq:segment}
F_{Q_1, Q_2}=F_{R1}+F_{R2}+F_{R3}+F_{R4},
\end{align}
with
\begin{align*}
F_{R1}:=&\int_{0}^{2\pi}\int_{0}^{\pi}\int_{0}^{\infty} H(\rho,\theta,\alpha, \tau)\ \chi(\frac{2\rho}{ \zeta })\ d\rho d\theta d\alpha\\
F_{R2}:=&\int_{0}^{2\pi}\int_{0}^{\pi}\int_{0}^{\infty}H(\rho,\theta,\alpha, \tau)\ \chi(\frac{6\theta}{5 R}) [1-\chi(\frac{2\rho}{ \zeta })]\chi(\frac{\rho}{ \zeta_0 })\  d\rho d\theta d\alpha\\
F_{R3}:=&\int_{0}^{2\pi}\int_{0}^{\pi}\int_{0}^{\infty} H(\rho,\theta,\alpha, \tau)\ [1-\chi(\frac{6\theta}{5 R})] [1-\chi(\frac{2\rho}{
\zeta })]\chi(\frac{\rho}{ \zeta_0 })\ d\rho d\theta d\alpha\\
F_{R4}:=&\int_{0}^{2\pi}\int_{0}^{\pi}\int_{0}^{\infty} H(\rho,\theta,\alpha, \tau)\ [1-\chi(\frac{\rho}{ \zeta_0 })]\ d\rho d\theta d\alpha
\end{align*}
where $$H(\rho,\theta,\alpha, \tau):=e^{-i\tau Y_{\sigma}(\rho,\theta)}[G_{a,b}(\rho,\theta,\alpha) -i0]^{-2}\frac{\rho^6 sin\theta}{1+\rho^2}g(\theta,\alpha)|\hat{V}(\rho)|^2, $$
and $\chi$ is a smooth cutoff function satisfying $\chi(x)=1$ if $x\leq \frac{11}{10}$ and $\chi(x)=0$ if $x\geq \frac{23}{20}.$

We now sketch the main ideas used in estimating $F_{Rk},\ k=1,2,3,4.$

The difficulties in estimating $F_{R1}, \ F_{R3}$ and $F_{R4}$ are connected to the fact that the denominator $G_{a,b}$ in the definition of $H(\rho, \theta, \alpha, \tau)$ vanishes at various points. To circumvent these difficulties we use the identity
\begin{align*}
[\rho G_{a,b}(\rho,\theta,\alpha)-i0]^{-2}=-\int_{0}^{\infty} du \int_u^{\infty} dz \text{  }e^{-i z [\rho \sqrt{1+\rho^2}-a \rho cos\theta -b \rho sin\theta cos\alpha]}\
\end{align*}
and find that
\begin{align}\label{eq:doubleInte}
F_{Rk}=\int_{0}^{\infty} d u \int_{u}^{\infty} dz \text{  }f_{Rk}(\tau,z)\ ,
\end{align}
where $f_{Rk}(\tau,z)$ is defined by
\begin{align}\label{fus. f_{Rk}}
f_{Rk}(\tau,z):=\int_{0}^{2\pi}d\alpha \int_{0}^{\pi}d\theta \int_{0}^{\infty} d\rho\text{   }e^{-i(\tau+z)X(\rho,\theta,\alpha)} \frac{\rho^8 sin\theta}{1+\rho^2} |\hat{V}(\rho)|^2 g(\theta,\alpha) \chi_{Rk}(\rho,\theta)\ d\rho d\theta d\alpha,
\end{align}
with $\chi_{Rk},\ k=1,3,4,$ cutoff functions, and
\begin{align}
X(\rho,\theta,\alpha):=\rho \sqrt{1+\rho^2}- [\frac{\tau}{\tau+z}\sigma+\frac{z}{\tau+z}a ]\rho cos\theta-\frac{z}{\tau+z} b \rho sin\theta cos\alpha.
\end{align}

In estimating $f_{R3}$, the key observation is that if, for an arbitrary, but fixed $\alpha$, the function $X$ does not have any critical points in the integration domain, and the critical points are located sufficiently far from the integration domain, then good decay estimates can be established.

In domains $R1$ and $R4$, the analysis (performed in the appendices, below) is really quite easy, because, after a certain transformation, two of the three integrals in the expressions for $f_{R1}$ and $f_{R4}$ can be evaluated in closed form, which facilitates the analysis.

The result of the analysis is summarized in the following lemma.
\begin{lemma}\label{LM:FR124}
\begin{align}
|f_{R1}|, |f_{R3}|, |f_{R4}|\lesssim (\tau+z)^{-\frac{7}{2}},
\end{align}
hence
\begin{align}
|F_{R1}|, |F_{R3}|, |F_{R4}|\lesssim \tau^{-\frac{3}{2}}
\end{align}
\end{lemma}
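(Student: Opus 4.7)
The target bound $|F_{Rk}|\lesssim \tau^{-3/2}$ follows from $|f_{Rk}(\tau,z)|\lesssim (\tau+z)^{-7/2}$ by integrating the representation \eqref{eq:doubleInte}:
\begin{align*}
|F_{Rk}|\lesssim \int_0^\infty du\int_u^\infty (\tau+z)^{-7/2}\,dz \lesssim \int_0^\infty (\tau+u)^{-5/2}\,du \lesssim \tau^{-3/2}.
\end{align*}
The real work is the pointwise bound, and I would organize it around the geometry of the phase $X$ after rotating the spherical coordinates so that the polar axis aligns with the interpolated vector $\vec Q'(\tau,z):=\tfrac{\tau}{\tau+z}(\sigma,0,0)+\tfrac{z}{\tau+z}(a,b,0)$; set $M(\tau,z):=|\vec Q'(\tau,z)|$. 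In these rotated coordinates $(\rho,\theta',\alpha')$ the phase reads $X=\rho\sqrt{1+\rho^2}-M\rho\cos\theta'$, with stationary points at $(\tilde\zeta(z),0)$, $\tilde\zeta$ solving $\sqrt{1+\tilde\zeta^2}+\tilde\zeta^2/\sqrt{1+\tilde\zeta^2}=M$, and at $(0,\eta'(z))$ with $\cos\eta'=1/M$. Lemma \ref{LM:q1q2}(a) confines $M(\tau,z)$ to the interval $[|Q_2|,|Q_1|(1+C\rho_0^{1/2})]$, so $\tilde\zeta(z)\in[\zeta,\zeta_0(1+C\rho_0^{1/2})]$ for all $z\ge 0$; in particular $\tilde\zeta(z)$ stays strictly above $\mathrm{supp}\,\chi_{R1}$ and strictly below $\mathrm{supp}\,\chi_{R4}$.

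For R1 and R4 I would follow the excerpt's hint by carrying out the angular $(\theta',\alpha')$ integration in closed form. Expanding the polynomial $g(\theta,\alpha)$ in spherical harmonics and transforming them under the rotation via Wigner's formula, the $\alpha'$-integration kills all non-zonal modes; the zonal pieces contribute the identity $\int_0^\pi P_l(\cos\theta')e^{i(\tau+z)M\rho\cos\theta'}\sin\theta'\,d\theta' = 2i^l j_l((\tau+z)M\rho)$, leaving $f_{Rk}$ as a finite sum of one-dimensional integrals
\begin{align*}
\sum_l C_l \int_0^\infty \rho^8\,\frac{|\hat V(\rho)|^2}{1+\rho^2}\,e^{-i(\tau+z)\rho\sqrt{1+\rho^2}}\,j_l((\tau+z)M\rho)\,\chi_{Rk}(\rho)\,d\rho.
\end{align*}
Using the representation $j_l(x)=\tfrac{1}{2}(-i)^l\int_{-1}^1 P_l(t)e^{ixt}\,dt$, this reduces to 1D oscillatory integrals with phase $\Phi_t(\rho):=\rho\sqrt{1+\rho^2}-Mt\rho$, $t\in[-1,1]$. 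On $\mathrm{supp}\,\chi_{R4}$ ($\rho\ge 11\zeta_0/10>\tilde\zeta(z)$), one has $|\Phi_t'(\rho)|=|\tfrac{1+2\rho^2}{\sqrt{1+\rho^2}}-Mt|\gtrsim \rho$ uniformly in $t$, so repeated integration by parts in $\rho$, aided by the Schwartz decay of $\hat V$, produces $(\tau+z)^{-N}$ for arbitrary $N$. On $\mathrm{supp}\,\chi_{R1}$ ($\rho\le 23\zeta/40<\tilde\zeta(z)$) the critical point $\rho_*(t)$, if present (i.e.\ if $Mt\ge 1$), is smaller than $\tilde\zeta(z)$ but may still enter R1 for $t$ close to $1/M$; for those $t$ one applies standard stationary phase, with $\rho^8$ controlling the boundary at $\rho=0$, while for the remaining $t$ iterated integration by parts again yields any polynomial decay.

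For R3 I would not rotate, since the cutoff $\chi(6\theta/(5R))$ is attached to the original $\theta$, and instead integrate by parts in $\theta$ directly. On $\mathrm{supp}\,\chi_{R3}$ one has $\theta\gtrsim R$, $\rho$ bounded away from $0$ and $\infty$, and
\begin{align*}
\partial_\theta X = \sigma'\rho\sin\theta - b'\rho\cos\theta\cos\alpha,
\end{align*}
where $\sigma':=\tfrac{\tau\sigma+za}{\tau+z}\gtrsim 1$ and $|b'|\le |b|\lesssim \rho_0^{3/4}$ (the latter bound is the ``almost parallel'' content of condition (II) in Lemma \ref{LM:est}, which forces the transverse component of $Q_1$ relative to $Q_2$ to be small). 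Hence $|\partial_\theta X|\gtrsim R$ uniformly in $(\rho,\alpha,z,\tau)$, and iterated integration by parts in $\theta$ produces $|f_{R3}|\lesssim (\tau+z)^{-N}$ for every $N$.

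The main obstacle I expect is the uniformity of all these estimates across $z\in[0,\infty)$: one must verify that the Wigner coefficients from the $z$-dependent rotation are bounded, that the critical radius $\tilde\zeta(z)$ genuinely stays outside the cutoff supports of $\chi_{R1}$ and $\chi_{R4}$ for every $z$, and that the lower bounds on $|\Phi_t'|$ and on $|\partial_\theta X|$ survive through the full range of the $z$-interpolation between $\vec Q_2$ and $\vec Q_1$. Each of these reduces to quantitative use of Lemma \ref{LM:q1q2}(a) together with condition (II) of Lemma \ref{LM:est}, and it is precisely here that the smallness of $\rho_0$ enters.
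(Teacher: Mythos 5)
The reduction from the $f_{Rk}$ bound to the $F_{Rk}$ bound is correct and agrees with the paper. The core of the lemma, however, is the pointwise bound $|f_{Rk}|\lesssim (\tau+z)^{-7/2}$, and your treatment has two substantive problems.

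\textbf{R3 has a genuine gap.} You claim a uniform lower bound $|\partial_\theta X|\gtrsim R$ on the support of $\chi_{R3}$, based on $\partial_\theta X=\sigma'\rho\sin\theta-b'\rho\cos\theta\cos\alpha$, $\sigma'\gtrsim1$, $|b'|\ll1$. But $\chi_{R3}$ only localizes away from $\theta=0$ (at scale $R$), \emph{not} away from $\theta=\pi$; as $\theta\to\pi$ one has $\sin\theta\to0$ while $\cos\theta\to-1$, so the first term shrinks to $0$ and the second term is $O(|b'|)$, and the two can cancel. The proposed uniform lower bound therefore fails near the antipode, and integration by parts in $\theta$ alone does not close. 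The paper avoids exactly this by splitting $R3=\Lambda_1+\Lambda_2$ with a cutoff $\chi_1(\theta)$ separating $\theta\lesssim\frac{3}{5}\pi$ from $\theta\gtrsim\frac{3}{5}\pi$: on $\Lambda_1$ one integrates by parts in $\theta$ (where $\sin(\theta-\beta)$ is bounded below since $\theta\leq\frac{4}{5}\pi$ and $|\beta|\ll1$), and on $\Lambda_2$ one instead integrates by parts in $\rho$, exploiting that $\cos(\theta-\beta)<0$ forces $\partial_\rho$ of the phase to be $\geq\sqrt{1+\rho^2}$. A fix to your argument would need an analogous split.

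\textbf{R1/R4 take a genuinely different route, and the R1 piece is not convincingly closed.} The paper also rotates so that the interpolated vector $q$ is along an axis, but then it integrates by parts in $\theta$ \emph{first} (using $\sin\theta\,e^{-i(\tau+z)q\rho\cos\theta}=\frac{1}{i(\tau+z)q\rho}\partial_\theta e^{-i(\tau+z)q\rho\cos\theta}$), producing boundary terms with phases $\rho\sqrt{1+\rho^2}\mp q\rho$ at the \emph{single} value $q$. The phase $\rho\sqrt{1+\rho^2}-q\rho$ has a unique critical point $\tilde\zeta(z)\in[\zeta,\,\zeta_0(1+O(\rho_0^{1/2}))]$, which, thanks to the design of the cutoffs, lies strictly between $\mathrm{supp}\,\chi_{R1}$ and $\mathrm{supp}\,\chi_{R4}$; one then rescales $\rho$ and integrates by parts repeatedly with no critical point ever present, yielding $(\tau+z)^{-N}$ times power-of-$\zeta$ or power-of-$\zeta_0$ losses that are recovered via $\zeta,\zeta_0\gtrsim\tau^{-1/3}$. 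Your Bessel-function expansion $j_l(x)=\frac12(-i)^l\int_{-1}^1P_l(t)e^{ixt}\,dt$ instead reintroduces a one-parameter family of phases $\rho\sqrt{1+\rho^2}-Mt\rho$, $t\in[-1,1]$, and hence a whole family of critical points $\rho_*(t)$. For $t$ near $1/M$ these do enter $R1$ (they sweep out $(0,\zeta]$), and the assertion that ``$\rho^8$ controls the boundary at $\rho=0$'' is not a proof: the stationary-phase weight is $\rho_*(t)^{15/2}(\tau+z)^{-1/2}$ for a fixed $t$, and without a careful $t$-integration argument this does not obviously reassemble into the needed $(\tau+z)^{-7/2}$ (indeed when $\zeta=O(1)$ the worst-case per-$t$ contribution is only $(\tau+z)^{-1/2}$). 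Your own closing remark about uniformity in $z$ and Wigner coefficients being ``the main obstacle'' also flags the extra machinery cost; the paper's route sidesteps both issues by diagonalizing $f_{R4}$ into two scalar $\Gamma_j$ (possible since $V$ is spherically symmetric, so only diagonal matrix entries are nonzero) and working with explicit $\theta$-integrals rather than an abstract harmonic expansion.

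\textbf{What is right.} The identification of the relevant interpolated vector $\vec Q'(\tau,z)$ and its norm $M(\tau,z)$, the observation via Lemma \ref{LM:q1q2}(a) that $\tilde\zeta(z)$ is trapped between $\zeta$ and $\zeta_0(1+O(\rho_0^{1/2}))$ and so stays out of the $R1$ and $R4$ cutoff supports, and the use of the Schwartz decay of $\hat V$ for $R4$ are all in line with the paper's reasoning; these are the key geometric facts. The issue is that your chosen machinery (Wigner/Bessel) creates a nuisance critical-point family that the paper's $\theta$-first integration by parts never produces, and your R3 argument omits the unavoidable $\theta\approx\pi$ split.
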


These bounds will be proven in Appendices \ref{sec:fR4}, \ref{sec:fR1} and \ref{sec:fR3}.

Among the four terms, $F_{R2}$ dominates, since a critical point is in the domain $R2$.

Using Lemma \ref{LM:est} and applying a standard stationary phase argument, one can prove the following lemma, (see Appendix \ref{sub:FR2}).
\begin{lemma}\label{LM:FR2}
\begin{align}
|F_{R2}|\lesssim \tau^{-\frac{3}{2}}.
\end{align}
\end{lemma}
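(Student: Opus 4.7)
The plan is a stationary phase analysis at the unique critical point $(\rho,\theta)=(\zeta,0)$ of the phase $Y_\sigma(\rho,\theta) = \rho\sqrt{1+\rho^2} - \sigma\rho\cos\theta$ lying in the support of the $R_2$ cutoffs, combined with the bounds of Lemma~\ref{LM:q1q2}. By part (a) of that lemma the inequality $\zeta/\zeta_0 \leq 1 + C\rho_0^{1/2}$ places $(\zeta,0)$ in the admissible $\rho$-range (with $\theta$ at the boundary value $0$), while part (b) provides the uniform amplitude bound $|G_{a,b}^{-2}| \lesssim \zeta_0^{-4}$ throughout $R_2$, together with an equivalent positive lower bound on $|G_{a,b}|$ that makes the integrand smooth in $(\rho,\theta,\alpha)$. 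Since $Y_\sigma$ does not depend on $\alpha$, I would integrate out $\alpha$ first, leaving a smooth $(\rho,\theta)$-amplitude.

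A direct calculation yields
\begin{equation*}
\operatorname{Hess} Y_\sigma(\zeta,0) = \operatorname{diag}\!\Bigl(\tfrac{\zeta(3+2\zeta^2)}{(1+\zeta^2)^{3/2}},\ \sigma\zeta\Bigr),
\end{equation*}
positive definite with both eigenvalues $\gtrsim \zeta$. In the regime $|Q_2| > 1 + 10\tau^{-2/3}$ this gives $\zeta \gtrsim \tau^{-1/3}$, hence $\tau\zeta \gtrsim \tau^{2/3}\to\infty$, so stationary phase is uniformly applicable. Away from $(\zeta,0)$ within $R_2$, $|\nabla Y_\sigma|$ is bounded below, and iterating the identity $e^{-i\tau Y_\sigma} = (-i\tau)^{-1}\,\nabla\!\cdot\!\bigl(\tfrac{\nabla Y_\sigma}{|\nabla Y_\sigma|^2}\,e^{-i\tau Y_\sigma}\bigr)$ with integration by parts produces decay of any polynomial order; boundary contributions at $\theta = 0$ vanish because the amplitude contains the factor $\sin\theta$.

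For the localized contribution, I would perform the stationary phase one variable at a time. The $\rho$-integration for fixed $\theta$ is a standard non-degenerate one-dimensional stationary phase at $\rho_*(\theta) = \zeta + O(\theta^2)$, giving
\begin{equation*}
\int e^{-i\tau Y_\sigma(\rho,\theta)} K(\rho,\theta)\,d\rho = \tau^{-1/2}\,A(\theta)\,e^{-i\tau\Psi(\theta)} + O(\tau^{-3/2}),
\end{equation*}
where $\Psi(\theta) := Y_\sigma(\rho_*(\theta),\theta)$ satisfies $\Psi'(\theta) = \sigma\rho_*(\theta)\sin\theta$ and $\Psi''(0) = \sigma\zeta > 0$. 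The remaining integral
\begin{equation*}
I(\tau) := \int_0^{R'}\sin\theta\cdot A(\theta)\,e^{-i\tau\Psi(\theta)}\,d\theta
\end{equation*}
is handled by the change of variable $w := \Psi(\theta) - \Psi(0)$, whose Jacobian $dw/d\theta = \sigma\rho_*(\theta)\sin\theta$ cancels the $\sin\theta$ in the integrand. This reduces $I(\tau)$ to a one-dimensional Fourier-type integral $\int_0^{w_*} e^{-i\tau w}\,\widetilde{A}(w)\,dw$ in which $\widetilde{A}$ is bounded but carries an integrable $w^{-1/2}$-singularity at $w=0$ coming from the inverse Jacobian. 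A single integration by parts in $w$ gives $|I(\tau)| \lesssim \tau^{-1}$, and combining with the $\tau^{-1/2}$ gain from the $\rho$-stationary phase yields $|F_{R_2}| \lesssim \tau^{-3/2}$.

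The main obstacle will be to make all the constants above uniform in $(a,b,\sigma,\zeta,\zeta_0)$: uniform lower bounds on the Hessian eigenvalues, uniform control on the smoothness of $A(\theta)$ (which involves derivatives of $G_{a,b}^{-2}$), and a uniform remainder estimate in the $\rho$-stationary phase. These all follow from the quantitative information in Lemma~\ref{LM:q1q2}, in particular the ``almost parallel'' and $|Q_1|\geq|Q_2|$ statements in part (a) together with the uniform lower bound $|G_{a,b}|\gtrsim \zeta_0^2$ on $R_2$ implicit in part (b), which keep all relevant ratios in a favorable range throughout the regime $|Q_2|>1+10\tau^{-2/3}$ considered in this section.
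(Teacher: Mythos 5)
Your proposal is correct and rests on the same pillars as the paper's argument: localize at the single critical point $(\rho,\theta)=(\zeta,0)$ of $Y_\sigma$ inside $R_2$, use Lemma~\ref{LM:q1q2}(b) to tame $G_{a,b}^{-2}$ so the amplitude is genuinely smooth there, and extract an extra half power of decay from the $\sin\theta$ factor that vanishes at the boundary critical point. The paper organizes this slightly differently: it integrates by parts once in $\theta$ \emph{first}, using $e^{-i\tau\sigma\rho\cos\theta}\sin\theta=(i\tau\sigma\rho)^{-1}\partial_\theta e^{-i\tau\sigma\rho\cos\theta}$, which peels off the $\tau^{-1}$ immediately and splits $F_{R2}$ into a boundary term $\Phi_1$ at $\theta=0$ (a one-variable integral in $\rho$, handled by a rescaled stationary phase giving $\zeta^{1/2}\tau^{-1/2}$) plus a bulk term $\Phi_2$ (handled by $\theta$-stationary phase near $\theta=0$). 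You do $\rho$-stationary phase first, then kill the $\theta$ integral via the substitution $w=\Psi(\theta)-\Psi(0)$ and one integration by parts; the Jacobian $\propto\sin\theta$ cancels the amplitude factor, and a single IBP in $w$ gives the $\tau^{-1}$. Both orderings produce $\tau^{-3/2}$, and tracking constants carefully both give the slightly stronger $\zeta^{1/2}\tau^{-3/2}$.

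Two places where your sketch is looser than the paper and would need work in a full write-up. First, your final sentence on $\widetilde{A}$ is internally inconsistent: the Jacobian cancellation makes $\widetilde{A}(w)$ itself bounded, and it is its $w$-\emph{derivative} that picks up the integrable $w^{-1/2}$ singularity (since $\theta\sim\sqrt{w}$), which is what a single integration by parts requires; as stated the sentence reads as if $\widetilde{A}$ were singular. Second, and more substantively, the uniformity question you flag at the end is precisely where the paper's rescaling $r=\zeta^{-1}(\rho-\zeta)$, $s=\zeta^3\tau$ earns its keep: after this change of variables the phase is $a_2(\zeta,\zeta r)r^2$ with $a_2\gtrsim 1$ uniformly and the amplitude, after absorbing the $\rho^6/\sqrt{1+\rho^2}\sim\zeta^6$, $G^{-2}\lesssim\zeta_0^{-4}$ and $d\rho=\zeta\,dr$, carries a clean $\zeta^2$ prefactor, so the stationary phase constants are manifestly $\zeta$-independent and the relevant large parameter is $s=\zeta^3\tau\gtrsim 1$. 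Your remark that ``$\tau\zeta\gtrsim\tau^{2/3}$'' is not quite the right large parameter (it should be $\tau\zeta^3$), although the needed inequality still holds in the regime $\zeta\gtrsim\tau^{-1/3}$ of Section~\ref{sec:GeqM}. In short: same route, different order of integrations; you would want to make the rescaling explicit to close the uniformity gap you yourself point out.
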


\section{Proof of Lemma \ref{LM:est} when $|Q_2|\leq  1+ 10 \tau^{-\frac{2}{3}}$}\label{sec:A}
In this section, we suppose that
\begin{align}
1< |Q_2|\leq  1+ 10 \tau^{-\frac{2}{3}}.
\end{align}
(Note that, by condition (II) of Lemma \ref{LM:est}, we have that \ $|Q_2|>1$; see also \eqref{eq:q1q2Ge}.)
For certain technical reasons, this regime must be studied differently from the one corresponding to
$|Q_2|>  1+ 10 \tau^{-\frac{2}{3}}$. In the latter regime the stationary phase method is applicable because the condition $|Q_2|>  1+ 10 \tau^{-\frac{2}{3}}$ entails a separation of different critical points; see Eq. \eqref{eq:critical}. In the former regime, i.e., in the situation studied in this section, the critical points can be arbitrarily close to $(0,0).$ This forces us to make use of appropriate techniques, to be described below, to derive the desired estimates. However, these techniques cannot be used to understand the regime $|Q_2|>  1+ 10 \tau^{-\frac{2}{3}}$. (The reason is that the constant $c_1$ in \eqref{eq:expDecay} plays an adverse role and might become arbitrarily large -- see \eqref{eq:adverse}, below. This is related to what is called 'critical scaling'. We shall not elaborate on this point here.)

In the present case the main difficulty is that the function $G$ may vanish at several points. To overcome it we deform the contour of integration appropriately, namely from $\mathbb{R}_{+}$, to the curve $\Gamma$ shown in Figure \ref{fig:digraph2}, below; with the straight line part parameterized by
\begin{align}\label{eq:curve}
\rho=|\rho| e^{-i\gamma},\ \rho\in [0,15],\ \gamma \in (0,\frac{\pi}{6}].
\end{align}
The idea is motivated by arguments presented in~\cite{MR1681113, MR0495958}. The deformation of the integration contour used here is legitimate, because $\hat{V}$ and $\overline{\hat{V}}$ can be extended to functions analytic in a strip around the real axis. (Recall that we have assumed that $V$ decays exponentially fast.)

\begin{figure}[!htb]
\centering
\includegraphics[scale=1]{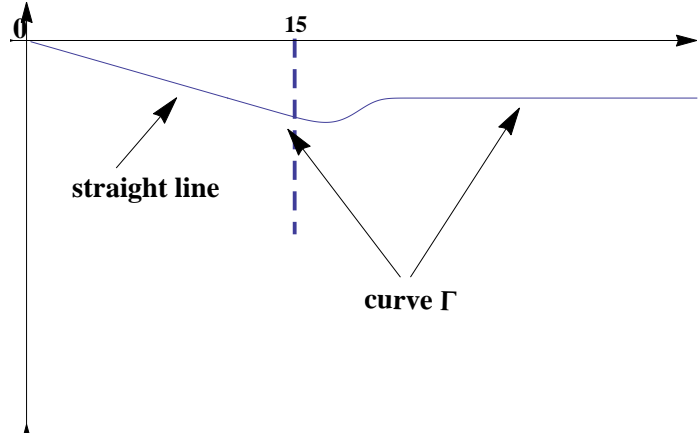}
\caption{Integration Contour for $\rho$}
\label{fig:digraph2}
\end{figure}

\begin{lemma}
For any $\rho\in \Gamma$, we have that $ImG\leq 0$, so that
 $$[G_{a,b}(\rho,\theta,\alpha)-i0]^{-2}=G^{-2}(\rho,\theta,\alpha),$$
and
\begin{align}\label{eq:rho4}
|G_{a,b}^{-2}(\rho,\theta,\alpha)|\lesssim |\rho|^{-4}.
\end{align}
\end{lemma}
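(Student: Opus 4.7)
The plan is to read off both claims by analyzing the principal branch of $\sqrt{\rho^2+1}$ on the deformed contour. Note that on the straight-line part of $\Gamma$, $\rho = |\rho|e^{-i\gamma}$ with $\gamma\in(0,\pi/6]$, so
\[
\rho^2+1=\bigl(1+|\rho|^2\cos 2\gamma\bigr)-i\,|\rho|^2\sin 2\gamma.
\]
The constraint $\gamma\leq\pi/6$ gives $\cos 2\gamma\geq 1/2$, so $\mathrm{Re}(\rho^2+1)\geq 1>0$ and $\mathrm{Im}(\rho^2+1)\leq 0$. Hence $\rho^2+1$ lies strictly away from the principal branch cut of $\sqrt{\cdot}$ on the negative real axis, and the principal square root lies in the closed fourth quadrant. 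In particular $\mathrm{Im}\sqrt{\rho^2+1}\leq 0$, and since $a\cos\theta+b\sin\theta\cos\alpha\in\mathbb{R}$, the first claim $\mathrm{Im}\,G_{a,b}\leq 0$ follows directly. Consequently the $-i0$ prescription in $(G_{a,b}-i0)^{-2}$ is unnecessary on $\Gamma$.

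For the quantitative bound $|G_{a,b}^{-2}|\lesssim|\rho|^{-4}$, I would apply the standard branch identity $|\mathrm{Im}\sqrt{z}|=\sqrt{(|z|-\mathrm{Re}\,z)/2}$, valid for $z$ in the closed lower half-plane with nonnegative real part. Setting $z=\rho^2+1$ and rationalizing via $|z|-\mathrm{Re}\,z=(\mathrm{Im}\,z)^2/(|z|+\mathrm{Re}\,z)$, a direct computation gives
\[
|\mathrm{Im}\sqrt{\rho^2+1}|=\frac{|\rho|^2\sin 2\gamma}{\sqrt{2\bigl(|\rho^2+1|+1+|\rho|^2\cos 2\gamma\bigr)}}.
\]
On the segment $|\rho|\in[0,15]$ the denominator is bounded above by a $\gamma$-dependent constant, while $\sin 2\gamma$ is a fixed strictly positive number. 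Therefore $|G_{a,b}|\geq|\mathrm{Im}\,G_{a,b}|=|\mathrm{Im}\sqrt{\rho^2+1}|\gtrsim|\rho|^2$, which yields the desired $|G_{a,b}^{-2}|\lesssim|\rho|^{-4}$, with the implicit constant depending only on $\gamma$ (fixed throughout this section) and uniform in $\theta,\alpha$ and in the momenta $a,b$. On the portion of $\Gamma$ returning to the real axis beyond $|\rho|=15$ one has $G_{a,b}\geq\sqrt{\rho^2+1}-|Q_2|\geq 15-(1+10\tau^{-2/3})>13$, so $G_{a,b}^{-2}$ is bounded by an absolute constant, a fact that will be incorporated separately into the subsequent integral estimates.

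The only genuine subtlety is ensuring that the principal branch formulas apply throughout the deformed contour; this is precisely what the restriction $\gamma\leq\pi/6$ (equivalently, $\cos 2\gamma\geq 1/2$) secures. A larger tilt angle could drive $\mathrm{Re}(\rho^2+1)$ negative for moderate $|\rho|$, at which point $\sqrt{\rho^2+1}$ would approach the branch cut and the clean sign control of $\mathrm{Im}\,G_{a,b}$ would be lost; that is, I expect the main bookkeeping step to be verifying that the chosen tilt is compatible both with the analytic extension of $\hat V$ (permitted by the exponential decay of $V$) and with the branch structure of $\sqrt{\cdot}$.
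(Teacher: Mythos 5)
Your argument is correct and is the natural one: the paper itself omits the proof as "straightforward," and tracking the principal branch of $\sqrt{\rho^2+1}$ along the tilted ray, then extracting $|\mathrm{Im}\sqrt{\rho^2+1}|$ from the standard identities $|\mathrm{Im}\sqrt{z}|=\sqrt{(|z|-\mathrm{Re}\,z)/2}$ and $|z|-\mathrm{Re}\,z=(\mathrm{Im}\,z)^2/(|z|+\mathrm{Re}\,z)$, is exactly the right mechanism. The observation that $\cos 2\gamma\geq 1/2$ for $\gamma\leq\pi/6$ keeps $\rho^2+1$ in the closed fourth quadrant, so $\mathrm{Im}\sqrt{\rho^2+1}\leq 0$ and the $-i0$ becomes vacuous (since $G_{a,b}$ differs from $\sqrt{\rho^2+1}$ by a real number), and the resulting lower bound $|G_{a,b}|\geq|\mathrm{Im}\sqrt{\rho^2+1}|\gtrsim|\rho|^2$ on the bounded segment $|\rho|\leq 15$ delivers \eqref{eq:rho4}. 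You are also right that the $|\rho|^{-4}$ bound cannot persist on the real-axis tail (there $G_{a,b}\sim|\rho|$, so $G_{a,b}^{-2}\sim|\rho|^{-2}$), and correctly note that the lemma is effectively invoked only on the portion covered by the cutoff $\chi(\rho)$.

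One small slip in the tail discussion: the quantity bounding $a\cos\theta+b\sin\theta\cos\alpha$ is $\sqrt{a^2+b^2}=|Q_1|=|P_s|$, not $|Q_2|=\sigma$. In this section $|Q_2|\leq 1+10\tau^{-2/3}$ is near $1$, but $|Q_1|$ can be as large as roughly $10$ by hypothesis (B) together with Proposition \ref{prop:smallVar} and the monotone decrease of $|P_t|$. So the lower bound on the tail should read $G_{a,b}\geq\sqrt{1+\rho^2}-|Q_1|\geq\sqrt{226}-10-O(\rho_0^{9/10})>5$, not $>13$. This is only a change of constant, so the conclusion (positivity and hence boundedness of $G_{a,b}^{-2}$ on that part) is unaffected.
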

The proof of this lemma is straightforward and is therefore omitted.

We conclude that the function $F_{Q_1, Q_2}$ takes the form
\begin{align}
F_{Q_1, Q_2}(\tau)=&\int_{0}^{2\pi}\int_{0}^{\pi}\int_{\Gamma} e^{-i\tau Y_{\sigma}(\rho,\theta)} G_{a,b}^{-2}(\rho,\theta,\alpha)\frac{\rho^6 sin\theta}{1+\rho^2}g(\theta,\alpha)|\hat{V}(\rho^2)|^2 \ d\rho d\theta d\alpha\nonumber\\
=&F_1+F_2, \label{eq:f1f2}
\end{align}
where
\begin{align*}
F_{1}:=&\int_{0}^{2\pi}\int_{0}^{\pi}\int_{\Gamma} e^{-i\tau Y_{\sigma}(\rho,\theta)}G_{a,b}^{-2}(\rho,\theta,\alpha)\frac{\rho^6 sin\theta}{1+\rho^2}g(\theta,\alpha)|\hat{V}(\rho^2)|^2 \chi(\rho)\ d\rho d\theta d\alpha\\
F_{2}:=&\int_{0}^{2\pi}\int_{0}^{\pi}\int_{\Gamma} e^{-i\tau Y_{\sigma}(\rho,\theta)}G_{a,b}^{-2}(\rho,\theta,\alpha)\frac{\rho^6 sin\theta}{1+\rho^2}g(\theta,\alpha)|\hat{V}(\rho^2)|^2 [1-\chi(\rho)]\ d\rho d\theta d\alpha
\end{align*} and $\chi$ is a smooth cutoff function satisfying $\chi(\rho)=1$ if $|\rho|\leq 12$ and $\chi(\rho)=0$ if $|\rho|\geq 14.$

It is easy to analyze $F_2$. We integrate by parts, using the identity
\begin{align*}
e^{-i\tau Y_{\sigma}}=\frac{1}{-i\tau \partial_{\rho}Y_{\sigma}} \partial_{\rho} e^{-i\tau Y_{\sigma}}.
\end{align*} A simple observation is that, for $|\rho|\geq 12,$
$|\frac{1}{\partial_{\rho}Y_{\sigma}}|\leq 1$. We may integrate by parts as many times as we wish and obtain that
\begin{align}\label{eq:estF2}
|F_{2}|\leq C(N)\tau^{-N},
\end{align}
for some finite constant $C(N)$ and for any $N> 0$.

We now turn to estimating $F_1$, the decay of which is caused by $e^{-i\tau Y_{\sigma}(\rho,\theta)}$.

By Eq. (\ref{eq:curve}), $\rho=|\rho|e^{-i\gamma}$, with $|\rho|\in [0,15]$ and
$0\leq \gamma \leq \frac{1}{6}\pi$.
It is convenient to introduce new variables, $r$ and $\beta$, by setting
\begin{align}
r:=\tau^{\frac{1}{3}}\rho,\ \beta:=\tau^{\frac{1}{3}}\theta.
\end{align}

\begin{lemma}\label{LM:exp}
For $|\rho|\leq 15$, we have that
\begin{align}\label{eq:expDecay}
-Im\ \tau Y_{\sigma}(\rho,\theta)\geq c_0 (\min\{r^2, r^3\}+r\beta^2)-c_1 r,
\end{align}
for some constants $c_0,\ c_1>0$.
\end{lemma}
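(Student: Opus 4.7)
The plan is to compute $-\tau\,\text{Im}\,Y_\sigma(\rho,\theta)$ directly on the straight segment of the contour, $\rho = |\rho|e^{-i\gamma}$ with $|\rho|\in[0,15]$ and $\gamma\in(0,\pi/6]$ fixed, by isolating three contributions. We would first rewrite
\begin{equation*}
Y_\sigma(\rho,\theta) \;=\; \rho\bigl[(\sqrt{1+\rho^2}-1) + (1-\sigma) + \sigma(1-\cos\theta)\bigr].
\end{equation*}
Using $\text{Im}\,\rho = -|\rho|\sin\gamma$, this decomposes $-\text{Im}\,Y_\sigma$ into three pieces: one from the cubic correction $(\sqrt{1+\rho^2}-1)$; a ``bad'' term $(1-\sigma)|\rho|\sin\gamma$ which is negative since $\sigma>1$ under the standing hypothesis of this section; and a nonnegative angular term $\sigma(1-\cos\theta)|\rho|\sin\gamma$. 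The strategy is to show that the two positive contributions dominate the bad one up to an $O(r)$ error.

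For the cubic piece we would split into two regimes. On $|\rho|\le 1/2$ the Taylor expansion yields $\rho(\sqrt{1+\rho^2}-1) = \rho^3/2 + O(\rho^5)$, so $-\text{Im}\,\rho(\sqrt{1+\rho^2}-1) = \tfrac12|\rho|^3\sin 3\gamma + O(|\rho|^5)$; since $\sin 3\gamma$ is strictly positive for $\gamma\in(0,\pi/6]$, multiplying by $\tau$ gives a lower bound $\gtrsim \tau|\rho|^3 = r^3$. On $1/2\le|\rho|\le 15$ we would write $\arg(\rho\sqrt{1+\rho^2}) = -\gamma + \tfrac12\arg(1+\rho^2)$, observe that $\arg(1+\rho^2)\in(-\pi/3,0)$ in this range, and conclude that the overall argument stays in a closed subinterval of $(-\pi/2,0)$; therefore $-\text{Im}\,\rho\sqrt{1+\rho^2}\gtrsim |\rho|^2$, giving a contribution $\ge \tau|\rho|^2 = \tau^{1/3}r^2 \ge r^2$ for $\tau\ge 1$. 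The two regimes combine to produce the $\min\{r^2,r^3\}$ lower bound. The angular contribution is treated by the elementary inequality $(1-\cos\theta)/\theta^2\ge 2/\pi^2$ on $[0,\pi]$, which together with $\sigma\ge 1$ yields
\begin{equation*}
\tau\sigma(1-\cos\theta)|\rho|\sin\gamma \;\ge\; \tfrac{2\sin\gamma}{\pi^2}\,\tau|\rho|\theta^2 \;=\; \tfrac{2\sin\gamma}{\pi^2}\,r\beta^2.
\end{equation*}

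The bad term is tamed by the standing assumption $\sigma-1\le 10\tau^{-2/3}$: writing $\tau(1-\sigma)|\rho|\sin\gamma = \tau^{2/3}(1-\sigma)\sin\gamma\cdot r$ and using $|\tau^{2/3}(1-\sigma)|\le 10$, one obtains the lower bound $-10\sin\gamma\cdot r$, which is precisely the $-c_1 r$ correction. The main obstacle is the uniform verification, in the moderate range $|\rho|\in[1/2,15]$, that $-\text{Im}\,\rho\sqrt{1+\rho^2}\gtrsim|\rho|^2$; for fixed $\gamma$ this reduces to showing that the continuous function $-\text{Im}(\rho\sqrt{1+\rho^2})/|\rho|^2$ is strictly positive on a compact interval, which follows from the argument computation above together with $|\sqrt{1+\rho^2}|\gtrsim |\rho|$ on this range. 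Assembling the three lower bounds then yields the lemma with constants $c_0,c_1>0$ depending only on the chosen value of $\gamma$.
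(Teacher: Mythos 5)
Your overall structure matches the paper's: you decompose $\tau Y_\sigma$ into a cubic piece, an angular piece, and a ``bad'' piece controlled by the standing assumption $\sigma-1\le 10\tau^{-2/3}$, and the angular and bad pieces are handled exactly as in the paper (your algebraic split differs cosmetically, but the estimates $\sigma\rho(1-\cos\theta)\mapsto c\,r\beta^2$ and $(1-\sigma)\rho\mapsto -10\sin\gamma\cdot r$ are the same in substance). The paper, however, treats the cubic piece in one stroke by rewriting $\tau\rho(\sqrt{1+\rho^2}-1)=\dfrac{r^3}{1+\sqrt{1+\rho^2}}$ and observing that $\arg\bigl(\tfrac{1}{1+\sqrt{1+\rho^2}}\bigr)=\tilde\gamma\in(0,\gamma)$, so the total argument is $-3\gamma+\tilde\gamma\in(-3\gamma,-2\gamma)\subset(-\pi/2,0)$; combined with the modulus bound $\bigl|\tfrac{r^3}{1+\sqrt{1+\rho^2}}\bigr|\gtrsim\min\{r^2,r^3\}$ this gives the desired lower bound uniformly, with no case split.

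The gap in your proposal is in the moderate range $\tfrac12\le|\rho|\le 15$. Your stated decomposition isolates the cubic piece as $\rho(\sqrt{1+\rho^2}-1)$, but in that range you estimate $-\operatorname{Im}\bigl[\rho\sqrt{1+\rho^2}\bigr]$ instead. These are not the same: $-\operatorname{Im}\bigl[\tau\rho(\sqrt{1+\rho^2}-1)\bigr]=-\operatorname{Im}\bigl[\tau\rho\sqrt{1+\rho^2}\bigr]-\tau|\rho|\sin\gamma$, and the subtracted term $\tau|\rho|\sin\gamma=\tau^{2/3}r\sin\gamma$ is \emph{not} an $O(r)$ error that can be absorbed into $-c_1 r$; for $\tau$ large it is huge. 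Moreover, it is of the same size as your claimed positive contribution: your bound reads $-\operatorname{Im}\bigl[\tau\rho\sqrt{1+\rho^2}\bigr]\ge c\,\tau|\rho|^2$ with $c$ of order $\sin\gamma$, while the subtraction is $\tau|\rho|\sin\gamma$, and their ratio is $|\rho|$, which on $[\tfrac12,1]$ is $\le 1$. So the difference is not controlled by what you wrote, and the step as stated fails. The argument can be repaired — one can show $|\sqrt{1+\rho^2}|\sin\phi-\sin\gamma\gtrsim|\rho|^2\sin 3\gamma$ with $\phi=\gamma-\tfrac12\arg(1+\rho^2)$, which recovers a bound $\gtrsim r^3\ge\min\{r^2,r^3\}$ — but this is essentially reconstructing the paper's argument the long way around. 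Also note that the $\min\{r^2,r^3\}$ in the statement really comes from the denominator $1+\sqrt{1+\rho^2}$ (bounded for $|\rho|\lesssim1$, growing like $|\rho|$ otherwise), and your two-case split does not actually line up with that mechanism. I'd recommend adopting the paper's single identity $\rho(\sqrt{1+\rho^2}-1)=\rho^3/(1+\sqrt{1+\rho^2})$ and estimating its argument and modulus directly.
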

This lemma will be proven in Subsection \ref{subsec:expde}.

To estimate $F_1$, we rescale the variables $r:=\rho \tau^{\frac{1}{3}}$ and $\beta= \theta \tau^{\frac{1}{3}}$ and then apply \eqref{eq:rho4} and \eqref{eq:expDecay} to obtain
\begin{align*}
|F_1|\lesssim \tau^{-\frac{5}{3}}\int_{0}^{2\pi}\int_{0}^{\tau^{\frac{1}{3}}\pi}\int_{0}^{\infty} e^{-c_0 (\min\{r^2,r^3\}+r\beta^2)+c_1 r} r^{2} \beta \frac{sin (\tau^{-\frac{1}{3}}\beta)}{\tau^{-\frac{1}{3}}\beta }\ dr d\beta d\alpha.
\end{align*}
This and the trivial bound $\frac{|sin\gamma|}{\gamma}\leq 1$, for any $\gamma\geq 0$, imply that
\begin{align}
|F_1|\lesssim \tau^{-\frac{5}{3}}.
\end{align}
With \eqref{eq:estF2} and \eqref{eq:f1f2} this then yields
\begin{align}
|F_{Q_1, Q_2}(\tau)|\lesssim \tau^{-\frac{5}{3}}.
\end{align}

To show that there is no singularity at $\tau=0$ we perform a direct estimate on \eqref{eq:f1f2} and find that it is bounded for $\tau\in [0, 1]$. Thus, for any $\tau\geq 0$,
\begin{align}
|F_{Q_1, Q_2}(\tau)|\lesssim (1+\tau)^{-\frac{5}{3}},
\end{align}
which is the desired estimate.

\subsection{Proof of Lemma \ref{LM:exp}}\label{subsec:expde}
We define a function $\tilde{Y}$ by
\begin{align}\label{eq:tildeY}
\tilde{Y}(r,\beta):=\tau Y_{\sigma}(\rho,\theta)=\frac{r^3}{1+\sqrt{1+\tau^{-\frac{2}{3}}r^2}} +r\frac{1-cos(\tau^{-\frac{2}{3}} \beta)}{\tau^{-\frac{2}{3}}}+ \frac{1-\sigma}{\tau^{-\frac{2}{3}}} r cos(\tau^{-\frac{1}{3}}\beta),
\end{align}
with $r$ and $\beta$ defined by $\rho=\tau^{-\frac{1}{3}} r$ and $\theta=\tau^{-\frac{1}{3}}\beta.$
The three terms on the right side of \eqref{eq:tildeY} are studied separately.

Recall that $r=|r|e^{-i\gamma}$, for some $\gamma\in (0,\frac{\pi}{6}]$; (see \eqref{eq:curve}). With the observation that
$$\frac{1}{1+\sqrt{1+\tau^{-\frac{2}{3}}r^2}}=
|\frac{1}{1+\sqrt{1+\tau^{-\frac{2}{3}}r^2}}|e^{i\tilde\gamma},$$ with $\tilde\gamma\in (0,\gamma)$,
this implies that
\begin{align}
Im[\frac{r^3}{1+\sqrt{1+\tau^{-\frac{2}{3}}r^2}}]=-|\frac{r^3}{1+\sqrt{1+\tau^{-\frac{2}{3}}r^2}}|\ sin(3\gamma- \tilde\gamma)\leq -\frac{1}{3} \min\{|r|^2, |r|^3\} \ sin(2\gamma),
\end{align}
where we have used that $3\gamma -\tilde\gamma\geq 2\gamma$ and $\gamma\in (0,\frac{\pi}{6})$.

To bound the second term on the right side of \eqref{eq:tildeY} we observe that, for any
$\beta\in [0,\pi t^{\frac{1}{3}}]$, there exists a constant $c_0>0$ such that
$$
\frac{1-cos(\tau^{-\frac{1}{3}} \beta)}{\tau^{-\frac{2}{3}}}\geq c_0 \beta^2.
$$
This shows that
\begin{align}
Im [r\frac{1-cos(\tau^{-\frac{1}{3}} \beta)}{\tau^{-\frac{2}{3}}}]=-sin\gamma \ |r|\frac{1-cos(\tau^{-\frac{1}{3}} \beta)}{\tau^{-\frac{2}{3}}}\leq -c_0\sin\gamma\ |r|\beta^2.
\end{align}

Only the last term in \eqref{eq:tildeY} may have a positive imaginary part in the regime where
$(\sigma-1)\tau^{\frac{2}{3}}= (|Q_2|-1)\tau^{\frac{2}{3}}\leq 10$:
\begin{align}\label{eq:adverse}
Im\frac{1-\sigma}{\tau^{-\frac{2}{3}}} r cos(\tau^{-\frac{1}{3}}\beta)=-\frac{1-\sigma}{\tau^{-\frac{2}{3}}}|r| sin\gamma \ cos(\tau^{-\frac{1}{3}}\beta)\leq  10 |r|\ sin\gamma.
\end{align}

Collecting the above estimates and using the fact that $\gamma\in (0,\frac{\pi}{6}]$ we arrive at the desired estimates.

\section{Proof of Lemma \ref{LM:HInfty}}\label{sec:Hinf}
In what follows we only prove \eqref{eq:singular}. The proofs of the other two bounds in
Lemma \ref{LM:HInfty} are easier, thanks to either the presence of a gradient $\nabla_{x}$, (which, in Fourier space, yields an additional factor of a momentum $k$), or to the absence of a singularity in $H_{\infty}^{-1}$, respectively.

After diagonalizing the operator $H_0$, as in \eqref{eq:diag}, one finds that
\begin{align}
e^{tH_{\infty}} H_{\infty}^{-1} \left[
\begin{array}{ccc}
0\\
W^{Y_{0}}
\end{array}
\right]=&-Re\left[
\begin{array}{cc}
\frac{\sqrt{-\Delta}}{2\sqrt{-\Delta+1}} e^{it(L-iP_{\infty}\cdot \partial_{x})} [L-iP_\infty\cdot\nabla_{x}]^{-1}W^{Y_0}\\
\frac{i}{2} e^{it(L-iP_{\infty}\cdot \partial_{x})} [L-iP_\infty\cdot\nabla_{x}]^{-1}W^{Y_0}.
\end{array}
\right]
=:-Re\left[
\begin{array}{cc}
\Psi_1\\
\Psi_2
\end{array}
\right]. \nonumber
\end{align}

In what follows we only study $\Psi_2$. (The study of $\Psi_1$ is easier, thanks to the presence of
$\sqrt{-\Delta}$.)

Fourier transformation yields
\begin{align*}
\Psi_2(t)
=C\int_{\mathbb{R}^3}  e^{it [|k|\sqrt{1+|k|^2}-\mu\cdot k]}[|k|\sqrt{1+|k|^2}-P_\infty \cdot k]^{-1}  |k|^{\frac{3}{2}}\hat{V}(k)\ dk,
\end{align*}
where $\mu$ is defined as $\mu:=\frac{-x+Y_0-tP_{\infty}}{t}\in \mathbb{R}^3.$
After rotating the coordinate axes we can assume that
\begin{align}
\mu= (\sigma,0,0)\ \text{and}\ P_{\infty}= (p_1,p_2,0).
\end{align}

Introducing polar coordinates one arrives at the expression
\begin{align}
\Psi_2(t)
=& \int_{0}^{2\pi}\int_{0}^{\pi}\int_{0}^{\infty} e^{it\rho[ \sqrt{1+\rho^2}-\sigma cos\theta]}\frac{\rho^{\frac{5}{2}}sin\theta}{\sqrt{1+\rho^2}-p_1 cos\theta-p_2 sin\theta cos\alpha} f(\theta,\alpha)\hat{V}(\rho) \ d\rho d\theta d\alpha\nonumber\\
=&\int_{0}^{2\pi}\int_{0}^{\pi}\int_{0}^{\infty} e^{it\rho[ \sqrt{1+\rho^2}-\sigma cos\theta]}\frac{\rho^{\frac{5}{2}}sin\theta}{\sqrt{1+\rho^2}-q cos(\theta-\beta)} f(\theta,\alpha)\hat{V}(\rho) \ d\rho d\theta d\alpha, \label{eq:estD2t}
\end{align}
where $f(\theta,\alpha)$ is a polynomial in $sines$ and $cosines$ of $\theta$ and $\alpha$.
In the last step we have rewritten $$p_1 cos\theta+p_2 sin\theta cos\alpha=q cos(\theta-\beta),$$
with $q:=\sqrt{p_1^2+p_2^2 cos^2\alpha}\leq |P_{\infty}|=1$, and $\beta\in [0,2\pi)$ satisfying
$cos\beta=\frac{p_1}{q}$, $sin\beta=\frac{p_2 cos\alpha}{q}.$

The desired estimate will follow after integrating by parts. There is only one minor difficulty: one must show that the singularity produced by $\frac{1}{\sqrt{1+\rho^2}-q cos(\theta-\beta)}$ and derivatives thereof is compensated by other factors, so as to yield an integrable integrand on the right side of \eqref{eq:estD2t}.

If $|\sigma|\geq \frac{1}{2}$, we integrate by parts in $\theta$, using
\begin{align*}
sin\theta \rho e^{-it \rho cos\theta} =\frac{1}{it}\partial_{\theta}e^{-i t \rho cos\theta},
\end{align*}
 to find that
\begin{align*}
\Psi_2(t)=&-\frac{1}{it} \int_{0}^{2\pi}\int_{0}^{\pi}\int_{0}^{\infty} e^{it\rho[ \sqrt{1+\rho^2}-\sigma cos\theta]} \rho^{\frac{3}{2}} f(\theta,\alpha) \frac{q sin(\theta-\beta)}{[\sqrt{1+\rho^2}-q cos(\theta-\beta)]^2} \hat{V}(\rho) d\rho d\theta d\alpha+\cdots\\
=:& \tilde{\Psi}_{2}(t)+\cdots,
\end{align*}
(recall that $\beta$ is independent of $\theta$). The contributions not displayed explicitly are easier to control, because the integrand is less singular. We will not study them.

In order to control the denominator in the integrand of the term defining $\tilde{\Psi}_2$, we use that
\begin{align*}
\sqrt{1+\rho^2}-q cos(\theta-\beta)\gtrsim \rho^2+(\theta-\beta)^2,
\end{align*}
using that $0\leq q\leq 1$.
Hence
\begin{align}
\rho^{\frac{3}{2}}|\frac{ sin(\theta-\beta)}{[\sqrt{1+\rho^2}-q cos(\theta-\beta)]^2}|\lesssim  \frac{\rho^{\frac{3}{2}}}{[\rho^2+(\theta-\beta)^2]^{\frac{3}{2}}}
\end{align}
is integrable in the variables $\rho$ and $\theta$ on the open set $\rho^2+(\theta-\beta)^2\leq 1.$
This and the fast decay of $\hat{V}(\rho)$ imply that the integral is finite, and hence
\begin{align}\label{eq:d2sigma}
|\Psi_2(t)|,\ |\tilde{\Psi}_2(t)|\lesssim t^{-1}.
\end{align}

Next, we consider the case where $|\sigma|<\frac{1}{2}$. We integrate by parts, using the identity
\begin{align*}
e^{it\rho[ \sqrt{1+\rho^2}-\sigma cos\theta]}
=\frac{1}{it [\sqrt{1+\rho^2}+\frac{\rho^2}{\sqrt{1+\rho^2}}-\sigma cos\theta]}\partial_{\rho}
e^{it\rho[ \sqrt{1+\rho^2}-\sigma cos\theta]}.
\end{align*}
The denominator is bounded away from 0,
\begin{align*}
\sqrt{1+\rho^2}+\frac{\rho^2}{\sqrt{1+\rho^2}}-\sigma cos\theta\geq \frac{1}{2},
\end{align*}
and we finally get that
\begin{align}
|\Psi_2(t)|\lesssim t^{-1}.
\end{align}

This bound, together with \eqref{eq:d2sigma}, implies that, for an arbitrary $\sigma\in \mathbb{R}$, \begin{align}
|\Psi_2(t)|\lesssim t^{-1}.
\end{align}

Starting from the expression in \eqref{eq:estD2t}, one easily sees that $\Psi_2(t)$ is bounded uniformly, for $|t|\leq 1.$

This completes our proof of \eqref{eq:singular}.

\appendix

\section{Proof of Statement (a) of Lemma \ref{LM:q1q2}}\label{sec:almostT}
To prepare the ground for later analysis we prove a result somewhat stronger than Statement (a) in Lemma \ref{LM:q1q2}. For the convenience of the reader we first repeat some definitions: The parameters $\sigma, a$ and $b$ are defined by setting
\begin{align*}
Q_1=(a,b,0),\ Q_2=(\sigma,0,0),
\end{align*}
with $Q_1$ and $Q_2$ given by
\begin{align*}
Q_1:=P_s\ \text{and}\ Q_2:=\frac{1}{t-s} \int_{s}^{t} P_{s_1}\ ds_1;
\end{align*}
see Eqs. \eqref{eq:formP1P2} and \eqref{eq:defP1P2}, respectively. (Here we are using the spherical symmetry of $W$ to turn $Q_1$ and $Q_2$ into special directions.) A parameter $\zeta_0$ has been introduced as the solution of Eq. \eqref{eq:zeta0}:
\begin{align*}
\sqrt{1+\zeta_0^2}+\frac{\zeta_0^2}{\sqrt{1+\zeta_0^2}}=\sqrt{a^2+b^2}=|Q_2|>1.
\end{align*}
Finally, a parameter $\zeta$ has been defined as the solution of Eq. \eqref{eq:zeta}, viz.
\begin{align*}
\sqrt{1+\zeta^2}+
\frac{\zeta^2}{\sqrt{1+\zeta^2}}=\sigma \text{    }\text{and}\ \ cos(\eta)=\frac{1}{\sigma}.
\end{align*}

\begin{proposition}\label{prop:decayP1P2}
If conditions (I) and (II) of Lemma \ref{LM:est} hold then
\begin{align}\label{eq:almostCL}
a,\ \sigma>1,\ \text{and}\ |b|\lesssim \rho_{0}^{\frac{1}{2}} (\sqrt{a^2+b^2}-1).
\end{align}
Moreover,
\begin{align}\label{eq:almostDec}
\frac{\zeta}{\zeta_0},\ \frac{\sigma-1}{\sqrt{a^2+b^2}-1}\leq 1+O( \rho_{0}^{\frac{1}{2}}).
\end{align}
\end{proposition}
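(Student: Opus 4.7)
\textbf{Proof plan for Proposition \ref{prop:decayP1P2}.}

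The plan is to extract from conditions (I) and (II) two complementary pieces of information. Condition (I), combined with Proposition \ref{prop:smallVar} on the interval $[0,\rho_0^{-1/10}]$ where monotonicity is not available, yields an almost-comparison $|Q_1|\geq |Q_2|$, from which the ratio bounds in \eqref{eq:almostDec} follow directly. Condition (II), which guarantees near-constancy of the direction $\hat n_t:=P_t/|P_t|$, delivers the angular estimate $|b|\lesssim \rho_0^{1/2}(\sqrt{a^2+b^2}-1)$ and, upon projecting $Q_2$ onto $\hat n_s$, the positivity $a,\sigma>1$. Throughout, the coordinates are fixed so that $Q_2=(\sigma,0,0)$ and $Q_1=(a,b,0)$.

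First, I would establish the magnitude comparison. On $\rho_0^{-1/10}\leq s\leq s_1\leq T$, condition (I) gives $|P_{s_1}|\leq |P_s|$; on $[0,\rho_0^{-1/10}]$, Proposition \ref{prop:smallVar} bounds the variation of $|P_\cdot|$ by $O(\rho_0^{9/10})$. Stitching the two regimes, $|P_{s_1}|\leq |P_s|+C\rho_0^{9/10}$ for every $s\leq s_1\leq t$, with the error vanishing once $s\geq\rho_0^{-1/10}$. The triangle inequality applied to $Q_2=(t-s)^{-1}\int_s^t P_{s_1}\,ds_1$ then yields $|Q_2|\leq |Q_1|+C\rho_0^{9/10}$. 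Combining with the gap $|Q_1|-1\geq \rho_0^{1/4}(1+\rho_0 s)^{-2/5}$ from (II), and noting that $(1+\rho_0 s)^{2/5}$ stays bounded whenever $s\leq\rho_0^{-1/10}$, one obtains $(|Q_2|-1)/(|Q_1|-1)\leq 1+C\rho_0^{1/2}$. The bound on $\zeta/\zeta_0$ then follows from the strict monotonicity of $f(\rho):=\sqrt{1+\rho^2}+\rho^2/\sqrt{1+\rho^2}$ together with the expansion $f(\rho)-1\sim\tfrac{3}{2}\rho^2$ near the origin, which gives $\zeta/\zeta_0\sim\sqrt{(|Q_2|-1)/(|Q_1|-1)}\leq 1+C\rho_0^{1/2}$ in the small-$\rho$ regime and an even better bound on the range where $f'$ is bounded below.

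Next, for the angular estimate, I would decompose $Q_2=\bar m\,\hat n_s+r$, where $\bar m:=(t-s)^{-1}\int_s^t|P_{s_1}|\,ds_1$ and $r:=(t-s)^{-1}\int_s^t|P_{s_1}|(\hat n_{s_1}-\hat n_s)\,ds_1$. Condition (II) gives $|\hat n_{s_1}-\hat n_s|\leq \rho_0^{3/4}(1+\rho_0 s)^{-2/5}$, and the uniform upper bound $|P_{s_1}|\lesssim 1$ from the first step yields $|r|\leq C\rho_0^{3/4}(1+\rho_0 s)^{-2/5}$. Since $Q_2$ has vanishing $y$- and $z$-components, $\bar m|\hat n_{s,y}|,\,\bar m|\hat n_{s,z}|\leq |r|$, and $\bar m\geq 1$ yields $|\hat n_{s,y}|\leq C\rho_0^{3/4}(1+\rho_0 s)^{-2/5}$. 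Hence $|b|=|P_s|\,|\hat n_{s,y}|\lesssim \rho_0^{3/4}(1+\rho_0 s)^{-2/5}$; dividing by $|P_s|-1\geq \rho_0^{1/4}(1+\rho_0 s)^{-2/5}$ gives $|b|/(|P_s|-1)\leq C\rho_0^{1/2}$, as desired. For positivity, $a^2=|P_s|^2-b^2=|P_s|^2(1-O(\rho_0))$, so $a-1\geq (|P_s|-1)(1-O(\rho_0^{1/2}))>0$; and the sharper second-order bound $\sigma\geq Q_2\cdot\hat n_s=\bar m+(r\cdot\hat n_s)\geq\bar m(1-C\rho_0^{3/2})$, which exploits the fact that $\hat n_{s_1}\cdot\hat n_s-1=-|\hat n_{s_1}-\hat n_s|^2/2$, combined with $\bar m-1\geq \rho_0^{1/4}(1+\rho_0 t)^{-2/5}\gg \rho_0^{3/2}$, forces $\sigma>1$.

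The main technical obstacle is the bookkeeping around the threshold $s=\rho_0^{-1/10}$: condition (I) asserts monotonicity only past this time, so the $O(\rho_0^{9/10})$ slack coming from Proposition \ref{prop:smallVar} on the initial interval must be shown to be strictly dominated by the gap $|P_s|-1\geq \rho_0^{1/4}(1+\rho_0 s)^{-2/5}$. The careful separation of exponents -- $\tfrac{9}{10}$ versus $\tfrac{1}{4}$ for magnitudes, $\tfrac{3}{4}$ versus $\tfrac{1}{4}$ for the transverse direction, and $\tfrac{3}{2}$ versus $\tfrac{1}{4}$ for the $\sigma$-projection -- is precisely what produces the $\rho_0^{1/2}$ buffer claimed in \eqref{eq:almostCL}--\eqref{eq:almostDec}.
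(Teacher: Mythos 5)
Your overall strategy is quite close to the paper's: use condition (I), supplemented by Proposition \ref{prop:smallVar} on the short initial interval, for the magnitude ratio, and use the near-constancy of direction from condition (II) for the transverse bound on $b$. Your arguments for $(|Q_2|-1)/(|Q_1|-1)\leq 1+C\rho_0^{1/2}$, for $|b|\lesssim\rho_0^{1/2}(\sqrt{a^2+b^2}-1)$, and for $a>1$ are all sound.

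However, your argument for $\sigma>1$ contains a genuine gap. You decompose $Q_2$ relative to $\hat n_s$ and arrive at $\sigma\geq\bar m + r\cdot\hat n_s$ with the lower bound $r\cdot\hat n_s\geq -C\rho_0^{3/2}(1+\rho_0 s)^{-4/5}\geq -C\rho_0^{3/2}$, since $|\hat n_{s_1}-\hat n_s|\leq\rho_0^{3/4}(1+\rho_0 s)^{-2/5}$ \emph{uniformly} in $s_1\in[s,t]$ (condition (II) gives the bound in terms of the earlier time $s$). You then assert $\bar m-1\geq\rho_0^{1/4}(1+\rho_0 t)^{-2/5}\gg\rho_0^{3/2}$. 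But this comparison fails for large $t$: since $(1+\rho_0 t)^{-2/5}$ is not bounded below, the inequality $\rho_0^{1/4}(1+\rho_0 t)^{-2/5}>\rho_0^{3/2}$ requires $1+\rho_0 t\lesssim\rho_0^{-25/8}$. The Proposition must hold on all of $[0,T]$, and $T$ is ultimately pushed to $\infty$ in the bootstrap, so your argument breaks down for large $t$. The source of the trouble is that you are comparing a correction term whose decay is fixed at rate $(1+\rho_0 s)^{-4/5}$ (depending on the interval's left endpoint $s$) against a lower bound on $\bar m-1$ that averages $(1+\rho_0 s_1)^{-2/5}$ and so can be as small as $\sim(1+\rho_0 t)^{-2/5}$; these two quantities have no favorable pointwise relation once $t\gg s$.

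The paper sidesteps this by choosing coordinates aligned with $P_t$ rather than $P_s$, i.e., by centering the direction decomposition at $\hat n_t$. Then for $s_1\leq t$ condition (II) gives $|\hat n_{s_1}-\hat n_t|\leq\rho_0^{3/4}(1+\rho_0 s_1)^{-2/5}$, so the quadratic correction in the component of $P_{s_1}$ along $\hat n_t$ is $\lesssim\rho_0^{3/2}(1+\rho_0 s_1)^{-4/5}$, which decays \emph{in $s_1$}. This can be compared pointwise inside the integral to the gap $\rho_0^{1/4}(1+\rho_0 s_1)^{-2/5}$, yielding $q_1(s_1)\geq 1+\rho_0^{1/4}(1+\rho_0 s_1)^{-2/5}(1-O(\rho_0^{1/2}))>1$ for every $s_1$, and hence $\sigma=|Q_2|\geq\tilde p_1>1$ with no constraint on $t$. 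You should replace your decomposition $Q_2=\bar m\,\hat n_s+r$ by the analogous one relative to $\hat n_t$, and the rest of your argument goes through.
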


\begin{proof}

The first inequality in condition (II) of Lemma \ref{LM:est} implies that
$$|P_t| \geq 1+\rho_{0}^{\frac{1}{4}}(1+\rho_0 t)^{-\frac{2}{5}}.$$
We temporarily change our coordinates such that $P_t = (|P_t|,0,0)$. Given an arbitrary time $s_1\in [s,t]$, we let $q_1(s_1),\ q_2(s_1), q_3(s_1)$ denote the three components of the momentum vector $P_{s_{1}}$.
The second inequality in condition (II) of Lemma \ref{LM:est} and the fact that $|P_{s_1}|$ is bounded then imply that
\begin{align}\label{eq:inside1}
|q_2(s_1)|,\ |q_3(s_1)|\lesssim \rho_{0}^{\frac{3}{4}} (1+\rho_0 s_1)^{-\frac{2}{5}}.
\end{align}
These bounds and the first inequality in condition (II) of Lemma \ref{LM:est}, i.e.,
$$|P_{s_1}|\geq 1+ \rho_{0}^{\frac{1}{4}} (1+\rho_0 s_1)^{-\frac{2}{5}}$$
then imply that
\begin{align}\label{eq:inside2}
q_1(s_1)\geq 1+ \rho_{0}^{\frac{1}{4}} (1+\rho_0 s_1)^{-\frac{2}{5}}(1-O(\rho_{0}^{\frac{1}{2}}))
\end{align}

Consequently, defining $\tilde{p}_1, \tilde{p}_2$ and $\tilde{p}_3$ by
\begin{align}\label{eq:int3d}
\frac{1}{t-s}\int_{s}^{t} P_{s_1} ds_1=: (\tilde{p}_1,\tilde{p}_2,\tilde{p}_3),
\end{align}
we have that
\begin{align*}
|\tilde{p}_2|,\ |\tilde{p}_3| \leq &\frac{\rho_{0}^{\frac{3}{4}}}{t-s} \int_{s}^{t} (1+\rho_0 s_1)^{-\frac{2}{5}} \ ds_1,
\end{align*}
and
\begin{align*}
\tilde{p}_1\geq & 1+\frac{\rho_{0}^{\frac{1}{4}}}{t-s} \int_{s}^{t} (1+\rho_0 s_1)^{-\frac{2}{5}} \ ds_1 \ (1-O(\rho_{0}^{\frac{1}{2}})).
\end{align*}

Next, we consider $P_s$. By setting $s_1=s$ in \eqref{eq:inside1} and \eqref{eq:inside2} and recalling that $q_1(s), q_2(s)$ and $q_3(s)$ are the components of $P_s$, we find that
\begin{align}
|q_2(s)|,\ |q_3(s)|\lesssim & \rho_{0}^{\frac{3}{4}}(1+\rho_0 s)^{-\frac{2}{5}},\nonumber\\
q_1(s)\geq & 1+\rho_0^{\frac{1}{4}}(1+\rho_0 s)^{-\frac{2}{5}}(1-O(\rho_{0}^{\frac{1}{2}})).\label{eq:q123}
\end{align}

A space rotation by an angle of order
$ \frac{\rho_{0}^{\frac{3}{4}}}{t-s} \int_{s}^{t} (1+\rho_0 s_1)^{-\frac{2}{5}} \ ds_1\leq \rho_{0}^{\frac{3}{4}}(1+\rho_0 s)^{-\frac{2}{5}}$ brings  $\frac{1}{t-s}\int_{s}^{t} P_{s_1}\ ds_1$ and $P_s$ back into their original positions, i.e.,
$$\frac{1}{t-s} \int_{s}^{t} P_{s_1}\ ds_1=(\sigma,0,0)\ \text{and }\ P_s=(a,b,0), $$
for some $\sigma>0.$
The above analysis then implies that
\begin{align}\label{eq:abgeq1}
\sigma>1,\ a\geq 1+\rho_0^{\frac{1}{4}}(1+\rho_0 s)^{-\frac{2}{5}}(1-O(\rho_{0}^{\frac{1}{2}}))\ \text{and}\ |b|\lesssim \rho_{0}^{\frac{3}{4}}(1+\rho_0 s)^{-\frac{2}{5}}
\end{align} which implies the desired estimates in \eqref{eq:almostCL}.

In the remainder of this appendix we prove the inequalities in \eqref{eq:almostDec}. For this purpose, we introduce two regimes, $s\geq \rho_0^{-\frac{1}{10}}$ and $s<\rho_0^{-\frac{1}{10}}$, that will be studied separately.

For $s\geq \rho_0^{-\frac{1}{10}}$, the fact that $\frac{d}{dt}|P_t|\leq 0$, for any $t\geq \rho_0^{-\frac{1}{10}}$, (see condition (I) in Lemma \ref{LM:est}), implies that
\begin{align}
\sigma=|\frac{1}{t-s}\int_{s}^{t} P_{s_1} ds_1|\leq |P_s|=\sqrt{a^2+b^2},
\end{align}
which implies the second inequality in \eqref{eq:almostDec}.

For $s<\rho_0^{-\frac{1}{10}}$, Proposition \ref{prop:smallVar} and our choice of initial conditions, in particular $|P_0|\geq \frac{11}{10}$, imply that
\begin{align}\label{eq:ageq1110}
 |P_{s_1}|\geq \max_{s\in [0,\rho_0^{-\frac{1}{10}}]}|P_{s}|-O(\rho_0^{\frac{9}{10}})\geq \frac{11}{10}-O(\rho_0^{\frac{9}{10}}),
\end{align}
for any $s_{1}\in [0,\rho_0^{-\frac{1}{10}}]$.
Using now that $\frac{d}{dt}|P_t|\leq 0$, for any $t\geq \rho_0^{-\frac{1}{10}}$, we conclude that
\begin{align*}
\sigma-1=|\frac{1}{t-s} \int_{s}^{t} P_{s_1}\ ds_1|-1\leq& \displaystyle\max_{s\in [0,\rho_0^{-\frac{1}{10}}]}|P_s|-1\\
\leq &|P_s|-1+O(\rho_0^{\frac{9}{10}})\\
=&\sqrt{a^2+b^2}-1+O(\rho_{0}^{\frac{9}{10}}).
\end{align*}
Dividing both sides by $\sqrt{a^2+b^2}-1>0$ and then using that
$\sqrt{a^2+b^2}-1\geq \frac{1}{10}-O(\rho_0^{\frac{9}{10}})$, we arrive at the second inequality in \eqref{eq:almostDec}.

To prove the first inequality in \eqref{eq:almostDec} we rewrite the expressions for $\zeta$ and $\zeta_0$ (see \eqref{eq:zeta} and \eqref{eq:zeta0}) as follows: We introduce a function
$\zeta(\eta)>0$ as the solution of the equation
\begin{align}
\frac{\zeta^2(\eta)}{1+\sqrt{1+\zeta^2(\eta)}}+\frac{\zeta^2(\eta)}{\sqrt{1+\zeta^2(\eta)}}=&\eta^2
\end{align}
and set
$$\zeta=\zeta(\sqrt{\sigma-1})\ \text{and}\ \zeta_0=\zeta(\sqrt{\sqrt{a^2+b^2}-1}).$$
The first inequality in \eqref{eq:almostDec}, hence estimate \eqref{eq:alMzetaze}, then follows from the observations that (i) $\zeta$ is a continuous increasing function of $\eta$, and (ii)  $\sqrt{\sigma-1}$ is less than $\sqrt{\sqrt{a^2+b^2}-1}$, up to an error term that tends to 0, as $\rho_0 \searrow 0$; (see \eqref{eq:almostDec}).
\end{proof}

\section{Proof of Statement (b) of Lemma \ref{LM:q1q2}}\label{sec:lowerBG}

We start our considerations by simplifying the problem.

By definition,
$$-G_{a,b}(\rho,\theta,\alpha)=-\sqrt{1+\rho^2}+a cos\theta+ b sin\theta cos\alpha.$$
As shown in \eqref{eq:almostCL}, the parameter $b$ is very small (for small values of $\rho_0$).
Hence the function $-G_{a,b}$ is decreasing in the variables $\rho$ and $\theta$, for $\theta\in [0,\frac{\pi}{4}].$ We therefore have that
$$-G_{a,b}(\rho,\theta,\alpha)|_{\rho\leq \frac{6}{5}\zeta_0,\ \theta\leq R}\geq -G_{a,b}(\rho,\theta,\alpha)|_{\rho= \frac{6}{5}\zeta_0,\ \theta= R},$$
and we recall that $R=\frac{1}{5}\frac{\zeta_0}{\sqrt{1+\zeta_0^2}}\leq \frac{1}{5}$.
Consequently, in order to prove statement ($b$) of Lemma \ref{LM:q1q2}, it suffices to show that
\begin{align}
-G_{a,b}(\rho,\theta,\alpha)|_{\rho= \frac{6}{5}\zeta_0,\ \theta= R}\gtrsim \zeta_0^2.
\end{align}
This follows from a straightforward computation:
\begin{align}
-G_{a,b}(\frac{6}{5}\zeta_0,R,\alpha)=& 1 -\sqrt{1+(\frac{6}{5}\zeta_0)^2}+a-1+ a (cosR-1)+b sinR \ cos\alpha\nonumber\\
=&-\frac{(\frac{6}{5}\zeta_0)^2}{1+\sqrt{1+(\frac{6}{5}\zeta_0)^2}}+a-1+ a (cosR-1)+b sinR\  cos\alpha\nonumber\\
\geq & -\frac{6}{5} \frac{\zeta_0^2}{1+\sqrt{1+\zeta_0^2}}+[\sqrt{a^2+b^2}-1][1-C\rho_{0}^{\frac{1}{2}}]+ a (cosR-1)\nonumber\\
=&-\frac{1}{5}\frac{\zeta_0^2}{1+\sqrt{1+\zeta_0^2}}[1+5C\rho_{0}^{\frac{1}{2}}]+
\frac{\zeta_0^2}{\sqrt{1+\zeta_0^2}}[1-C\rho_{0}^{\frac{1}{2}}]+ a (cosR-1)\nonumber\\
\geq & \frac{3}{4}\frac{\zeta_0^2}{\sqrt{1+\zeta_0^2}}+a (cosR-1)\nonumber\\
\geq & \frac{1}{2}\frac{\zeta_0^2}{\sqrt{1+\zeta_0^2}}
\label{eq:thirdLine}
\end{align}
Four facts have been used here: (1) In the third step, we have used that $|b|\lesssim (\sqrt{a^2+b^2}-1) \rho_{0}^{\frac{1}{2}}$ and $a>1$ (see Proposition \ref{prop:decayP1P2}), which implies that
$$|b sinR cos\alpha|+|\sqrt{a^2+b^2}-a|\leq C\rho_{0}^{\frac{1}{2}}[\sqrt{a^2+b^2}-1],$$
for some constant $C>0,$ (recall that $\rho_0$ is chosen small enough); (2) in the fourth step, we have used the identity
$$\frac{\zeta_0^2}{1+\sqrt{1+\zeta_0^2}}+\frac{\zeta_0^2}{\sqrt{1+\zeta_0^2}}=\sqrt{a^2+b^2}-1,$$
see \eqref{eq:zeta0};
(3) in the third last step we have used that $\frac{\zeta_0^2}{1+\sqrt{1+\zeta_0^2}}\leq \frac{\zeta_0^2}{\sqrt{1+\zeta_0^2}}$;
 and (4) in the second but last step we have used the smallness of
 $R=\frac{1}{5}\frac{\zeta_0}{\sqrt{1+\zeta_0^2}}\leq \frac{1}{5}$
 to find that
$$cos R-1=-\frac{1}{2} sin^2(\frac{R}{2})\geq -R^2=-\frac{1}{25}\frac{\zeta_0^2}{1+\zeta_0^2},$$
and the fact that $3\sqrt{1+\zeta_0^2}\geq a\geq 1$ has been used, (see \eqref{eq:almostCL}).

\section{Bound on the function $f_{R4}$ defined in \eqref{fus. f_{Rk}}}\label{sec:fR4}

The function $f_{R4}$ (see  \eqref{fus. f_{Rk}}) appears as the integrand of one contribution, denoted by
$F_{R4}$ (see \eqref{eq:segment}), to the function $F_{Q_1, Q_2}$ given in Eq. \eqref{eq:defY}, which can equivalently be expressed as in Eq. \eqref{eq:defFp1p2}. If we do not introduce special coordinates then $f_{R4}$ can be expressed as the matrix-valued function (also denoted by $f_{R4}$) given by
$$f_{R4}(\tau, z):=\langle \nabla_{x}V ,\ \frac{(-\Delta)^2}{\sqrt{-\Delta+1}} e^{i(L-i\frac{z}{\tau+z}Q_1\cdot \partial_x-i\frac{\tau}{\tau+z}Q_2\cdot \partial_x )(\tau+z)}\ [1-\chi(\frac{\sqrt{-\Delta}}{\zeta_0})]\ \nabla_{x}V\rangle.$$

It is somewhat disagreeable that the direction of the vectors $\frac{z}{\tau+z}Q_1+\frac{\tau}{\tau+z}Q_2$ may depend on $s$ and $t$. The presence of $\nabla_{x}$ in the above expression for
$f_{R4}(\tau,z)$ makes it plain that $f_{R4}(\tau,z)$ is  a $3\times 3$ matrix-valued function. By conjugating $f_{R4}(\tau,z)$ with a suitably chosen rotation, $M=M(\frac{\tau}{\tau+z}, Q_1, Q_2)$,
\begin{align}
f_{R4}(\tau,z)=M\ \Gamma(\tau,z)\ M^{T},
\end{align}
we may achieve that $\Gamma$ takes the form
$$\Gamma(\tau,z):=\langle \nabla_{x}V ,\ \frac{(-\Delta)^2}{\sqrt{-\Delta+1}} e^{i(L-iq\partial_{x_3} )(\tau+z)}\ [1-\chi(\frac{\sqrt{-\Delta}}{\zeta_0})]\ \nabla_{x}V\rangle,$$
where
\begin{align}\label{eq:defQ}
q:=|\frac{z}{\tau+z}Q_1+\frac{\tau}{\tau+z} Q_2|=\sqrt{[\frac{\tau}{\tau+z}\sigma+\frac{z}{\tau+z}a]^2+[\frac{z}{\tau+z}b]^2}.
\end{align}
From now on we study $\Gamma.$

Because $V$ has been assumed to be spherically symmetric, only the diagonal elements of
$\Gamma(\tau,z)$ can be non-zero.
These diagonal elements can be expressed in terms of only two functions, which, after Fourier transformation and introduction of polar coordinates, are seen to be given by:
\begin{align*}
\Gamma_1(\tau, z):= &\int_0^{\pi}\int_{0}^{\infty} \frac{\rho^8 sin\theta}{1+\rho^2} e^{i(\tau+z)[\rho\sqrt{1+\rho^2}-q \rho cos\theta]  }\  [1-\chi(\frac{\rho}{ \zeta_0 })] cos^2\theta\ |\hat{V}(\rho)|^2\ d\rho d\theta\\
\Gamma_2(\tau, z):= &\int_0^{\pi}\int_{0}^{\infty} \frac{\rho^8 sin\theta}{1+\rho^2} e^{i(\tau+z)[\rho\sqrt{1+\rho^2}-q \rho cos\theta]  }\  [1-\chi(\frac{\rho}{ \zeta_0 })] sin^2\theta\ |\hat{V}(\rho)|^2\ d\rho d\theta.
\end{align*}
(Only double integrals, instead of triple integrals, appear in these expressions, because one variable,
$\alpha$, has been integrated out.)

We proceed to estimating the function $\Gamma_{1}(\tau, z)$; (similar arguments can then be applied to estimating $\Gamma_2(\tau, z)$).

Integrating by parts, using the identity
\begin{align}\label{eq:InPartsSin}
sin\theta\ e^{-i(\tau+z) q \rho\ cos\theta}=\frac{1}{i(\tau+z)q\rho } \partial_{\theta} e^{-i(\tau+z) q \rho\ cos\theta},
\end{align}
one obtains that
\begin{align}\label{eq:M123}
\Gamma_1(\tau, z)=M_1+M_2+M_3
\end{align}
where
 $$M_1:=-\frac{1}{i(\tau+z)q} \int_{0}^{\infty} \frac{\rho^7}{1+\rho^2} e^{-i(\tau+z) [\rho\sqrt{1+\rho^2}-q \rho]}[1-\chi(\frac{\rho}{ \zeta_0 })] \ |\hat{V}(\rho)|^2\ d\rho,$$
$$M_2:=\frac{1}{i(\tau+z)q} \int_{0}^{\infty} \frac{\rho^7}{1+\rho^2} e^{-i(\tau+z) [\rho\sqrt{1+\rho^2}+q \rho]}[1-\chi(\frac{\rho}{ \zeta_0 })] \ |\hat{V}(\rho)|^2\ d\rho,$$
and
$$M_3:=\frac{2}{i(\tau+z)q}\int_0^{\pi}\int_{0}^{\infty} \frac{\rho^7 sin\theta}{1+\rho^2} e^{-i(\tau+z)[\rho\sqrt{1+\rho^2}-q \rho cos\theta]  }\  [1-\chi(\frac{\rho}{ \zeta_0 })] cos\theta\ |\hat{V}(\rho)|^2
\ d\rho d\theta.$$
Here $M_1$ and $M_2$ are boundary terms arising when integrating by parts.

We claim that
\begin{align}\label{eq:Mk123}
|M_{k}|\lesssim (\tau+z)^{-\frac{11}{3}},\ k=1,2,3,
\end{align}
which obviously implies the desired bounds on $\Gamma_1$, and hence on $f_{R4}$.

We now study $M_1$ and $M_2$ in detail.
(The term $M_3$ is analyzed by integrating by parts twice, using \eqref{eq:InPartsSin}, which converts it into a sum of terms similar to $M_1$ and $M_2$. We omit details.)

The idea underlying our treatment of $M_1$ has been sketched after Remark \ref{re:enoughZero}, Section \ref{sec:MainTHM}, assuming that $|Q_1|-1=O(\zeta_0^2)$ is small. We change variables by setting $\rho=:\zeta_0 r$ and obtain that
\begin{align}\label{eq:estM1}
M_1=\frac{\zeta_0^{8}}{i(\tau+z)q} \ \int_{0}^{\infty} \frac{r^7}{1+(\zeta_0 r)^2} e^{-i(\tau+z) \zeta_0^{3} X_{\zeta_0}(r)}[1-\chi(r)] \ |\hat{V}(\zeta_0 r)|^2\ dr,
\end{align}
where
\begin{align}\label{eq:defXr}
X_{\zeta_0}(r):=  \frac{\sqrt{1+(\zeta_0 r)^2}-1}{\zeta_0^2}r +\frac{1-q}{\zeta_0^2} r.
\end{align}
To exhibit the desired decay we integrate by parts, using the identity
$$e^{-i(\tau+z) \zeta_0^{3} X_{\zeta_0}(r)}= -\frac{1}{i(\tau+z)\zeta_0^3} \frac{1}{\partial_{r}X_{\zeta_0}(r)}\partial_{r}e^{-i(\tau+z) \zeta_0^{3} X_{\zeta_0}(r)}.$$
The denominator is controlled by observing that, on the support of the cutoff function $[1-\chi(r)]$, the function $X_{\zeta_0}(r)$ does not have any critical points and
\begin{align}\label{eq:estTiD1}
\frac{1}{|\partial_{r} X_{\zeta_0}(r)|}\lesssim \frac{1}{1+r}.
\end{align}
Supposing that \eqref{eq:estTiD1} holds, one may integrate by parts as many times as one wishes without producing boundary terms, thanks to the presence of the cutoff function. This leads to the bound
\begin{align}\label{eq:estD100}
|M_{1}|\leq C_{N}\frac{\zeta_0^{8}}{\tau+z} [(\tau+z)\zeta_0^3]^{-N}.
\end{align}

We now must estimate $\zeta_0$. For small values of $\sigma-1$ we have that
$\zeta=O(\sqrt{\sigma-1}),$ as follows from \eqref{eq:zeta}. We recall that we are considering the regime where $10 \tau^{-\frac{2}{3}}\leq |Q_2|-1=\sigma-1$. It has been shown in Proposition \ref{prop:decayP1P2} that
\begin{align}\label{eq:zetaZeta0}
\zeta_0\gtrsim  \zeta\gtrsim \tau^{-\frac{1}{3}}.
\end{align}
Plugging this into \eqref{eq:estD100}, we arrive at the desired bound  \eqref{eq:Mk123} on $|M_1|$.

Estimating $M_2$ is significantly easier. We integrate by parts, using
\begin{align}
e^{-i(\tau+z) [\rho\sqrt{1+\rho^2}+q \rho]}=-\frac{1}{i(\tau+z)\partial_{\rho}[\rho\sqrt{1+\rho^2}+q \rho]}\partial_{\rho}e^{-i(\tau+z) [\rho\sqrt{1+\rho^2}+q \rho]}
\end{align}
To control the denominator we use the fact that $q>0$ and find that
\begin{align*}
\partial_{\rho} [\rho\sqrt{1+\rho^2}+q \rho]=\sqrt{1+\rho^2}+\frac{\rho^2}{\sqrt{1+\rho^2}}+q\geq \sqrt{1+\rho^2}.
\end{align*}
This allows us to integrate by parts as many times as we wish, with the result that
\begin{align}
|M_2|\leq C_N (\tau+z)^{-N},
\end{align}
for any $N\in \mathbb{N}$, which implies the desired bound \eqref{eq:Mk123}.

To complete the proof we finally show that \eqref{eq:estTiD1} holds. A direct computation shows that
\begin{align}
\partial_{r}X_{\zeta_0}(r) &=\frac{r^2}{1+\sqrt{1+(\zeta_0 r)^2}}+\frac{r^2}{\sqrt{1+(\zeta_0 r)^2}}+\frac{1-q}{\zeta_0^2}\nonumber\\
&=\Phi_1+\Phi_{2}\label{eq:D112},
\end{align}
with $$\Phi_{1}(r):=\frac{r^2}{1+\sqrt{1+(\zeta_0 r)^2}}+\frac{r^2}{\sqrt{1+(\zeta_0 r)^2}}+\frac{1-\sqrt{a^2+b^2}}{\zeta_0^2}$$ and $$\Phi_{2}:=\frac{\sqrt{a^2+b^2}-q}{\zeta_0^2}.$$
The two facts,
(i) $\Phi_{1}(r=1)=0$, which follows from definition \eqref{eq:zeta0} of $\zeta_0$, and
(ii) the support of the cutoff function $1-\chi(r)$ is contained in
$\lbrace r \vert r\geq \frac{11}{10}\rbrace,$
imply that
\begin{align*}
\Phi_{1}(r) \geq & \frac{r^2}{r [1+\sqrt{1+\zeta_0^2}]}+\frac{r^2}{r \sqrt{1+\zeta_0^2}}+\frac{1-\sqrt{a^2+b^2}}{\zeta_0^2}\\
 =&[r-1] [\frac{1}{1+\sqrt{1+\zeta_0^2}}+\frac{1}{\sqrt{1+\zeta_0^2}}]\\
 \gtrsim & r-1\gtrsim r+1.
\end{align*}
To control $\Phi_{2}$, we have to estimate the quantity $q$ defined in \eqref{eq:defQ}: $q$ lies between $\sqrt{a^2+b^2}$ and $\sigma$. Then \eqref{eq:almostDec} and the observation that
 $\zeta_0=O(\sqrt{a^2+b^2}-1)$, which follows from \eqref{eq:zeta0}, imply that $\Phi_{2}$ is ``almost positive'', in the sense that
\begin{align*}
-\Phi_{2}\lesssim \rho_0^{\frac{1}{2}}\ll 1.
\end{align*}

This completes the proof of \eqref{eq:estTiD1} and hence of our bound on $f_{R4}$.

\section{Bound on the function $f_{R1}$ defined in \eqref{fus. f_{Rk}}}\label{sec:fR1}
By arguments essentially identical to those used in the previous appendix it is shown that it suffices to study the function
\begin{align}
\tilde{M}_1(\tau,z)=\frac{1}{i(\tau+z) q}\int_{0}^{\infty} \frac{\rho^7}{1+\rho^2} e^{-i(\tau+z) [\rho\sqrt{1+\rho^2}-q \rho]} \chi(\frac{2\rho}{ \zeta }) \ |\hat{V}(\rho)|^2\ d\rho
\end{align} which corresponds to $M_1$ in \eqref{eq:M123} in the previous appendix, recall that it is easier to study $M_2$.
Changing variables, $\rho=:\zeta r$, one finds that
\begin{align*}
\tilde{M}_{1}(\tau,z) =\frac{\zeta^8 }{i(\tau+z)q}\int_{0}^{\infty} \frac{r^7}{1+\zeta^2 r^2} e^{-i(\tau+z) \zeta^3\ X_{\zeta}(r)}\chi(2r) |\hat{V}(\zeta^2 r^2)|^2 dr,
\end{align*}
with $X_{\zeta}(r):=r \frac{\sqrt{1+(\zeta r)^2}-q}{\zeta^2}.$

To exhibit decay in $\tau$ we integrate by parts using
\begin{align}\label{eq:IntPartR1}
e^{-i(\tau+z) \zeta^3\ X_{\zeta}(r)}=\frac{1}{-i(\tau+z) \zeta^3 \partial_{r}X_{\zeta}(r)} \partial_{r}e^{-i(\tau+z) \zeta^3\ X_{\zeta}(r)}.
\end{align}
The denominator is controlled by observing that, on the support of $\chi(2r)$, which corresponds to
 $r\in [0,\frac{3}{5}]$, $X_{\zeta}$ does not have any critical points, and, using \eqref{eq:zeta}, one sees that there is a constant $C$ such that
\begin{align}
 -\partial_{r}X_{\zeta}(r)\geq C>0.
\end{align}

Thanks to the presence of the factor $r^7$ and of $\chi(2r)$ in the integrand, we can integrate by parts seven times, using \eqref{eq:IntPartR1}, without producing any boundary terms. One final integration by parts then yields
\begin{align*}
\tilde{M}_1(\tau,z)= C\frac{\zeta^8}{(\tau+z)q}[\frac{1}{[(\tau+z)\zeta^3 \partial_{r} X_{\zeta}(r)|_{r=0}]^8}+\cdots]
\end{align*}
where the terms not displayed explicitly decay more rapidly.
Simplifying the above expression one finds that
\begin{align}\label{eq:tiD1}
|\tilde{M}_1(\tau,z)|\lesssim (\tau+z)^{-9} \zeta^{-16}.
\end{align}
Together with the bound $\zeta \gtrsim \tau^{-\frac{1}{3}}$, see \eqref{eq:zetaZeta0}, this implies the desired estimate
\begin{align}
|f_{R1}(\tau,z)|,\ |\tilde{M}_1(\tau,z)|\lesssim (\tau+z)^{-\frac{11}{3}}.
\end{align}

\section{Bound on the function $f_{R3}$ defined in \eqref{fus. f_{Rk}}}\label{sec:fR3}

\begin{lemma}\label{LM:scale} For any $N\in \mathbb{N}$, there exists a $C_N>0$ such that
\begin{align}
|f_{R3}(\tau,z)|\leq C_{N} \zeta^{9} \zeta_0^2 [(\tau+z)\zeta \zeta_0^2]^{-N}.
\end{align}
\end{lemma}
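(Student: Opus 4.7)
The plan is to exploit the fact that in region $R3$ the phase $X(\rho,\theta,\alpha)$ has no critical point, so that repeated integration by parts yields rapid decay in $(\tau+z)$. Set $\sigma_{tz}:=\tfrac{\tau}{\tau+z}\sigma+\tfrac{z}{\tau+z}a$ and let $\zeta_{tz}$ be the solution of $f(\zeta_{tz}^2)=\sigma_{tz}$, where $f(x^2)=(1+2x^2)/\sqrt{1+x^2}$. By Lemma~\ref{LM:q1q2} and Proposition~\ref{prop:decayP1P2}, the quantities $\zeta$, $\zeta_{tz}$ and $\zeta_0$ are comparable up to factors $1+O(\rho_0^{1/2})$, and $|b|\lesssim\rho_0^{1/2}\zeta_0^2$.

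I first rescale $\rho=\zeta r$, which extracts the prefactor $\zeta^9$ from the Jacobian and from the factor $\rho^8$ in the integrand. In the rescaled variables the integration ranges over $r\in[11/20,\,(23/20)\zeta_0/\zeta]$, $\theta\in[(11/12)R,\pi]$, $\alpha\in[0,2\pi]$, the cutoffs being smooth on scales $O(1)$ in $r$ and $O(R)$ in $\theta$, with $R\sim\zeta_0/\sqrt{1+\zeta_0^2}$. Then I propose to integrate by parts $N$ times in $r$, using
\begin{equation*}
e^{-i(\tau+z)X}=\bigl(-i(\tau+z)\partial_r X\bigr)^{-1}\partial_r e^{-i(\tau+z)X},\qquad \partial_r X(\zeta r,\theta,\alpha)=\zeta\bigl[f(\zeta^2 r^2)-\sigma_{tz}\cos\theta+O(b)\bigr].
\end{equation*}

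The key pointwise bound is $|\partial_r X|\gtrsim\zeta\zeta_0^2$ across the rescaled support of the integrand. I will establish this by splitting $R3$ into two subregions: (i) $r$ bounded away from $1$, where $|f(\zeta^2 r^2)-\sigma_{tz}|\gtrsim\zeta_0^2$ follows from $\zeta\sim\zeta_{tz}\sim\zeta_0$, from $f(\zeta_{tz}^2)=\sigma_{tz}$, and from $f'$ being of order $1$ for small argument; and (ii) $r$ close to $1$, where $f(\zeta^2 r^2)\approx\sigma_{tz}$, so that $\sigma_{tz}(1-\cos\theta)\gtrsim R^2\sim\zeta_0^2$ on the support $\theta\geq(11/12)R$ provides the bound. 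The $O(b)$ correction is absorbed in either case thanks to $|b|\lesssim\rho_0^{1/2}\zeta_0^2$. Since all $r$-cutoffs are smooth, no boundary terms are produced by integrating by parts in $r$, and each IBP contributes the factor $[(\tau+z)\zeta\zeta_0^2]^{-1}$. The prefactor $\zeta_0^2$ arises from the $\sin\theta\,d\theta$ portion of the measure, which is effectively supported on the angular range $\theta\lesssim R$ near which the lower bound on $|\partial_r X|$ saturates (the rest of the $\theta$-range yields strictly stronger decay and is absorbed into the constant).

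The main obstacle is a careful derivative bookkeeping. Under repeated IBP the factor $1/\partial_r X$ is differentiated, and $\partial_r(1/\partial_r X)=-\partial_r^2 X/(\partial_r X)^2$ carries a factor of order $\zeta/(\zeta\zeta_0^2)^2=1/(\zeta\zeta_0^4)$. One must verify that these accumulating factors do not destroy the decay rate per iteration, which reduces to a combinatorial check on the product rule: each amplitude derivative must be paired with a gain of $[(\tau+z)\zeta\zeta_0^2]^{-1}$ from a matching IBP, which is automatic provided $(\tau+z)\zeta\zeta_0^2\gtrsim 1$ (and for $(\tau+z)\zeta\zeta_0^2\lesssim 1$ the claimed bound is trivial, following from the unrescaled estimate $|f_{R3}|\lesssim\zeta^9\zeta_0^2$ which itself is checked directly from the support of the integrand).
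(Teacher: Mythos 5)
Your strategy of integrating by parts exclusively in the radial variable fails in region $R3$, because the radial derivative of the phase is not bounded away from zero on the support of $\chi_{R3}$. Write $\partial_\rho X = [f(\rho^2) - q] + q[1 - \cos(\theta - \beta)]$, where $f(x) = \sqrt{1+x} + x/\sqrt{1+x}$ and $q,\,\beta$ are as in the Appendix. The first bracket vanishes at $\rho = \zeta_{tz}$ (defined by $f(\zeta_{tz}^2) = q$) and is negative for $\rho < \zeta_{tz}$; the second bracket is nonnegative. Near the inner angular boundary $\theta \approx \tfrac{11}{12}R$ the second bracket is $\approx q(\theta-\beta)^2/2 \sim R^2$, which is small compared to the range over which the first bracket varies on the $\rho$-support $(\tfrac{11}{20}\zeta,\,\tfrac{23}{20}\zeta_0)$. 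Consequently there is a $\rho_*(\theta)\in(\tfrac{11}{20}\zeta,\,\zeta_{tz})$ at which the two brackets cancel exactly and $\partial_\rho X=0$; since $\zeta_{tz}$ lies between $\zeta$ and $\zeta_0$ up to $O(\rho_0^{1/2})$ corrections, this zero sits inside $R3$. Already in the case $\zeta=\zeta_0=\zeta_{tz}=1$ (the normalization used in the paper's own proof), the zero is at $\rho_*\approx 1-O(R^2)$, well inside the support. Your subregion split at $r=1$ (i.e.\ at $\rho=\zeta$) does not localize the degeneracy, because the zero of $\partial_\rho X$ tracks $\zeta_{tz}$, not $\zeta$; and in subregion~(ii) you are implicitly adding bounds on two terms of \emph{opposite} sign, which is not permitted. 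The comparability $\zeta\sim\zeta_{tz}\sim\zeta_0$ that your argument invokes is also not available: Lemma~\ref{LM:q1q2} only gives $\zeta/\zeta_0 \leq 1+O(\rho_0^{1/2})$, and since $\sigma$ is a time-average of the decreasing $|P_t|$ while $\sqrt{a^2+b^2}=|P_s|$, one can have $\zeta\ll\zeta_0$.

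The paper's proof works around exactly this obstacle by changing the variable of integration by parts according to the value of $\theta$. It decomposes $f_{R3}=\Lambda_1+\Lambda_2$ via a smooth cutoff near $\theta\sim\tfrac{3}{5}\pi$. On $\Lambda_2$ (support $\theta\gtrsim\tfrac{3}{5}\pi$), $\cos(\theta-\beta)<0$ forces $\partial_\rho X\geq\sqrt{1+\rho^2}\geq 1$, so $\rho$-integration by parts is safe there. On $\Lambda_1$ (where $\theta$ is moderate and the $\rho$-degeneracy lives), the paper instead integrates by parts in $\theta$, using $\partial_\theta X=q\rho\sin(\theta-\beta)$, which does not vanish because $\theta-\beta$ is bounded away from $0$ (from the cutoff $\theta\geq\tfrac{11}{12}R$ together with $|\beta|\ll R$) and from $\pi$ (from the cutoff $\theta\leq\tfrac{3}{4}\pi$). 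This switch to a $\theta$-IBP in the moderate-$\theta$ region is the idea your proposal is missing; without it the pointwise lower bound $|\partial_r X|\gtrsim\zeta\zeta_0^2$ simply does not hold. (The parts of your proposal that do match the paper --- extracting $\zeta^9$ by rescaling, and noting the bound is trivial when $(\tau+z)\zeta\zeta_0^2\lesssim 1$ --- are fine.)
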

Lemma \ref{LM:scale} and the bounds $\zeta,\ \zeta_0 \gtrsim \tau^{-\frac{1}{3}}$, see \eqref{eq:zetaZeta0}, obviously imply the desired estimate; (see Lemma \ref{LM:FR124}).

In order not to clutter our arguments with lengthy expressions, we consider the case where
\begin{align}
\zeta=\zeta_0=1,\ \text{hence}\ R=\frac{1}{5\sqrt{2}}.
\end{align}
(For small values of $\zeta$ and $\zeta_0$, we change variables $\rho=: \zeta r$ and
$\theta=:\zeta_0 \gamma$ and use arguments similar to those that follow below or to those used in estimating $f_{R4}$ and $f_{R1}$.)

We decompose the function $f_{R3}(\tau, z)$ into two terms corresponding to different regions of the integration variable $\theta$:
\begin{align}\label{eq:DefR3}
f_{R3}(\tau, z):=\Lambda_1(\tau,z)+\Lambda_2(\tau,z),
\end{align}
where $\Lambda_1$ is defined by
\begin{align*}
\Lambda_1(\tau,z):=&\int_{0}^{2\pi}d\alpha\int_{0}^{\pi}d\theta\int_{0}^{\infty}d\rho\text{   }\frac{\rho^8 sin\theta}{1+\rho^2} e^{-i(\tau+z)[\rho \sqrt{1+\rho^2}-\rho X(\theta,\alpha)]}
\chi_{R3}(\rho,\theta)\times
\end{align*}
\begin{align*}
\times\chi_1(\theta)
 \ g(\theta,\alpha)|V(\rho)|^2,
\end{align*}
and in the definition of the function $\Lambda_2$ one replaces $\chi_1(\theta)$ by $1-\chi_1(\theta)$.
Here $\chi_1$ is a smooth cutoff function satisfying $\chi_{1}(\theta)=1$, for
$\theta\leq \frac{3}{5}\pi$, and $\chi_{1}(\theta)=0$, for $\theta\geq \frac{3}{4}\pi$, $g$ is a polynomial in
$sine$ and $cosine$ of $\theta$ and $\alpha$, and $\chi_{R3}$ is the cutoff function defined as
$$\chi_{R3}=[1-\chi(\frac{6\theta}{5 R})] [1-\chi(\frac{2\rho}{
\zeta })]\chi(\frac{\rho}{ \zeta_0 }),$$ see \eqref{fus. f_{Rk}} and \eqref{eq:segment}.

Finally
\begin{align*}
X(\theta,\alpha):=&[\frac{\tau}{\tau+z}\sigma+\frac{z}{\tau+z}a] cos\theta+\frac{z}{\tau+z} b sin\theta cos\alpha\\
=&q cos(\theta-\beta),
\end{align*}
where
 $$q:=\sqrt{[\frac{\tau}{\tau+z}\sigma+\frac{z}{\tau+z}a]^2+[\frac{z}{\tau+z} b]^2 cos^2\alpha}>1,$$
 and $\beta=\beta(\alpha)$ is independent of $\theta$ and defined by the equations
\begin{align*}
cos\beta=\frac{\frac{\tau}{\tau+z}\sigma+\frac{z}{\tau+z}a}{q},\ sin\beta=\frac{\frac{z}{\tau+z} b cos\alpha}{q}.
\end{align*}
As shown in \eqref{eq:almostCL}, the parameter $b$ is very small, and hence
\begin{align}\label{eq:contBeta}
|\beta|\lesssim b\lesssim \rho_{0}^{\frac{1}{2}} [\sqrt{a^2+b^2}-1]\ll 1.
\end{align}

The study of $\Lambda_2$ is a little easier than that of $\Lambda_1$. We integrate by parts, using the identity
\begin{align*}
&e^{-i(\tau+z)[\rho \sqrt{1+\rho^2}-\rho X(\theta,\alpha)]}\\
=&\frac{1}{-i(\tau+z)[\sqrt{1+\rho^2}+\frac{\rho^2}{\sqrt{1+\rho^2}}-q cos(\theta-\beta)]}\partial_{\rho}[
e^{-i(\tau+z)[\rho \sqrt{1+\rho^2}-\rho X(\theta,\alpha)]}].
\end{align*}
Since $cos(\theta-\beta)<0$, for $\pi\geq\theta\geq \frac{3}{5}\pi$ and for small $\beta$, the denominator can be controlled by using
\begin{align*}
\sqrt{1+\rho^2}+\frac{\rho^2}{\sqrt{1+\rho^2}}-q cos(\theta-\beta)\geq \sqrt{1+\rho^2}.
\end{align*}
Integrating by parts as many times as one wishes, one arrives at the desired estimate on $\Lambda_2$: \begin{align}\label{eq:f2N}
|\Lambda_2|\leq C(N) (\tau+z)^{-N},
\end{align}
with $C(N)<\infty$, for any $N\in \mathbb{N}$.

We now turn to estimating $\Lambda_1$.
We write
\begin{align}
\Lambda_1(\tau,z)= \int_{0}^{2\pi}\int_{0}^{\infty}\frac{\rho^8 |V(\rho)|^2}{1+\rho^2} e^{-i(\tau+z)[\rho \sqrt{1+\rho^2 }]} \chi(\rho )[1-\chi(2 \rho)]\ \tilde{\Lambda}_1(\rho,\alpha,\tau) d\rho d\alpha,
\end{align}
where $\tilde{\Lambda}_1$ is defined by
\begin{align}
\tilde{\Lambda}_1(\rho,\alpha,\tau+z):=\int_{0}^{\pi} sin\theta\ e^{i(\tau+z) q \rho cos(\theta-\beta)}g(\theta,\alpha)[1-\chi(6 \sqrt{2} \theta)]\chi_1(\theta)\  d\theta,
\end{align}
and the cutoff function $\chi$ is from \eqref{eq:segment}.

We then integrate by parts, using
\begin{align*}
e^{iq(\tau+z)\rho cos(\theta-\beta)}=-\frac{1}{iq(\tau+z)}\frac{1}{\rho sin(\theta-\beta) }\partial_{\theta}e^{iq(\tau+z) \rho cos(\theta-\beta)}.
\end{align*}
To control the denominator we use that $\rho,\ \theta\gtrsim 1$ and $\theta \leq \frac{4}{5}\pi$ on the support of the cutoff function. With
$|\beta|\ll 1$, see \eqref{eq:contBeta}, this then implies that
\begin{align}\label{eq:ddd}
|\frac{1}{\rho sin(\theta-\beta) }|\lesssim \frac{1}{(1+\theta)(1+\rho)}.
\end{align}

Integrating by parts as many times as ones wishes one finds that, for arbitrary $N\in\mathbb{N}$, there exist finite constants $C_{N}$ such that
\begin{align}
|\Lambda_1(\tau,z)|,\ |\tilde{\Lambda}_1(\tau,z)|\lesssim C_{N}(\tau+z)^{-N}.
\end{align}
With \eqref{eq:f2N} and \eqref{eq:DefR3} this implies the desired estimate in Lemma \ref{LM:scale}.

\section{Proof of Lemma \ref{LM:FR2}}\label{sub:FR2}
As claimed in \eqref{eq:lowerBG} and proven in Appendix \ref{sec:lowerBG}, the function $|G_{a,b}|$ is strictly positive. Hence
$$[G_{a,b}(\rho,\theta,\alpha)-i0]^{-1}=G_{a,b}^{-1}(\rho,\theta,\alpha).$$

Using
\begin{align}
e^{-i\tau \sigma \rho cos\theta} sin\theta=\frac{1}{i\tau \sigma \rho} \partial_{\theta} e^{-i\tau\sigma \rho cos\theta}\label{eq:recipe10}
\end{align}
to integrate by parts one finds that
\begin{align}\label{eq:Fr2t}
F_{R2}= -\frac{1}{i\tau\sigma}[\Phi_1+\Phi_2],
\end{align}
where
\begin{align*}
\Phi_1(\tau):=&  \int_{0}^{2\pi}\int_{0}^{\infty} e^{-i\tau Y_{\sigma}(\rho,0)} G_{a,b}^{-2}(\rho,0,\alpha) \frac{\rho^5}{1+\rho^2} g(0,\alpha)|\hat{V}(\rho)|^2 \chi_{R2}(0,\frac{2\rho}{\zeta},\frac{\rho}{\zeta_0})\ d\rho d\alpha\\
= & C\int_{0}^{\infty} e^{-i\tau \rho [\sqrt{1+\rho^2}-\sigma]} \frac{1}{[\sqrt{1+\rho^2}-a]^2} \frac{\rho^5}{1+\rho^2} |\hat{V}(\rho)|^2[1-\chi(\frac{2\rho}{\zeta})]\chi(\frac{\rho}{\zeta_0})\ d\rho,\\
\Phi_2(\tau):= & \int_{0}^{2\pi}\int_{0}^{\pi}\int_{0}^{\infty} e^{-i\tau Y_{\sigma}(\rho,\theta)}  \frac{\rho^5}{1+\rho^2} |\hat{V}(\rho)|^2 \ \partial_{\theta}[G_{a,b}^{-2}(\rho,\theta,\alpha)g(\theta,\alpha)\chi_{R2}(\frac{\theta}{R},\frac{2\rho}{\zeta},\frac{\rho}{\zeta_0})]\ d\rho d\theta d\alpha
\end{align*} and $\chi_{R2}$ is the cutoff function defined as
$$\chi_{R2}(\frac{\theta}{R},\frac{2\rho}{\zeta},\frac{\rho}{\zeta_0}):=\chi(\frac{6\theta}{5 R}) [1-\chi(\frac{2\rho}{ \zeta })]\chi(\frac{\rho}{ \zeta_0 }),$$ see \eqref{eq:segment}.
In the second equation for $\Phi_1$, $C$ is a constant and the variable $\alpha$ is integrated out.

The dominant contribution to $F_{R2}$ is the one proportional to $\Phi_1$. As noted in \eqref{eq:critical}, the phase $\rho[\sqrt{1+\rho^2}-\sigma]$ has a non-degenerate critical point, $\rho=\zeta$, in the integration region considered here. In order to scrutinize the neighborhood of $\rho=\zeta$ and exploit the smallness of $\zeta$, we introduce a new variable, $r$, by setting
\begin{align}\label{eq:rhor}
r:=\zeta^{-1}(\rho-\zeta).
\end{align}
Then
$$e^{-i\tau Y(\rho,0)}=e^{-i\tau a_0(\zeta)} e^{-is \ a_2(\zeta,\zeta r)\ r^2 },$$
where $s:=\zeta^3 \tau$, and
$a_0$ and $a_2$ appear in the phase
\begin{align}
\rho[\sqrt{1+\rho^2}-\sigma]= a_0(\zeta)+ a_2(\zeta,\zeta r) \zeta^3 r^2,
\end{align}
with $a_0(\zeta)\in \mathbb{R}$ a constant independent of $r$ and $a_{2}$ a smooth real-valued function of $r$. To see that the critical point at $r=0$ is non-degenerate we note that
$$a_2(\zeta,\zeta r)|_{r=0}=\frac{3}{2}\frac{1}{\sqrt{1+\zeta^2}}-\frac{1}{2}\frac{\zeta^2}{(1+\zeta^2)^{\frac{3}{2}}}\geq \frac{1}{\sqrt{1+\zeta^2}}.$$

Concerning the other factors in the integrand appearing in the definition of $\Phi_1$ we note that the function $\frac{\zeta^4}{[\sqrt{1+\rho^2}-a]^2}|_{\rho=\zeta r+\zeta}$ is uniformly smooth in $r$. This is seen by recalling that $\frac{1}{\sqrt{1+\rho^2}-a}=G_{a,b}^{-1}(\rho,0,0)$ and then using \eqref{eq:lowerBG}.

The function $\Phi_{1}$ can now be written in the form
\begin{align}\label{eq:reF1}
\Phi_{1}=C e^{-i\tau a_0(\zeta)} \zeta^2\int_{-\infty}^{\infty} e^{-is \ a_2(\zeta,\zeta r)\ r^2 } H(r)\ dr,
\end{align}
where $C$ is a constant and $H$ is a smooth function of compact support.
Applying the standard stationary phase method we find that
\begin{align}\label{eq:estF1}
|\Phi_1(\tau)|\lesssim \zeta^2  s^{-\frac{1}{2}}= \zeta^{\frac{1}{2}} \tau^{-\frac{1}{2}}\lesssim \tau^{-\frac{1}{2}}.
\end{align}
The details are standard but tedious and are omitted.

Turning to $\Phi_2$, we claim that a rather crude analysis will yield the desired decay
$\propto \tau^{-\frac{1}{2}}$.
To avoid unnecessarily complicated formulae we only consider the case where
\begin{align}
\zeta=\zeta_0=1,\ \text{ hence}\ R=\frac{1}{5\sqrt{2}}.
\end{align}
(For small $\zeta,\ \zeta_0$, we change variables: $\rho=:\zeta r$ and $\theta=:\zeta_0 \beta$. This will lead to the desired estimate.)

We first perform the $\theta$-integral and obtain that
\begin{align}\label{eq:newF2}
\Phi_2(\tau)=\int_{0}^{2\pi}\int_{0}^{\infty} e^{-i\tau \rho[\sqrt{1+\rho}-\sigma]}\tilde{\Phi}_2 (\rho,\alpha,\tau) [1-\chi(2\rho)] \chi(\rho) d\rho d\alpha,
\end{align}
where $\tilde{\Phi}_2$ is given by
$$\tilde{\Phi}_2(\rho,\alpha,\tau):=\int_{0}^{\infty}e^{-i\tau\sigma \rho [1-cos(\theta)]} H(\rho,\theta,\alpha)\chi(6 \sqrt{2}\theta)\ d\theta,$$
and $H(\rho,\theta,\alpha)$ is a uniformly smooth function.

Due to the presence of the cutoff functions, we only need to consider the region
$$\rho\geq \frac{3}{5}\ \text{and}\ \theta\in [0,\frac{1}{6}].$$

The completion of the argument is standard: We observe that $1-cos(\theta)=\frac{1}{2}\theta^2[1+O(\theta^2)]$
in the domain of small $\theta$-values appearing in the integral defining $\tilde{\Phi}_2$ and that $\rho$
can be replaced by a constant, because it is bounded away from zero. These observations enable us to apply standard stationary phase arguments that show that
\begin{align}
|\tilde{\Phi}_2(\tau)|\lesssim \tau^{-\frac{1}{2}}.
\end{align}
Plugging this bound into \eqref{eq:newF2} we obtain that
\begin{align}
|\Phi_2(\tau)|\lesssim \tau^{-\frac{1}{2}}.
\end{align}

With the bound on $\Phi_1$ in \eqref{eq:estF1} and the decomposition of $F_{R2}$ in \eqref{eq:Fr2t} this clearly yields the desired decay estimate on $F_{R2}$.

\def\cprime{$'$} \def\cprime{$'$} \def\cprime{$'$} \def\cprime{$'$}
  \def\cprime{$'$} \def\cprime{$'$}

\end{document}